\documentclass[a4paper,11pt,fleqn,english,draft]{scrartcl}
\usepackage[utf8]{inputenc}
\usepackage[english]{babel}
\usepackage[T1]{fontenc}
\usepackage{enumerate}
\usepackage{dsfont}
\usepackage{mathrsfs}
\usepackage{amsmath,amsthm,amsfonts,amssymb}
\usepackage{amsbsy}
\usepackage{mathtools}

\usepackage{hyperref}
\hypersetup{
  bookmarksnumbered=true,
  bookmarksopen,
  pdfauthor=Martin\ Oelker,
  final=true
}


\setlength{\mathindent}{1cm}


\addtokomafont{title}{\normalfont\bfseries}
\addtokomafont{author}{\normalfont}
\addtokomafont{date}{\itshape}
\addtokomafont{section}{\normalfont\bfseries}
\addtokomafont{subsection}{\normalfont\bfseries}
\addtokomafont{subsubsection}{\normalfont\bfseries}
\addtokomafont{paragraph}{\normalfont\bfseries}
\addtokomafont{sectionentry}{\normalfont}



\if@mathematic
\def\vec#1{\ensuremath{\mathchoice
		{\mbox{\boldmath$\displaystyle\mathbf{#1}$}}
		{\mbox{\boldmath$\textstyle\mathbf{#1}$}}
		{\mbox{\boldmath$\scriptstyle\mathbf{#1}$}}
		{\mbox{\boldmath$\scriptscriptstyle\mathbf{#1}$}}}}
\else
\def\vec#1{\ensuremath{\mathchoice
		{\mbox{\boldmath$\displaystyle#1$}}
		{\mbox{\boldmath$\textstyle#1$}}
		{\mbox{\boldmath$\scriptstyle#1$}}
		{\mbox{\boldmath$\scriptscriptstyle#1$}}}}
\fi

\renewcommand{\H}{\mathcal{H}} 							
\newcommand{\N}{\mathbb{N}} 								
\newcommand{\R}{\mathbb{R}} 								
\newcommand{\C}{\mathbb{C}} 								
\newcommand{\dd}{\,\mathrm{d}^3} 						
\newcommand{\D}{\mathcal{D}} 							
\newcommand{\norm}[1]{\left\|#1\right\|}					
\newcommand{\Abs}[1]{\left|#1\right|}						
\newcommand{\F}{\mathcal{F}}								
\DeclareMathOperator{\Ker}{Ker} 

\newcommand{\HDC}{H_{\mathrm{DC}}}

\newcommand{\Hrel}{\H^{\mathrm{rel}}}
\newcommand{\Hcom}{\H^{\mathrm{com}}}
\newcommand{\Mminp}{\mathbf{M}^- \cdot \hat{\vec{p}}}
\newcommand{\PMpl}{\hat{\vec{P}} \cdot \mathbf{M}^+}


\theoremstyle{plain}
\newtheorem{definition}{Definition}
\newtheorem{proposition}{Proposition}
\newtheorem{theorem}{Theorem}

\newtheorem{lemma}{Lemma}
\theoremstyle{definition}
\newtheorem{remark}{Remark}


\title{Distinguished self-adjoint extension\\ of the two-body Dirac operator\\ with Coulomb interaction%
\thanks{This work was partially funded by the Elite Network of Bavaria through the Junior Research Group “Interaction between Light and Matter”. M.O.\ gratefully acknowledges financial support from the German Academic Scholarship Foundation. }}

\author{Dirk-André Deckert
	\thanks{Math. Inst. Universität München, 80333 München, Germany, \href{mailto:deckert@math.lmu.de}{deckert@math.lmu.de}}
	\and 
	Martin Oelker
	\thanks{Math. Inst. Universität München, 80333 München, Germany, \href{mailto:oelker@math.lmu.de}{oelker@math.lmu.de}}
}

\date{\today}

\begin{document}
\maketitle

\begin{abstract}
\noindent\textbf{Abstract} We study the two-body Dirac operator in a bounded external field and
for a class of unbounded pair-interaction potentials, both repulsive and
attractive, including the Coulomb type. Provided the coupling constant of the
pair-interaction fulfills a certain bound, we prove existence of a self-adjoint
extension of this operator which is uniquely distinguished by means of finite
potential energy. In the case of Coulomb interaction, we require as a technical assumption the coupling constant to be bounded by $2/\pi$. \\

\noindent\textbf{Keywords} Many-body Dirac Hamiltonian; Coulomb Interaction Potential; Unique Self-Adjoint Extension; Singular Integral Operators
\end{abstract}

\section{Introduction}

Self-adjoint extensions of one-particle Dirac operators in an external field
have received much attention in the past and are still subject of very active research (see, e.g., \cite{wuest_dis-sa-ext},
\cite{nenciu_sa_dirac}, \cite{klaus_wuest_sa-ext}, \cite{esteban2007self},
\cite{arrizabalaga2013}, \cite{gallonemichel}, and \cite{gallone} for a great survey). As these operators are not bounded below, no canonical
extension in the sense of the Friedrichs extension for bounded below operators
is available.  Therefore, one main objective in the literature is
the determination of distinguished self-adjoint extensions and their
classification. 

For more than one particle, one usually considers the so-called
Brown-Raven\-hall operators, i.e., many-body Dirac operators projected to the
(carefully chosen) positive energy subspace. These operators are bounded below,
and thus, the Friedrichs extension can be exploited (see, e.g.,
\cite{tix1997self}, \cite{evansperrysiedentop_specreloneelecatoms},
\cite{morozovvugalter_stabilityatomsbrownravenhall}). In these settings,
mainly spectral questions are under investigation (see \cite{matte2010spectral} and references therein). 

In this article, we study a self-adjoint extension of the unprojected two-body
Dirac operator for unbounded potentials such as the Coulomb type.
Such operators are generically unbounded below and therefore often dismissed as
unphysical. Nonetheless, they are frequently employed in numerical studies in
relativistic quantum chemistry---see \cite{liu_cat} for an overview and
concerning spectral properties, e.g., \cite{jauregui_upperbounds},
\cite{liu_cat}, \cite{pestka_application}, \cite{pestka_complex},
\cite{watanabe2007effect}, \cite{watanabe2010effect}. It is therefore desirable
to understand these unprojected two-body Dirac operators better also from a mathematical point of view.  This was recently emphasized by
Dereziński in \cite{derezinski_openproblems}. Furthermore, the presented
technique may be a stepping stone to give insight into the domain properties of
the generator of the time evolution of the Dirac sea
\cite{dd_externalfield,deckert_external_2016}. 

To our best knowledge, there is only one publication so far that touches upon
this topic \cite{okaji_dere}. In their work, the two-body Dirac
operator for Coulomb interaction potential and Coulomb external potential is
studied.  Unfortunately, the given proof of essential self-adjointness
comprises a gap, which is briefly discussed in Appendix~\ref{section_exproof} below. 

Here, we prove existence of a self-adjoint extension for a class of two-body Dirac operators. Besides existence, it is also desirable to provide a criterion that distinguishes the extension uniquely and is physically meaningful at the same time. We adopt the criterion of finite potential energy. This criterion is well-known from the one-particle case (see
\cite{wuest_dis-sa-ext}). 

The main difficulty we face is of technical nature. The free two-body operator
exhibits a non-trivial nullspace in the coordinate of the interaction. This
nullspace is hidden in the standard representation of the Dirac matrices. Thus,
an unbounded interaction potential cannot be relatively bounded by the free
operator and a lot of standard perturbation techniques based on
such a bound are not applicable. Instead we use a Frobenius-Schur factorization
based on this nullspace and its orthogonal complement and infer
self-adjointness of the full two-body Dirac operator from self-adjointness of
the Schur complement.  The quadratic form techniques involved in this require the use of the theory of Calder\'{o}n-Zygmund singular
integrals. We want to remark that a similar method, although in a different
context, has also been used in \cite{loss_esteban_schur}. 

\section{Main results and strategy of proof}

The central object of study of this article is $\HDC$, the two-body Dirac operator with interaction potential $V_{\mathrm{int}}$ in the presence of an external potential $V_{\mathrm{ext}}$. The relevant Hilbert space is the twofold tensor product of the Hilbert space of $\C^{4}$-valued, square integrable functions, i.e.,
\begin{align}\label{eq_def_twoparticleHilbertspace}
\H_2 := L^2 (\R^3, \dd x) \otimes \C^4 \otimes L^2 (\R^3, \dd y) \otimes \C^4 .
\end{align} 
On $\H_2$, the symbolic expression of $\HDC$ takes the form
\begin{align}\label{eq_HDCdef}
\HDC 
:= 
H_0 + V_{\mathrm{ext}} + V_{\mathrm{int}},
\end{align}
where the free two-body Dirac operator $H_0$ is given by
\begin{align}\label{eq:freedirac}
H_0 := (- i\vec{\alpha} \cdot \nabla_{\vec{x}} + \beta m_1) \otimes \mathrm{id} 
+ \mathrm{id} \otimes (- i\vec{\alpha} \cdot \nabla_{\vec{y}} + \beta m_2) 
\end{align}
in the units in which both the speed of light $c$ and Planck's constant $\hbar$ equal one. Moreover, $m_1, m_2 \geq 0$ denote the masses of the two
particles and $\nabla_{\vec{x}}$, $\partial/\partial x_i$, etc.,
denote the gradient with respect to $\vec x=(x_1,x_2,x_3)^{\top}\in\R^3$ and
partial derivative with respect to $x_i$, $i=1,2,3$, respectively. We
denote by $\mathrm{id}_X$ the identity on the space $X$---however,
wherever unambiguous, we usually drop the subscript $X$. As it is helpful to
distinguish the identity on $L^2$ from the identity matrix on $\C^{n}$, we
denote the latter by $\mathbf{1}_n$. 

We assume $V_{\mathrm{ext}}$ to be bounded and symmetric, i.e., for all $f,g
\in \H_2$ we have 
\begin{align}\label{eq_definitionV_ext}
\left|\left\langle 
g, V_{\mathrm{ext}} f
\right\rangle\right|
=
\left|\left\langle
V_{\mathrm{ext}} g, f
\right\rangle\right|
< \infty .
\end{align}
The precise form of $V_{\mathrm{ext}}$ plays no role concerning
self-adjointness, we have, however, a finite nucleus model of Coulomb type with
regularized singularity in mind. 

As regards the interaction potential, we define $V_{\mathrm{int}}$  for almost
all $\vec{x}, \vec{y} \in \R^3$ as multiplication operator, i.e.,  
\begin{align}\label{eq_definitionV_int}
\left( V_{\mathrm{int}} f \right) (\vec{x}, \vec{y}) 
:=
\mathbf{1}_{16} \, \frac{\gamma}{| \vec{x} - \vec{y} |^\kappa} \, f (\vec{x}, \vec{y}) ,
\end{align}
where $\gamma \in \R$ is the coupling constant, $0< \kappa \leq 1$ controls the strength of the singularity, and where $f \in \H_2$ fulfills $V_{\mathrm{int}} f \in \H_2 $. Usually, we drop the identity matrix $\mathbf{1}_{16}$. We note that Coulomb interaction
potentials, i.e., $\kappa=1$, are included in class~\eqref{eq_definitionV_int}. 

The canonical domain of $H_0$ as well as of $\HDC$ is 
\begin{align}\label{eq:D0}
\D_0 := H^1 (\R^3, \dd x) \otimes \C^4 \otimes H^1 (\R^3, \dd y) \otimes \C^4,
\end{align}
where $H^k$ denotes the $k$-th Sobolev space over $L^2$. 

Furthermore, the Hermitian matrices $\vec{\alpha} = (\alpha_1, \alpha_2, \alpha_3)$ and $\beta$ are the so-called Dirac matrices in standard
representation 
\begin{align}\label{eq_diracmatrices}
\vec{\alpha} = 
\left(
\begin{array}{cc}
0 & \vec{\sigma} \\ 
\vec{\sigma} & 0
\end{array}
\right) 
\, , \quad 
\beta = 
\left( 
\begin{array}{cc}
\mathbf{1}_2 & 0 \\
0 & -\mathbf{1}_2
\end{array}
\right) , 
\end{align} 
where $\vec{\sigma}=(\sigma_1, \sigma_2, \sigma_3)$ are the Pauli matrices 
\begin{align}
\sigma_1 = 
\left( 
\begin{array}{cc}
0 & \,1 \\ 
1 & \,0
\end{array}
\right)
\, , \quad 
\sigma_2 = 
\left( 
\begin{array}{cc}
0 & -i \\ 
i & 0
\end{array}
\right)
\, , \quad 
\sigma_3 = 
\left( 
\begin{array}{cc}
1 & 0 \\ 
0 & -1
\end{array}
\right). 
\end{align}
The Dirac matrices obey the following anticommutation relations 
\begin{align} \label{clifford_1}
&\alpha_k \alpha_l + \alpha_l\alpha_k = 2 \delta_{kl} \mathbf{1}_4\;,\quad k, l = 1, 2, 3, \nonumber \\
&\alpha_i\beta + \beta\alpha_i = 0 \;,\quad i=1,2,3, \\
&\beta^2 = \mathbf{1}_4 , \nonumber
\end{align}
where $\delta_{kl} = 1$ if $k = l$, and zero else. \\

Our results about self-adjointness are collected in the following Theorem~\ref{thm:main1}. As this paper is structured roughly according to its claims \ref{thm:main1_a})--\ref{thm:main1_c}), we give the proof of each of these claims in the corresponding section. See the article outline below. 

\begin{theorem}\label{thm:main1}
Let $0 < \kappa \leq 1$ and $|\gamma| M_{\kappa/2}^2 < 1$, where $M_{\kappa} > 0$ is given by
\begin{align}\label{eq_def_Mkappa}
M_{\kappa} 
:= 
2^{- \kappa}  \frac{\Gamma \left( \frac{3}{4} - \frac{\kappa}{2} \right)}{\Gamma \left( \frac{3}{4} + \frac{\kappa}{2} \right)} . 
\end{align}
\begin{enumerate}[a)]
	\item \label{thm:main1_a}
	$H_0$ is self-adjoint on its natural domain $\D (H_0) = \{ f \in \H_2 |\ H_0 f \in \H_2 \}$.
	
	\item \label{thm:main1_b}
	There exists a self-adjoint extension of $\HDC$, denoted by $\tilde{H}_{\mathrm{DC}}$, with domain $\D(\tilde{H}_{\mathrm{DC}})$, given in Eq.~\eqref{eq_def_domain_Htilde_F}. 
	
	\item \label{thm:main1_c}
	$\tilde{H}_{\mathrm{DC}}$ is the unique self-adjoint extension that fulfills for all 
	$f \in \D (\tilde{H}_{\mathrm{DC}})$ the condition 
	\begin{align}
	\left| E_{\mathrm{pot}} [f] \right| := \left| \left\langle f, \left( V_{\mathrm{ext}} + V_{\mathrm{int}} \right) f \right\rangle \right| < \infty
	\end{align}
	and that can be split into a relative and a center-of-mass part, as it is made precise in Theorem~\ref{thm_distinguishedextension}, in particular, Eq.~\eqref{eq_def_Htilde}. 
\end{enumerate}
\end{theorem}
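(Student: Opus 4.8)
\emph{Proof strategy.}

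For part~\ref{thm:main1_a}) I would Fourier-transform in both position variables. Then $H_0$ is carried into multiplication by the Hermitian $16\times16$ matrix symbol $\vec{\alpha}\cdot\vec p\otimes\mathbf{1}_4+\beta m_1\otimes\mathbf{1}_4+\mathbf{1}_4\otimes\vec{\alpha}\cdot\vec q+\mathbf{1}_4\otimes\beta m_2$ in $(\vec p,\vec q)\in\R^3\times\R^3$. A maximal multiplication operator by a measurable self-adjoint matrix-valued function is self-adjoint, which already gives self-adjointness of $H_0$ on its natural domain; since the eigenvalues of the symbol are $\pm\sqrt{|\vec p|^2+m_1^2}\pm\sqrt{|\vec q|^2+m_2^2}$ and thus grow at most linearly, one gets $\D_0\subseteq\D(H_0)$ and, $C_c^\infty$-functions forming a core, that $\D_0$ is a core as well.

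For parts~\ref{thm:main1_b}) and~\ref{thm:main1_c}) the plan is the Frobenius--Schur route announced in the introduction. First I would pass to centre-of-mass and relative variables $\vec R$ and $\vec r=\vec x-\vec y$; then $\HDC$ is unitarily equivalent to an operator $\HF$ on $\Hcom\otimes\Hrel$ of the form $\HF=\PMpl+\Mminp+\beta_1m_1+\beta_2m_2+\tilde V_{\mathrm{ext}}+\tilde V_{\mathrm{int}}$, where the $\beta_i$ act on the $i$-th spinor factor, $\mathbf{M}^{\pm}$ are the relevant fixed matrix combinations of the single-particle $\vec{\alpha}$-matrices, $\hat{\vec p}$ and $\hat{\vec P}$ are the momenta conjugate to $\vec r$ and $\vec R$, and $\tilde V_{\mathrm{int}}$ is still a multiplication operator acting only in the relative variable. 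The decisive point---the ``hidden nullspace'' from the introduction---is that for each fixed direction the matrix $\Mminp$ has an $8$-dimensional ($=$ half-dimensional) kernel. Let $P$ be the corresponding orthogonal projection; concretely, up to the choice of conventions, $P=\tfrac12\big(\mathbf{1}+(\vec{\alpha}_1\cdot\mathcal R)(\vec{\alpha}_2\cdot\mathcal R)\big)$ with $\mathcal R=(R_1,R_2,R_3)$ the Riesz transforms in $\vec r$, so $P$ is a Calder\'on--Zygmund-type operator. It commutes with $\Mminp$, which therefore acts entirely within the range of $1-P$, where its symbol has modulus bounded below by a constant times $|\hat{\vec p}|$.

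Now write $\HF$ in $2\times2$ block form with respect to $P\oplus(1-P)$ and, since $\PMpl$ is unbounded only in the centre-of-mass variable, carry out the analysis fibrewise over the spectrum of $\hat{\vec P}$, where $\vec P\cdot\mathbf{M}^{+}$ is just a bounded matrix. On the range of $1-P$ the term $\Mminp$ dominates: by Herbst's inequality $\tilde V_{\mathrm{int}}$ is there form-bounded by $(-\Delta_{\vec r})^{\kappa/2}\le(-\Delta_{\vec r})^{1/2}+\mathrm{const}$ for $\kappa\le1$, so the lower-right block is self-adjoint with a resolvent $R$ satisfying $|R|\lesssim(-\Delta_{\vec r}+1)^{-1/2}$. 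The only block-off-diagonal contribution singular in $\vec r$ is $\tilde V_{\mathrm{int}}$ (the unbounded-in-$\vec r$ part $\Mminp$ being block-diagonal, and the remaining off-diagonal entries stemming from the bounded matrices $\beta_im_i$, $\vec P\cdot\mathbf{M}^{+}$, $\tilde V_{\mathrm{ext}}$). Hence the Schur complement $S=P\HF P-P\HF(1-P)\,R\,(1-P)\HF P$ on $P\H_2$ is a symmetric operator whose quadratic form is the manifestly self-adjoint form of $P(\PMpl+\beta_1m_1+\beta_2m_2+\tilde V_{\mathrm{ext}})P$, plus the singular term $\langle Pf,\tilde V_{\mathrm{int}}Pf\rangle$, plus the correction $-\norm{R^{1/2}(1-P)\HF Pf}^2$; organising the singular contributions one expects a bound of the schematic type $|\gamma|\,\norm{|\vec r|^{-\kappa/2}\,(\text{bounded CZ factor})\,(-\Delta_{\vec r})^{-\kappa/2}\,(\text{bounded CZ factor})\,|\vec r|^{-\kappa/2}}$, whose operator norm equals $M_{\kappa/2}^2$, the square of the sharp constant in the Herbst inequality $\norm{|\vec r|^{-\kappa/2}g}\le M_{\kappa/2}\norm{(-\Delta_{\vec r})^{\kappa/4}g}$ (see~\eqref{eq_def_Mkappa}). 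Under the hypothesis $|\gamma|M_{\kappa/2}^2<1$ this makes the form of $S$ closed (by a KLMN/Friedrichs-type argument) and $S$ a well-defined self-adjoint operator, uniformly over the fibres. The Frobenius--Schur lemma---used in a related spirit in~\cite{loss_esteban_schur}---then gives self-adjointness of $\HF$ on the domain assembled from $\D(S)$ and the graph of $R(1-P)\HF P$, which is the domain~\eqref{eq_def_domain_Htilde_F}; undoing the unitary equivalences yields $\tilde H_{\mathrm{DC}}$ on~\eqref{eq_def_Htilde}, and since the whole construction respects the tensor structure and is fibred over $\hat{\vec P}$, the splitting into a relative and a centre-of-mass part claimed in Theorem~\ref{thm_distinguishedextension} is automatic.

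For uniqueness, let $A$ be any self-adjoint extension of $\HDC$ that admits the relative/centre-of-mass splitting and satisfies $|E_{\mathrm{pot}}[f]|<\infty$ for all $f\in\D(A)$. Then $\langle f,H_0f\rangle=\langle f,Af\rangle-E_{\mathrm{pot}}[f]$ is finite on $\D(A)$, which forces every such $f$ into the form domain of $|H_0|$; through the block decomposition this pins the $(1-P)$-component of $f$ into the natural Sobolev class and, via the Frobenius--Schur reconstruction, its $P$-component into the form domain of $S$. Hence $A\subseteq\tilde H_{\mathrm{DC}}$, and since both operators are self-adjoint, $A=\tilde H_{\mathrm{DC}}$. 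I expect the main obstacle to be the Schur-complement estimate: realising $P\tilde V_{\mathrm{int}}P$ and the correction term as Calder\'on--Zygmund/fractional-integral operators over the direction-dependent nullspace $P$ and extracting the \emph{sharp} constant $M_{\kappa/2}^2$ (this is exactly where $\kappa\le1$, and for Coulomb, $\kappa=1$ with $M_{1/2}^2=\pi/2$, the bound $|\gamma|<2/\pi$, enters), while keeping the non-commuting, centre-of-mass-unbounded term $\PMpl$ under control---which is why the block analysis must be organised fibrewise in $\hat{\vec P}$.
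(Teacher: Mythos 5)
Your proposal follows the same skeleton as the paper: relative/centre-of-mass split, orthogonal projections onto the kernel of $\Mminp$ and its complement, a $2\times2$ block decomposition, a Schur complement on the nullspace block, the Herbst inequality with constant $M_{\kappa/2}^2$, and a KLMN argument followed by direct-integral fibering over $\hat{\vec P}$. So the route is the right one. But there are three genuine gaps in the way you have organised it.

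First, you cannot take $A^{-1}$ on the range of $1-P$ as written: $\Mminp$ restricted to $\Hrel_+$ has $0$ in its (continuous) spectrum, since $|\mathbf{M}^-\cdot\vec p|=2|\vec p|P_+(\vec p)$ vanishes as $\vec p\to 0$. Saying its modulus is ``bounded below by a constant times $|\hat{\vec p}|$'' is exactly the problem, not the cure. The paper repairs this by inserting the bounded symmetric matrix $B=2\beta\otimes\beta$, which anticommutes with $\vec\alpha\cdot\vec p$ componentwise and gives $|\Mminp+BP_+|=2(\hat p^2+1)^{1/2}P_+$, so $A_0=\Mminp+BP_+$ is boundedly invertible; only then is the Herbst bound on $V^{1/2}P_+A^{-1}P_+V^{1/2}$ as clean as you sketch, and even so the final constant is $|\gamma|M_{\kappa/2}^2/(2-|\gamma|M_{\kappa/2}^2)$, obtained via a resolvent identity, not $|\gamma|M_{\kappa/2}^2$ directly. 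Keeping the mass term $\beta_1 m_1+\beta_2 m_2$ in the block does not serve the same purpose (it neither commutes nor anticommutes with $\Mminp$, and $m_1=m_2=0$ must be allowed). Second, the KLMN sum defining $S_F$ only has the asserted domain $\D(V^{1/2})\cap\D(S^*)$ if one proves that $\D(V)\cap\Hrel_-$ is a form core of $\mathfrak t[f,g]=\langle V^{1/2}P_-f,V^{1/2}P_-g\rangle$, i.e., that the closure of the form of $P_-VP_-$ is $\mathfrak t$; this is Theorem~\ref{thm_coreV1/2} and is the real technical heart of the paper. Its proof requires computing the Calder\'on--Zygmund kernel of $\tau$ explicitly, the commutator estimate $\|[V^{1/2},\tau]\chi_nf\|\to 0$, and the Hardy--Littlewood--Sobolev inequality. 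Your proposal notices that $P$ is CZ-type and flags ``the Schur-complement estimate'' as the main obstacle, but the Schur-complement bound is in fact the easy part once Herbst is available; the form-core commutator argument is the missing ingredient and is nowhere in your outline. Third, in the uniqueness step the assertion that finiteness of $\langle f,H_0f\rangle$ ``pins the $(1-P)$-component into the natural Sobolev class and, via the Frobenius--Schur reconstruction, its $P$-component into the form domain of $S$'' is not correct: the form domain of $|\Mminp|$ contains all of $\Hrel_-$ (the operator is identically zero there), so kinetic-energy finiteness places no constraint whatsoever on the $P$-component. The paper instead must argue that $\langle f,Vf\rangle<\infty$ forces $\langle f_-,Vf_-\rangle<\infty$, which is non-trivial because of possible cancellations between the $\Hrel_+$ and $\Hrel_-$ parts; this is handled via Gaussian cut-offs and, once more, Theorem~\ref{thm_coreV1/2}.
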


\begin{remark}\label{rem_aftermainthm} 
We want to remark:
\begin{enumerate}[a)]
	\item Although part a) of Theorem~\ref{thm:main1} comes as no surprise, it
	turns out that, contrary to what one might have expected, the domain of
	self-adjointness $\D (H_0)$ is {\em not} contained in the tensor product of
	the form domains of two one-particle Dirac Hamiltonians. Heuristically, the reason for this phenomenon is that two particles described by $H_0$ can both have infinite kinetic energy as long as these energies cancel. This will be made clear in Remark~\ref{rem_unsualstructure}. 
	
	\item For $\kappa = 1$, i.e., the case of Coulomb interaction, the condition on the coupling constant is $|\gamma| < 2/\pi$. The smaller $\kappa$ is chosen, the larger values of $|\gamma|$ are allowed. 
	
	\item We have no reason to believe that the restriction on the coupling constant $\gamma$ is optimal in the sense that larger values of $|\gamma |$ would not allow for self-adjoint extensions anymore. We believe it is mainly due to our method of proof. Techniques from the theory of self-adjoint extensions of the one-particle Dirac operator with external potential may be applied in order to obtain larger values of $|\gamma |$. This is however not our focus here. 
\end{enumerate}
\end{remark}

\paragraph{Article outline and strategy of proof}

After a change of coordinates to relative and center-of-mass coordinates, introduced in \eqref{eq_transformationtorelcomcoord} of Section~\ref{section_coordtransform}, the two-body Dirac-Coulomb Hamiltonian $\HDC$ splits into a relative and a center-of-mass Hamiltonian; cf.\ \eqref{eq_HDCdef_2}, \eqref{eq_def_domain_comHamiltonian}, \eqref{eq_defHrel_2} in Section~\ref{section_relevantdefs}. The coefficient matrices of the total momentum operator and the relative momentum operator, the matrices $\mathbf{M}^+$ and $\mathbf{M}^-$ defined in \eqref{eq_def_Mpl_Mmin} of Section~\ref{section_projections}, no longer obey the anticommutation relations \eqref{clifford_1}. They exhibit a non-trivial nullspace structure which carries over to the free relative and center-of-mass Hamiltonians and does not permit a straightforward application of Kato-Rellich perturbation theory to probe self-adjointness (see Lemma~\ref{lemma_Vnotrelbounded}). For their study, we are led to introduce the projection $P_-$ which projects on the nullspace of the free relative Hamiltonian, as well as the projection on the orthogonal complement, $P_+$. In Section~\ref{section_definitionpreparations}, we introduce all these basic objects and collect their fundamental properties relevant to our study. 

Section~\ref{section_freetwobody} provides all the needed properties of the free two-body Dirac Hamiltonian, in particular, the proof of claim \ref{thm:main1_a}) of our main result Theorem~\ref{thm:main1}. 

In order to study $\HDC$, the orthogonal projections $P_+$ and $P_-$ are employed to split the Hilbert space of the relative coordinate into two orthogonal subspaces, i.e., $\Hrel = \Hrel_+ \oplus \Hrel_-$. This splitting is not left invariant by the Coulomb interaction. Therefore, one is naturally led to consider the parts of the relative Hamiltonian $H^{\mathrm{rel}}$ in the various subspaces, i.e., we obtain a $2 \times 2$-matrix representation of $H^{\mathrm{rel}}$ whose entries are unbounded operators and given by $P_\pm H^{\mathrm{rel}} P_\pm$ on the diagonal and $P_\pm H^{\mathrm{rel}} P_\mp$ on the off-diagonal, respectively; see \eqref{eq_defHrel} in Section~\ref{section_relevantdefs}. 

In order to construct a self-adjoint extension of $\HDC$, which is the content of Section~\ref{section_self-adjointextension}, we will use that---under some conditions---self-adjointness of matrix operators is encoded in the so-called Schur complement, denoted by $S$: First, we will construct a self-adjoint extension of $S$, namely $S_F$, cf.\ Lemma~\ref{lemma_schursym} of Section~\ref{section_self-adjointextension}. This will then allow us to define a self-adjoint extension $H^{\mathrm{rel}}_F$ of $H^{\mathrm{rel}}$, see Theorem~\ref{thm_self-adjointextensionHrel} of Section~\ref{section_self-adjointextension}, which finally paves the way to the self-adjoint extension of $\HDC$, denoted by $\tilde{H}_{\mathrm{DC}}$, in Theorem~\ref{thm:sa} of Section~\ref{section_self-adjointextension}. This proves claim~\ref{thm:main1_b}) of Theorem~\ref{thm:main1}. We want to remark that a similar strategy, although in a different setting, has been employed in \cite{loss_esteban_schur}. 

Furthermore, we prove that the interaction potential is not relatively bounded by the free relative Hamiltonian (Lemma~\ref{lemma_Vnotrelbounded}). 

As $S_F$ is given as a form sum, it will turn out to be very convenient to compute the closure of the form associated with $S$ explicitly. At the heart of this computation lies Theorem~\ref{thm_coreV1/2} whose proof is quite lengthy and therefore given in the separate Section~\ref{section_prooftechnicalstuff}. It uses the Calderón-Zygmund theory of singular integrals. 

That $\tilde{H}_{\mathrm{DC}}$ is a distinguished self-adjoint extension of
$\HDC$, which is shown in the last Section~\ref{section_criterionextension},
holds in the following sense: Let $\tilde{H}$ be any self-adjoint extension of
$\HDC$ that can be split into a relative and a center-of-mass part. Then, $f
\in \D (\tilde{H})$ has finite potential energy if and only if $\tilde{H} =
\tilde{H}_{\mathrm{DC}}$. This is satisfying in two respects. First, it is a physically sensible criterion, and second, the criterion 
singles out $\tilde{H}_{\mathrm{DC}}$ uniquely. This proves
claim~\ref{thm:main1_c}) of Theorem~\ref{thm:main1}. 

The Appendix~\ref{section_appendix_okaji} contains a comment on \cite{okaji_dere}, the only publication of which we are aware that also treats self-adjointness of $\HDC$. 

In Appendix~\ref{section_matrixoperators}, all needed tools to study matrix operators with unbounded entries are collected. Most notably, we introduce the Frobenius-Schur factorization of a matrix operator, see \eqref{eq_generalmatrixoperator} and Theorem~\ref{thm_closability}.

\section{Proofs}

\subsection{Definitions and preliminary results}\label{section_definitionpreparations}

\subsubsection{Coordinate transformation, Hilbert spaces, Fourier transform} \label{section_coordtransform}
It will be convenient to introduce a change of the coordinates $\vec{x}, \vec{y} \in \R^3$ by means of 
\begin{align}
\vec{r} := \vec{x}-\vec{y}\, , \quad  \vec{R} := \frac{1}{2} (\vec{x} + \vec{y}). 
\end{align}
Here, $\vec r \in \R^3$ is the relative coordinate (abbreviated by rel) and $\vec{R} \in \R^3$ the center-of-mass coordinate (com). 
Furthermore, we define with 
$\vec{X} := (\vec{x}, \vec{y})$ 
and 
$\vec{Y} := (\vec{r}, \vec{R})$ 
the transformation matrix 
$\mathrm{U} \colon \R^3 \times \R^3 \rightarrow \R^3 \times \R^3$ 
as 
\begin{align} 
\mathrm{U}\vec{X} := 
\left(
\begin{array}{cc}
1 & - 1 \\
\frac{1}{2} & \frac{1}{2} 
\end{array}
\right) 
\left(
\begin{array}{c}
\vec{x} \\
\vec{y}
\end{array}
\right) 
= 
\left(
\begin{array}{c}
\vec{r} \\
\vec{R}
\end{array}
\right) 
=
\vec{Y} .
\end{align}
Since $\det \mathrm{U} = 1$, it induces a unitary transformation on the two-particle Hilbert space 
\begin{align}\label{eq_transformationtorelcomcoord}
U 
\colon 
\H_2
\rightarrow 
L^2 (\R^3, \dd R) \otimes \C^{16} \otimes L^2 (\R^3, \dd r)
\end{align}
given by 
$(U f) (\vec{Y}) := f (\vec{X}) = f (\mathrm{U}^{-1} \vec{Y})$. We define the Hilbert spaces 
\begin{align}\label{eq_def_Hrel_Hcom}
\Hrel := \C^{16} \otimes L^2 (\R^3, \dd r)
\, , \quad
\Hcom := L^2 (\R^3, \dd R) \otimes \C^{16} .
\end{align}
Norm and scalar product on all of the used Hilbert spaces are denoted by $\| \cdot \|$ and $\langle \cdot , \cdot \rangle $, respectively. In some cases, possible confusion is avoided by suitable subscripts. The scalar product on $\Hrel$ of $f,g \in \Hrel$ is defined as 
\begin{align}
\langle f, g \rangle
:=
\int_{\R^3} f^\dag (\vec{r}) \, g (\vec{r}) \dd r 
= 
\int_{\R^3} \sum_{k=1}^{16} \overline{f^k (\vec{r})} \, g^k (\vec{r}) \dd r 
\end{align}
and in analogy to that in the other Hilbert spaces. $f^k$ is the $k$-th component of the $\C^{16}$-spinor $f$. $\overline{z}$ denotes complex conjugation of $z \in \C$. Instead of $f (\vec{r})^\dag f (\vec{r})$, we will often just write $\Abs{f (\vec{r})}^2$. When finite, $\| \cdot \|$ also denotes the norm of a linear operator. The context will always distinguish it from the $L^p$-norm. The operator closure of an arbitrary, but closable linear operator $A$ is denoted by $\overline{A}$. No confusion with complex conjugation will arise. 

We define, as it is usually done for square-integrable, $\C$-valued functions, the following Fourier transform
on $L^2 (\R^3)$ for almost all $\vec{p} \in \R^3$ by 
\begin{align}
\hat{f} (\vec{p}) :=
(\F f) (\vec{p}) := 
\lim_{M \rightarrow \infty} \int_{|\vec{r}| \leq M} e^{- 2 \pi i\vec{r} \cdot \vec{p}} f (\vec{r}) \dd r
\end{align}
where the limit is taken in the $L^2$-sense. In some cases, the notation $\F_{\vec{R}}$ and $\F_{\vec{r}}$ clarifies, whether the Fourier transform is taken with respect to the center-of-mass coordinate or the relative coordinate. This definition carries over to $L^2 (\R^3) \otimes \C^n$ by applying the transformation component-wise. We introduce the notation for the relative momentum operator $\hat{\vec{p}}= - i \nabla_{\vec{r}}$ and the total momentum operator $\hat{\vec{P}}= - i \nabla_{\vec{R}}$ and remark with respect to this notation that it denotes both, the differential operators $- i \nabla_{\vec{R}}$ and $- i \nabla_{\vec{r}}$, respectively, when acting on $f$ as well as multiplication with $\vec{P} \in \R^3$ and $\vec{p} \in \R^3$, respectively, when acting on $\hat{f}$ in Fourier space. With $p^2$ we mean $\Abs{\vec{p}}^2$ for any $\vec{p} \in \R^3$. With $\hat{p}^2$, however, we denote the operator product $\hat{\vec{p}}^2$. Moreover, operators that are composed of $\hat{\vec{P}}$ or $\hat{\vec{p}}$ are defined with help of the Fourier transform. E.g., the operator $(\hat{p}^2 + 1)^{\kappa/2}$ that will appear in Section~\ref{section_self-adjointextension} is defined in Fourier space as multiplication by $(p^2 + 1)^{\kappa/2}$ with $\vec{p} \in \R^3$. 

\subsubsection{The projections $P_+$ and $P_-$} \label{section_projections}
In this section, the projections $P_+$ and $P_-$ are introduced. They are central to the study of $\HDC$ insofar as they reveal the technical obstacle that does not allow to use standard techniques to study $\HDC$. This is seen explicitly in Lemma~\ref{lemma_Vnotrelbounded} which opens Section~\ref{section_self-adjointextension}. \\

The reader should be warned that $P_+$ and $P_-$ are \textit{not} the spectral projections of $H_0$, i.e., the operators that project on the positive and negative part of the spectrum of $H_0$. What $P_+$ and $P_-$ project onto, is the content of Proposition~\ref{prop_justificationPplusmin}. \\

We define the $16 \times 16$-matrices 
$\mathbf{M}^{\pm} = \left( \mathrm{M}^{\pm}_1, \mathrm{M}^{\pm}_2, \mathrm{M}^{\pm}_3 \right)$
by 
\begin{align}\label{eq_def_Mpl_Mmin}
\mathbf{M}^+ := \frac12 (\vec{\alpha} \otimes \mathbf{1}_4 + \mathbf{1}_4 \otimes \vec{\alpha})
\, , \quad 
\mathbf{M}^{-} := \vec{\alpha} \otimes \mathbf{1}_4 - \mathbf{1}_4 \otimes \vec{\alpha} .
\end{align}
They become relevant later on as coefficient matrices of $\hat{\vec{P}}$ and $\hat{\vec{p}}$, when transforming $\HDC$ to relative and center-of-mass coordinates. Consequently, we define $\Mminp$ in the underlying Hilbert space $\Hrel$ with domain 
\begin{align}\label{eq_def_domain_Mminp}
\D (\Mminp) = 
\left\{ 
f \in \Hrel \left|\ \Mminp f \in \Hrel \right. 
\right\} 
\end{align}
and $\PMpl$ in the underlying Hilbert space $\Hcom$ with domain 
\begin{align}\label{eq_def_domain_PMpl}
\D (\PMpl) = 
\left\{ 
f \in \Hcom \left|\ \PMpl f \in \Hcom \right. 
\right\} .
\end{align}

Without hat, $\mathbf{M}^- \cdot \vec{p}$ means the $16 \times 16$-matrix, whereas with hat, $\Mminp$ denotes an (unbounded) operator. 

\begin{proposition}\label{prop_kernmrelmcom}
${}$
\begin{enumerate}[a)]
	\item For all $\vec{p} \in \R^3$, we have $\dim \Ker (\mathbf{M}^- \cdot \vec{p} ) = 8$. 
	\item $\Ker (\Mminp )$ is isomorphic to $\C^8 \otimes L^2 (\R^3, \dd r)$. 
	\item $\Ker (\PMpl )$ is isomorphic to $L^2 (\R^3, \dd R) \otimes \C^8$. 
\end{enumerate}
\end{proposition}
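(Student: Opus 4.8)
The plan is to reduce all three claims to a single pointwise computation in momentum space. For part~a) I would fix $\vec{p}\in\R^3\setminus\{0\}$ — the degenerate value $\vec{p}=0$, where $\mathbf{M}^-\cdot\vec{p}=0$ and the naive fiber dimension is $16$, is an exception that plays no role in b) and c) since it forms a null set — and abbreviate $A:=\vec{\alpha}\cdot\vec{p}$, so that $\mathbf{M}^-\cdot\vec{p}=A\otimes\mathbf{1}_4-\mathbf{1}_4\otimes A$. The anticommutation relations \eqref{clifford_1} give $A^2=p^2\mathbf{1}_4$, and since $A$ is Hermitian with $\operatorname{tr}A=0$, it has exactly the two eigenvalues $\pm|\vec{p}|$, each of multiplicity $2$. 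Choosing orthonormal bases of the two eigenspaces of $A$ yields an orthonormal basis of $\C^{16}=\C^4\otimes\C^4$ made of product vectors $v\otimes w$ with $(\mathbf{M}^-\cdot\vec{p})(v\otimes w)=(\lambda-\mu)\,v\otimes w$, where $\lambda,\mu\in\{\pm|\vec{p}|\}$ are the $A$-eigenvalues attached to $v$ and $w$; this eigenvalue vanishes precisely for the $2\cdot2+2\cdot2=8$ pairs with $\lambda=\mu$ (the remaining $8$ basis vectors have eigenvalues $\pm2|\vec{p}|$). Hence $\dim\Ker(\mathbf{M}^-\cdot\vec{p})=8$, and, writing $\Pi_\pm(\vec{p}):=\tfrac12\bigl(\mathbf{1}_4\pm A/|\vec{p}|\bigr)$ for the rank-$2$ spectral projections of $A$, the orthogonal projection onto this nullspace is the matrix
\begin{equation}
P_-(\vec{p}):=\Pi_+(\vec{p})\otimes\Pi_+(\vec{p})+\Pi_-(\vec{p})\otimes\Pi_-(\vec{p}),
\end{equation}
which is real-analytic and homogeneous of degree $0$ in $\vec{p}\in\R^3\setminus\{0\}$.

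For part~b) I would use that conjugating $\Mminp$ by the Fourier transform $\F_{\vec{r}}$ turns it into multiplication by the matrix function $\vec{p}\mapsto\mathbf{M}^-\cdot\vec{p}$ on $\C^{16}\otimes L^2(\R^3,\dd p)$, and therefore carries $\Ker(\Mminp)$ onto $\{\,\hat{g}:\hat{g}(\vec{p})\in\operatorname{ran}P_-(\vec{p})\text{ for a.e.\ }\vec{p}\,\}$, i.e.\ onto the range of the bounded orthogonal projection acting fiberwise as $P_-(\vec{p})$. It then remains to construct a unitary from this subspace onto $\C^8\otimes L^2(\R^3,\dd p)$, equivalently a measurable field of isometries $V(\vec{p})\colon\C^8\to\C^{16}$ with $V(\vec{p})^*V(\vec{p})=\mathbf{1}_8$ and $V(\vec{p})V(\vec{p})^*=P_-(\vec{p})$: then $a\mapsto\bigl(\vec{p}\mapsto V(\vec{p})\,a(\vec{p})\bigr)$ followed by $\F_{\vec{r}}^{-1}$ is the required unitary, and it commutes with multiplication by functions of $\vec{p}$, hence respects the $L^2(\R^3,\dd r)$-factor. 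Such a field is obtained by the standard trivialization argument: since $P_-(\cdot)$ is continuous on $\R^3\setminus\{0\}$ and depends only on $\vec{p}/|\vec{p}|$, cover $S^2$ by finitely many small open caps on which $\operatorname{ran}P_-(\cdot)$ carries a continuous orthonormal frame (Gram--Schmidt from a reference fiber), refine to a finite Borel partition of $S^2$, patch the frames, and extend radially. Part~c) is completely parallel: one writes $\vec{P}\cdot\mathbf{M}^+=\tfrac12\bigl(A(\vec{P})\otimes\mathbf{1}_4+\mathbf{1}_4\otimes A(\vec{P})\bigr)$ with $A(\vec{P}):=\vec{\alpha}\cdot\vec{P}$, notes that the unitary $\mathbf{1}_4\otimes\beta$ conjugates $\mathbf{M}^-\cdot\vec{P}$ into $2\,\vec{P}\cdot\mathbf{M}^+$ (because $\beta\alpha_i\beta=-\alpha_i$ by \eqref{clifford_1}), whence part~a) at once gives $\dim\Ker(\vec{P}\cdot\mathbf{M}^+)=8$ for $\vec{P}\neq0$; running the same Fourier-plus-frame argument on $\Hcom=L^2(\R^3,\dd R)\otimes\C^{16}$, with the tensor factors in the opposite order, produces the unitary onto $L^2(\R^3,\dd R)\otimes\C^8$.

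Part~a) is pure linear algebra, and the passage in b),~c) to decomposable multiplication operators via the Fourier transform is routine, so I do not expect a deep obstacle here. The one step that needs a little care is the construction of the measurable isometry field $V(\cdot)$ — equivalently, of a measurable orthonormal frame for the field $\vec{p}\mapsto\operatorname{ran}P_-(\vec{p})$ of $8$-dimensional subspaces; this is the standard fact that a measurable field of Hilbert spaces with constant finite fiber dimension is trivial, and the cover-and-refine recipe above makes it concrete. Apart from that, the only thing to watch is the harmless exceptional null set $\{\vec{p}=0\}$ (resp.\ $\{\vec{P}=0\}$), which is why a) should really be read for $\vec{p}\neq0$.
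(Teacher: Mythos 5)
Your proof is correct and follows essentially the same route as the paper: a pointwise linear-algebra computation giving $\dim\Ker(\mathbf{M}^-\cdot\vec{p})=8$, conjugation by the Fourier transform to turn $\Mminp$ into a decomposable multiplication operator, and a fiberwise trivialization of the range of $P_-(\vec{p})$. Two small remarks are worth recording. First, you rightly flag $\vec{p}=0$ as an exception: there $\mathbf{M}^-\cdot\vec{p}=0$ and the fiber kernel has dimension $16$, so the paper's part~a) as stated (``for all $\vec{p}$'') is literally false at the origin, though this is harmless since $\{0\}$ is a null set and the paper's own proof of b) already works only for almost every $\vec{p}$. Second, you are more careful than the paper about the measurability of the trivializing unitary: the paper simply writes $u(\hat{\vec{p}})$ without noting that the diagonalizing matrix $u(\vec{p})$ must be chosen measurably in $\vec{p}$, whereas your cover-and-patch construction of a measurable orthonormal frame for $\vec{p}\mapsto\operatorname{ran}P_-(\vec{p})$ fills exactly that gap. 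Your reduction of c) to a) via conjugation by $\mathbf{1}_4\otimes\beta$, using $\beta\alpha_i\beta=-\alpha_i$ to turn $\mathbf{M}^-\cdot\vec{P}$ into $2\,\vec{P}\cdot\mathbf{M}^+$, is a clean way to make the paper's ``analogously'' explicit.
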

\begin{proof}
\begin{enumerate}[a)]
	\item This follows as for all $\vec{p} \in \R^3$ one finds that $\mathbf{M}^- \cdot \vec{p}$ has the eigenvalue $0$ with multiplicity $8$. 
	
	\item As $\mathbf{M}^- \cdot \vec{p}$ is a Hermitian matrix, there exists a unitary matrix $u (\vec{p})$ which diagonalizes $\mathbf{M}^- \cdot \vec{p}$. We find for almost all $\vec{p} \in \R^3$ 
	\begin{align}\label{eq_diagmatrixmrel}
	u (\vec{p}) \, \mathbf{M}^- \cdot \vec{p} \, u (\vec{p})^\dag
	=
	2
	\left(
	\begin{array}{ccc}
	-\mathbf{1}_4 | \vec{p} | &  & \\
	& \mathbf{1}_4 | \vec{p} | & \\
	&  & \mathbf{0}_8 
	\end{array}
	\right) ,
	\end{align}
	where $u (\vec{p})^\dag$ denotes the adjoint of $u (\vec{p})$. In $L^2$, we have $\Ker (|\hat{\vec{p}}|) = \{ 0 \}$, and therefore, $\Ker (\Mminp)$ is determined solely by $\mathbf{0}_8$ in the lower right corner, i.e., the eigenvalue $0$ of $\mathbf{M}^- \cdot \vec{p}$ with multiplicity $8$. These eigenvalues in turn correspond to $8$ linearly independent eigenvectors of the matrix $\mathbf{M}^- \cdot \vec{p}$. Thus, $u (\hat{\vec{p}})$ establishes an isomorphism between $\Ker (\Mminp )$ and $\C^8 \otimes L^2 (\R^3, \dd r)$, which proves the statement. 
	
	\item Analogously to b). \qedhere
\end{enumerate}
\end{proof}

In order to give the definition of $P_\pm$, we define for almost all $\vec{p} \in \R^3$ the Hermitian $16 \times 16$-matrix 
\begin{align}\label{eq_tauofp}
\tau (\vec{p}) := - \frac{\vec{\alpha} \cdot \vec{p} \otimes \vec{\alpha} \cdot \vec{p}}{p^2} .
\end{align}
With help of the anticommutation relations \eqref{clifford_1} for the Dirac matrices, we obtain $\tau (\vec{p})^2 = \mathbf{1}_{16}$. This implies that multiplication with $\tau (\vec{p})$ defines a bounded operator on all of $\Hrel$.

\begin{definition}\label{def_tau_projections}
We define the operator 
$\tau \colon \Hrel \rightarrow \Hrel$ 
by its action on all $f \in \Hrel$ and for almost all $\vec{r} \in \R^3$ 
\begin{align}
(\tau f) (\vec{r}) 
:=
\lim_{M \rightarrow \infty} \int_{| \vec{p} | \leq M} e^{2 \pi i\vec{r} \cdot \vec{p}} \, \tau (\vec{p}) \hat{f} (\vec{p}) \dd p 
\end{align}
where the limit is taken in the $L^2$-sense. We define the operators 
$P_{\pm} \colon \Hrel \rightarrow \Hrel_{\pm}$ by
\begin{align} 
P_{\pm} := 
\frac{1}{2} 
\left( 
\mathrm{id} \pm \tau
\right) 
\end{align}
where $\mathcal{H}_{\pm}^{\mathrm{rel}} := P_{\pm} \mathcal{H}^{\mathrm{rel}}$. We also define for almost all $\vec{p} \in \R^3$ the $16\times 16$-matrix 
\begin{align}
P_\pm (\vec{p}) := \frac{1}{2} \left( \mathbf{1}_{16} \pm \tau (\vec{p}) \right) .
\end{align}
\end{definition}

For the moment, it suffices to define $P_\pm$ as Fourier multiplier. In Section \ref{section_prooftechnicalstuff} however, integral kernels are derived. 

\begin{proposition}\label{prop_justificationPplusmin} 
The following statements hold: 
\begin{enumerate}[a)]
	
	\item \label{prop_justificationPplusmin_a}
	$P_-$ is the orthogonal projection onto $\Ker (\Mminp)$, i.e., $\Hrel_- = \Ker (\Mminp)$. 
	
	\item \label{prop_justificationPplusmin_b}
	$P_+$ is the orthogonal projection onto $\Ker (\Mminp)^{\bot}$, i.e., $\Hrel_+ = \Ker (\Mminp)^{\bot}$. 
	
	\item \label{prop_justificationPplusmin_c}
	$P_+ \, \Mminp \, P_+ f = \Mminp f$ and $P_- \, \Mminp f = 0$ for all $f \in \D (\Mminp)$. 
\end{enumerate}
Furthermore, 
$\mathcal{H}^{\mathrm{rel}} =\mathcal{H}^{\mathrm{rel}}_+ \oplus \mathcal{H}^{\mathrm{rel}}_-$, 
and $\mathcal{H}^{\mathrm{rel}}_{\pm}$ are themselves Hilbert spaces. 
\end{proposition}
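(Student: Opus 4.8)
The plan is to deduce all four assertions from the pointwise structure of the Hermitian $16\times16$-matrix $\tau(\vec p)$ and then transfer everything to operators on $\Hrel$ by Plancherel. First I would record the elementary properties of the projections. Since $\tau(\vec p)$ is Hermitian with $\tau(\vec p)^2=\mathbf 1_{16}$ (noted below \eqref{eq_tauofp}), the Fourier multiplier $\tau$ of Definition~\ref{def_tau_projections} is a bounded self-adjoint operator on $\Hrel$ with $\tau^2=\mathrm{id}$; hence $P_\pm=\frac12(\mathrm{id}\pm\tau)$ are orthogonal projections with $P_++P_-=\mathrm{id}$ and $P_+P_-=P_-P_+=0$, and for the same reason the matrices $P_\pm(\vec p)=\frac12(\mathbf 1_{16}\pm\tau(\vec p))$ are complementary orthogonal projections on $\C^{16}$. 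Consequently $\Hrel_\pm=P_\pm\Hrel$ are mutually orthogonal closed subspaces of $\Hrel$ with $\Hrel=\Hrel_+\oplus\Hrel_-$, and as closed subspaces of a Hilbert space they are themselves Hilbert spaces; this already settles the last sentence of the proposition.

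The core step is to identify $\mathrm{Ran}\,P_-(\vec p)$ with $\Ker(\mathbf M^-\cdot\vec p)$ pointwise. For $\vec p\neq\vec 0$ put $B:=|\vec p|^{-1}\,\vec\alpha\cdot\vec p$; by the Clifford relations \eqref{clifford_1} one has $B^2=\mathbf 1_4$, and from \eqref{eq_def_Mpl_Mmin} and \eqref{eq_tauofp}, $\mathbf M^-\cdot\vec p=|\vec p|\,(B\otimes\mathbf 1_4-\mathbf 1_4\otimes B)$ and $\tau(\vec p)=-\,B\otimes B$. For $v\in\C^{16}=\C^4\otimes\C^4$ I would then argue: $v\in\Ker(\mathbf M^-\cdot\vec p)$ iff $(B\otimes\mathbf 1_4)v=(\mathbf 1_4\otimes B)v$; applying $B\otimes\mathbf 1_4$ to both sides and using $B^2=\mathbf 1_4$ turns this into $(B\otimes B)v=v$, and conversely the same operation sends $(B\otimes B)v=v$ back to $(\mathbf 1_4\otimes B)v=(B\otimes\mathbf 1_4)v$; finally $(B\otimes B)v=v$ is precisely $\tau(\vec p)v=-v$, i.e.\ $P_-(\vec p)v=v$. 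Hence $\mathrm{Ran}\,P_-(\vec p)=\Ker(\mathbf M^-\cdot\vec p)$ for a.e.\ $\vec p$ (the null set $\vec p=\vec 0$ is irrelevant), and since $P_+(\vec p)=\mathbf 1_{16}-P_-(\vec p)$ and $\mathbf M^-\cdot\vec p$ is Hermitian, $\mathrm{Ran}\,P_+(\vec p)=\Ker(\mathbf M^-\cdot\vec p)^\bot=\mathrm{Ran}(\mathbf M^-\cdot\vec p)$. In particular $(\mathbf M^-\cdot\vec p)\,P_-(\vec p)=P_-(\vec p)\,(\mathbf M^-\cdot\vec p)=0$ and $P_+(\vec p)\,(\mathbf M^-\cdot\vec p)\,P_+(\vec p)=\mathbf M^-\cdot\vec p$, because $\mathbf M^-\cdot\vec p$ annihilates its kernel and maps onto its range, on which $P_+(\vec p)$ acts as the identity.

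Finally I would transfer this to $\Hrel$ via $\F_{\vec r}$, under which $P_\pm$ becomes multiplication by $P_\pm(\vec p)$ and $\Mminp$ becomes multiplication by $\mathbf M^-\cdot\vec p$ with domain $\D(\Mminp)=\{f\in\Hrel:(\mathbf M^-\cdot\vec p)\hat f(\vec p)\in L^2\}$. For claim \ref{prop_justificationPplusmin_a}): $f\in\Ker(\Mminp)$ iff $(\mathbf M^-\cdot\vec p)\hat f(\vec p)=0$ a.e.\ iff $\hat f(\vec p)\in\mathrm{Ran}\,P_-(\vec p)$ a.e.\ iff $P_-f=f$, i.e.\ $f\in\Hrel_-$; since $P_-$ is an orthogonal projection this yields $\Hrel_-=\Ker(\Mminp)$ with $P_-$ its orthogonal projection, and claim \ref{prop_justificationPplusmin_b}) follows by passing to orthogonal complements via $P_+=\mathrm{id}-P_-$. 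For \ref{prop_justificationPplusmin_c}): given $f\in\D(\Mminp)$, the uniform boundedness of the $P_+(\vec p)$ shows $P_+f\in\D(\Mminp)$, and the pointwise identities $P_+(\vec p)(\mathbf M^-\cdot\vec p)=\mathbf M^-\cdot\vec p=(\mathbf M^-\cdot\vec p)P_+(\vec p)$ give $P_+\,\Mminp\,P_+f=\Mminp f$, while $P_-(\vec p)(\mathbf M^-\cdot\vec p)=0$ gives $P_-\,\Mminp f=0$. I expect the only point requiring real care to be this last bookkeeping with the unbounded domain $\D(\Mminp)$---notably the invariance $P_\pm\D(\Mminp)\subseteq\D(\Mminp)$---whereas the pointwise kernel identity, though the conceptual heart of the matter, is a two-line manipulation once $B^2=\mathbf 1_4$ is at hand.
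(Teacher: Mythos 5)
Your argument is correct, and for the core step it takes a route that is genuinely different from (and arguably cleaner than) the paper's. Where you prove the pointwise identity $\mathrm{Ran}\,P_-(\vec p)=\Ker(\mathbf M^-\cdot\vec p)$ by a two-line algebraic manipulation with $B:=|\vec p|^{-1}\vec\alpha\cdot\vec p$, $B^2=\mathbf 1_4$ (so that $\mathbf M^-\cdot\vec p$ is proportional to $B\otimes\mathbf 1_4-\mathbf 1_4\otimes B$ and $\tau(\vec p)=-B\otimes B$, and the two kernels match by multiplying by $B\otimes\mathbf 1_4$), the paper instead invokes the explicit diagonalizing unitary $u(\vec p)$ from the proof of Proposition~\ref{prop_kernmrelmcom}: it checks $\mathbf M^-\cdot\vec p\,P_-(\vec p)=0$ for the inclusion $\Hrel_-\subseteq\Ker(\Mminp)$, and for the reverse inclusion compares the simultaneous diagonal forms \eqref{eq_diagmatrixmrel} and \eqref{eq_diagmatrixPmin} together with $\Ker(|\hat{\vec p}|)=\{0\}$ in $L^2$ to conclude $P_-f=f$. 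Both proofs then transfer everything to $\Hrel$ by Plancherel and deduce b), c) and the direct-sum decomposition in essentially the same way. What your version buys is self-containedness: you never need to know $u(\vec p)$ or that it simultaneously diagonalizes $P_-(\vec p)$ and $\mathbf M^-\cdot\vec p$. What the paper's version buys is economy in the larger text: $u(\vec p)$ is established once in Proposition~\ref{prop_kernmrelmcom} and is reused (e.g.\ in Lemma~\ref{lemma_propertiesMminB}), so leaning on it here comes for free.

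One small remark on c): uniform boundedness of $\{P_+(\vec p)\}$ alone does not give the domain invariance $P_+\,\D(\Mminp)\subseteq\D(\Mminp)$; what you actually need (and do state immediately afterwards) is the pointwise commutation $P_+(\vec p)\,\mathbf M^-\cdot\vec p=\mathbf M^-\cdot\vec p\,P_+(\vec p)$, after which boundedness of $P_+$ finishes the job. Phrasing it as "commutation plus boundedness of $P_+$" would make the logical order explicit, but the content is there.
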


\begin{proof}
\begin{enumerate}[a)]
	
	\item First, we prove that $P_-$ is an orthogonal projection, i.e., $P_-^2 = P_-$, $P_-$ is bounded, and $P_-^* = P_-$. Since $(\vec{\alpha} \cdot \vec{p})^2 =\mathbf{1}_4 p^2$, we obtain $P_-(\vec{p})^2 = P_-(\vec{p})$ for almost all $\vec{p} \in \R^3$. Thus, $P_-^2 = P_-$ follows. 
	
	$P_-$ is bounded with $\| P_- \| = 1$ since $P_-(\vec{p})$ is a Hermitian matrix with the only eigenvalues being $0$ and $1$. This also implies self-adjointness $P_-^{\ast} = P_-$. 
	
	In order to show that $P_-$ projects onto $\Ker (\Mminp)$, we prove $\Hrel_- = \Ker (\Mminp)$. A computation shows for almost all $\vec{p} \in \R^3$ $\mathbf{M}^- \cdot \vec{p} P_- (\vec{p}) = 0$ and therefore, we obtain $P_- f \in \Ker (\Mminp)$ for all $f \in \Hrel$, i.e., $\Hrel_- \subseteq \Ker (\Mminp)$. 
	
	For the reverse inclusion $\Ker (\Mminp) \subseteq \Hrel_-$, we pick an $f \in \Ker (\Mminp)$ and show that $P_- f = f$. We use $u (\vec{p})$ from the proof of Proposition~\ref{prop_kernmrelmcom}, line~\eqref{eq_diagmatrixmrel}, and compute for almost all $\vec{p} \in \R^3$ 
	\begin{align}\label{eq_diagmatrixPmin}
	u (\vec{p}) P_- (\vec{p}) u (\vec{p})^\dag
	=
	\left(
	\begin{array}{cc}
	\mathbf{0}_8 & \\
	& \mathbf{1}_8 \\ 
	\end{array}
	\right) .
	\end{align}
	Recalling that in $L^2$ one has $\Ker (|\hat{\vec{p}}|) = \{ 0 \}$, we see that $\Mminp f = 0$ implies that the only non-zero components of $u (\hat{\vec{p}}) f$ are those components on which $u (\hat{\vec{p}}) P_- u (\hat{\vec{p}})^*$ acts as identity. Therefore, we can conclude that $P_- f = f$. 
	
	\item $P_+$ is an orthogonal projection by the same argument as for $P_-$. In order to show that $P_+$ projects on $\Ker (\Mminp)^{\bot}$, we compute for almost all $\vec{p} \in \R^3$ $P_- (\vec{p}) P_+ (\vec{p}) = P_+ (\vec{p}) P_- (\vec{p}) = 0$ and therefore, for all $f,g \in \Hrel$ 
	\begin{align}
	\langle P_+ f, P_- g \rangle = \langle f, P_+ P_- g \rangle = 0 .
	\end{align}
	
	\item The statement follows since the relations 
	\begin{align}
	P_+ (\vec{p}) \mathbf{M}^- \cdot \vec{p} P_+ (\vec{p}) &= \mathbf{M}^- \cdot \vec{p} \\
	P_- (\vec{p}) \mathbf{M}^- \cdot \vec{p} &= 0
	\end{align} 
	hold for almost all $\vec{p} \in \R^3$. 
\end{enumerate}
$\quad$\\
Since the projections $P_{\pm}$ are closed, $\mathcal{H}^{\mathrm{rel}}_{\pm}$ are closed subspaces of $\mathcal{H}^{\mathrm{rel}}$. With the inherited inner product from $\mathcal{H}^{\mathrm{rel}}$, it follows that they are themselves Hilbert spaces. This implies $\mathcal{H}^{\mathrm{rel}} =\mathcal{H}^{\mathrm{rel}}_+ \oplus \mathcal{H}^{\mathrm{rel}}_-$. 
\end{proof}

\subsubsection{Operators and domains}\label{section_relevantdefs} 
In line~\eqref{eq:freedirac}, we already defined the free two-particle Dirac operator with masses $m_1, m_2 \geq 0$ as 
\begin{align}
H_0 
= 
(- i\vec{\alpha} \cdot \nabla_{\vec{x}} + \beta m_1) \otimes \mathrm{id} 
+ \mathrm{id} \otimes (- i\vec{\alpha} \cdot \nabla_{\vec{y}} + \beta m_2) 
\end{align}
with domain
\begin{align}
\D_0 = H^1 (\R^3, \dd x) \otimes \C^4 \otimes H^1 (\R^3, \dd y) \otimes \C^4.
\end{align}
When external as well as interaction potential are included, it yields the Dirac-Coulomb Hamiltonian from line~\eqref{eq_HDCdef} which reads 
\begin{align}\label{eq_HDCdef_2}
\HDC 
= 
H_0 + V_{\mathrm{ext}} + V_{\mathrm{int}}
\end{align}
whose domain is also $\D_0$. 

Now, we apply the coordinate transformation $U$ to relative and center-of-mass coordinates \eqref{eq_transformationtorelcomcoord} to $H_0$ as well as to $\HDC$. It is here where the matrices  
$\mathbf{M}^{\pm} = \left( \mathrm{M}^{\pm}_1, \mathrm{M}^{\pm}_2, \mathrm{M}^{\pm}_3 \right)$ from line~\eqref{eq_def_Mpl_Mmin} enter our considerations. From lines \eqref{eq_def_Mpl_Mmin}--\eqref{eq_def_domain_PMpl}, we already know the free relative Hamiltonian $\Mminp$ with domain 
\begin{align}
\D (\Mminp) = 
\left\{ 
f \in \Hrel \left|\ \Mminp f \in \Hrel \right. 
\right\} 
\end{align}
in the underlying Hilbert space $\Hrel$ as well as the free center-of-mass Hamiltonian $\PMpl$ with domain 
\begin{align}\label{eq_def_domain_comHamiltonian}
\D (\PMpl) = 
\left\{ 
f \in \Hcom \left|\ \PMpl f \in \Hcom \right. 
\right\}
\end{align}
in the underlying Hilbert space $\Hcom$. Since the mass terms $\beta m_1$ and $\beta m_2$ do not play a role concerning self-adjointness, we set $m_1 = 0 = m_2$ for now. We obtain 
\begin{align}\label{eq_def_T}
T
:= {}&
U
\Big(
(- i\vec{\alpha} \cdot \nabla_{\vec{x}} ) \otimes \mathrm{id}_{L^2 (\dd x)\otimes \C^4} + \mathrm{id}_{L^2 (\dd y)\otimes \C^4} \otimes (- i\vec{\alpha} \cdot \nabla_{\vec{y}} ) 
\Big)
U^{-1} \nonumber \\
= {}&
\PMpl \otimes \mathrm{id}_{L^2 (\dd r)} + \mathrm{id}_{L^2 (\dd R)} \otimes \Mminp .
\end{align}
Note the different $L^2$-identities: In the upper line, the identities have a spin part, whereas in the lower line, they do not. 

For our convenience, we do not introduce a new symbol for the transformed domain and obtain therefore 
\begin{align}
\D_0
& = 
U 
\Big(
H^1 (\R^3, \dd x) \otimes \C^4 \otimes H^1 (\R^3, \dd y) \otimes \C^4
\Big) \nonumber \\
& = 
H^1 (\R^3, \dd R) \otimes \C^{16} \otimes H^1 (\R^3, \dd r) .
\end{align}

Since also $V_{\mathrm{ext}}$ plays no role concerning self-adjointness as it is bounded and symmetric (see \eqref{eq_definitionV_ext}), we set it to zero. It can be restored later on by means of a bounded perturbation. Hence, conjugating $\HDC$ with $U$ with masses $m_1 = 0 = m_2$ and $V_{\mathrm{ext}} = 0$ yields 
\begin{eqnarray}
U \HDC U^{-1}
& = &
T + \mathrm{id}_{L^2 (\dd R)} \otimes \gamma V \nonumber \\
& = &
\PMpl \otimes \mathrm{id} + \mathrm{id} \otimes \big( \Mminp + \gamma V \big) . 
\end{eqnarray}
where, in the underlying Hilbert space $\Hrel$, $V$ is the operator of component-wise multiplication with $| \vec{r} |^{-\kappa}$ for almost all $\vec{r} \in \R^3$ and all $0 < \kappa \leq 1$ 
with domain 
\begin{align}
\D (V) = 
\left\{
f \in \Hrel \left|\ |\cdot |^{-\kappa} f \in \Hrel \right.
\right\} .
\end{align} 
It is well-known that $V$ is positive and self-adjoint on $\D (V)$. 

We saw in Section~\ref{section_projections}, Prop.~\ref{prop_justificationPplusmin}, that $P_+$ and $P_-$ split $\Hrel$ into two orthogonal subspaces, i.e., $\Hrel = P_+ \Hrel \oplus P_- \Hrel = \Hrel_+ \oplus \Hrel_-$. This means that we can recast 
\begin{align}\label{eq_defHrel_2}
H^{\mathrm{rel}} := \Mminp + \gamma V
\end{align}
in matrix form as  
\begin{eqnarray}\label{eq_defHrel}
H^{\mathrm{rel}} 
& = & 
P_+ H^{\mathrm{rel}} P_+ + P_+ H^{\mathrm{rel}} P_- + P_- H^{\mathrm{rel}} P_+ + P_- H^{\mathrm{rel}} P_- \nonumber \\
& =: & 
\left(
\begin{array}{cc}
\Mminp + P_+ \gamma V P_+ & \quad P_+ \gamma V P_-\\
P_- \gamma V P_+ & \quad P_- \gamma V P_-
\end{array}
\right) 
\end{eqnarray} 
where we already applied $P_+ \Mminp P_+ = \Mminp$, proven in Proposition~\ref{prop_justificationPplusmin}\ref{prop_justificationPplusmin_c}). The domain of $H^{\mathrm{rel}}$ is 
\begin{align} \label{eq_def_domain_Hrel}
\D (H^{\mathrm{rel}}) = 
\D_+ 
\oplus 
\left( 
\mathcal{D} (V) \cap \Hrel_-
\right)
\end{align}
where 
\begin{align}\label{eq_def_Dplus}
\D_+ := \left( \C^{16} \otimes H^1 (\R^3, \dd r) \right) \cap \Hrel_+ . 
\end{align}
For sake of completeness, we show that $\D (H^{\mathrm{rel}})$ is dense in $\Hrel $. 

\begin{proposition}\label{prop_domainH^reldense}
$\D (H^{\mathrm{rel}})$ is dense in $\Hrel$. 
\end{proposition}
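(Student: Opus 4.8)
The plan is to exploit the orthogonal splitting $\Hrel = \Hrel_+ \oplus \Hrel_-$ from Proposition~\ref{prop_justificationPplusmin} together with the block structure \eqref{eq_def_domain_Hrel} of the domain, namely $\D (H^{\mathrm{rel}}) = \D_+ \oplus \big(\D (V) \cap \Hrel_-\big)$. Since the algebraic sum of a dense subspace of $\Hrel_+$ and a dense subspace of $\Hrel_-$ is dense in $\Hrel$, it suffices to prove separately that $\D_+$ is dense in $\Hrel_+$ and that $\D (V) \cap \Hrel_-$ is dense in $\Hrel_-$. In both cases I would start from a convenient dense set of smooth functions in $\Hrel$, push it through the bounded orthogonal projection $P_+$ resp.\ $P_-$ (which is surjective onto $\Hrel_\pm$ and restricts to the identity there, so images of dense sets stay dense), and then verify that the projected functions remain in the relevant domain.

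For the $+$-part I would take the dense set $\C^{16} \otimes C_c^\infty (\R^3, \dd r) \subset \Hrel$. For $g$ in this set, $\widehat{P_+ g}(\vec p) = P_+ (\vec p)\, \hat g(\vec p)$, and since $P_+(\vec p)$ is Hermitian with eigenvalues $0$ and $1$ we have the pointwise bound $(1 + p^2)^{1/2} |P_+(\vec p)\, \hat g(\vec p)| \leq (1 + p^2)^{1/2} |\hat g(\vec p)| \in L^2$, using that $\hat g$ is Schwartz. Hence $P_+ g \in \C^{16} \otimes H^1 (\R^3, \dd r)$, and as $P_+ g \in \Hrel_+$ this gives $P_+ g \in \D_+$. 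Therefore $\D_+ \supseteq P_+ \big( \C^{16} \otimes C_c^\infty (\R^3, \dd r) \big)$, and the latter is dense in $\Hrel_+$.

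For the $-$-part I would instead use $\C^{16} \otimes C_c^\infty (\R^3 \setminus \{0\}, \dd r) \subset \Hrel$, which is still dense since $\{0\}$ is a null set. Any such $g$ lies in $\D (V)$ because $| \cdot |^{-\kappa}$ is bounded on its (compact) support, and it remains to check $P_- g = \tfrac12 (g - \tau g) \in \D (V)$, i.e.\ $\tau g \in \D (V)$. Since $\hat g$ is Schwartz, $\tau(\vec p)\, \hat g (\vec p)$ is dominated pointwise by $|\hat g(\vec p)| \in L^1(\R^3)$, so $\tau g$ is a bounded continuous function; moreover $\tau g \in \Hrel$ because $\tau$ is bounded on $\Hrel$. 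Splitting $\int |\vec r|^{-2\kappa} |\tau g(\vec r)|^2 \dd r$ at $|\vec r| = 1$, the inner piece is finite because $\tau g \in L^\infty$ and $\int_{|\vec r| \leq 1} |\vec r|^{-2\kappa} \dd r < \infty$ (here $2\kappa \leq 2 < 3$), and the outer piece is finite because $|\vec r|^{-\kappa}$ is bounded there and $\tau g \in L^2$. Hence $\tau g \in \D (V)$, so $P_- g \in \D (V) \cap \Hrel_-$, and $\D (V) \cap \Hrel_- \supseteq P_- \big( \C^{16} \otimes C_c^\infty (\R^3 \setminus \{0\}, \dd r) \big)$ is dense in $\Hrel_-$. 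Combining the two parts proves the proposition.

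I expect the only nonroutine point to be this last verification: that $P_-$ does not spoil membership in $\D (V)$. This is exactly where the smoothing property of the Fourier multiplier $\tau$ enters (boundedness of $\tau g$, in particular near $\vec r = 0$), combined with the hypothesis $\kappa \leq 1$, which makes $|\vec r|^{-2\kappa}$ locally integrable on $\R^3$. Alternatively, one could invoke the Calder\'{o}n--Zygmund nature of $\tau$ and the fact that $|\vec r|^{-2\kappa}$ is an $A_2$-weight for $2\kappa < 3$ to conclude directly that $\tau$, hence $P_\pm$, maps $\D (V)$ into itself.
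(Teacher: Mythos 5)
Your proof is correct, and the $\Hrel_+$ part runs along the same lines as the paper's (density of a regular dense set, projected through the bounded Fourier multiplier $P_+$, which preserves $H^1$). The $\Hrel_-$ part, however, takes a genuinely different route. The paper disposes of it in one clause: it invokes Hardy's inequality to conclude $(\C^{16}\otimes H^1(\R^3,\dd r))\cap\Hrel_-\subseteq\D(V)\cap\Hrel_-$ (since $\||\cdot|^{-\kappa}f\|<\infty$ for $f\in H^1$ and $\kappa\leq 1$), and density again follows from density of $H^1$ in $L^2$ and boundedness of $P_-$. You instead work with $\C^{16}\otimes C_c^\infty(\R^3\setminus\{0\})$ and verify by hand that $P_-$ does not destroy membership in $\D(V)$, using that $\tau(\vec p)$ has operator norm $1$ and $\hat g\in\mathcal{S}$, so $\tau g\in L^\infty\cap L^2$, and then splitting $\int|\vec r|^{-2\kappa}|\tau g|^2\dd r$ at $|\vec r|=1$. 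This is correct (the boundedness of $\tau g$ near the origin replaces Hardy's inequality), and arguably more transparent about \emph{why} $P_-$ preserves $\D(V)$ on smooth data: you isolate the point that $\tau$ is an $L^\infty$-smoothing multiplier on Schwartz input, which is exactly what compensates for the fact that $P_-$ may destroy the ``vanishing near $0$'' property of $g$. The paper's Hardy argument is shorter and gives the stronger inclusion $\C^{16}\otimes H^1\subseteq\D(V)$, but relies on an external inequality; yours is more elementary but restricted to a smaller dense set. Both settle the proposition. (Minor stylistic point: your restriction to $C_c^\infty(\R^3\setminus\{0\})$ for the $\Hrel_-$ part is unnecessary, since the same $L^\infty\cap L^2$ argument you run for $\tau g$ also shows $g\in\D(V)$ for any $g\in C_c^\infty(\R^3)$; you could have used a single dense set for both parts.)
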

\begin{proof}
The statement of the proposition follows if $\D_+$ is dense in $\Hrel_+$ and
$\mathcal{D} (V) \cap \Hrel_-$ is dense in $\Hrel_-$ since the splitting of $\Hrel$ is orthogonal. The former statement follows from density of $H^1$ in $L^2$, the latter holds then by Hardy's inequality. 
\end{proof}

\subsection{The free two-body Dirac operator}\label{section_freetwobody}
Before we include the interaction, it will be essential to study the free two-body Dirac operator given in \eqref{eq:freedirac}. In this section, we will provide the proof of claim a) of Theorem~\ref{thm:main1}, and furthermore, Lemma~\ref{lem_unsualstructure} that will be helpful in the next section when the interaction is included. 

\begin{theorem}[Claim~\ref{thm:main1_a}) of Theorem~\ref{thm:main1}]\label{thm_selfadjointnessfreetwobody}
The following statements hold:
\begin{enumerate}[a)]
	\item $H_0$ with domain $\D (H_0) = \{ f \in \H_2 |\ H_0 f \in \H_2 \}$ is self-adjoint. 
	\item \label{thm_selfadjointnessfreetwobody_b}$\Mminp$ with domain 
	$\D (\Mminp)=\{f\in\Hrel |\, \Mminp f \in \Hrel \}$ 
	is self-adjoint. 
	\item $\PMpl$ with domain 
	$\D (\PMpl) = \{f \in \Hcom |\, \PMpl f \in \Hcom \}$ 
	is self-adjoint. 
	\item $H_0$ with domain $\D_0$ is essentially self-adjoint. 
	\item \label{thm_selfadjointnessfreetwobody_e}$\Mminp$ with domain $\C^{16} \otimes H^2 (\R^3, \dd p)$ is essentially self-adjoint. 
\end{enumerate}
\end{theorem}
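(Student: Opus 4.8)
The plan is to prove all five statements essentially by Fourier transform, reducing each operator to a multiplication operator by a matrix-valued symbol, and then invoking the standard fact that multiplication by a measurable, (essentially) self-adjoint matrix-valued function on its maximal domain is self-adjoint, while it is essentially self-adjoint on any core such as compactly supported or Schwartz-class functions. The key observation is that all the operators in question are built out of constant Hermitian matrices contracted with the momentum variables $\vec p$ or $\vec P$, so after passing to Fourier space they act as multiplication by the Hermitian matrices $\mathbf{M}^- \cdot \vec p$, $\hat{\vec P} \cdot \mathbf{M}^+$, and the full symbol $\mathbf{1}_{16}\,\vec P\cdot\mathbf{M}^+\otimes\mathbf{1} + \mathbf{1}\otimes\mathbf{M}^-\cdot\vec p$ for $H_0$ (in the transformed picture), all of which are Hermitian for each fixed momentum and hence pointwise self-adjoint on $\C^{16}$.

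First I would treat part (b). Conjugating $\Mminp$ by $\F_{\vec r}$ turns it into the operator of multiplication by the matrix symbol $\vec p \mapsto \mathbf{M}^- \cdot \vec p$ on $\C^{16}\otimes L^2(\R^3,\dd p)$, with maximal domain $\{f : (\mathbf{M}^-\cdot\vec p) f \in L^2\}$. Since $\mathbf{M}^-\cdot\vec p$ is Hermitian for every $\vec p$ and its entries are real-linear (hence measurable and locally bounded) in $\vec p$, the general theory of maximal multiplication operators (e.g.\ Weidmann, or Reed–Simon) gives that this operator is self-adjoint on its maximal domain; the diagonalization \eqref{eq_diagmatrixmrel} makes this completely explicit, as the symbol is unitarily equivalent to $2\,\mathrm{diag}(-|\vec p|\mathbf{1}_4,\ |\vec p|\mathbf{1}_4,\ \mathbf{0}_8)$, so one is essentially looking at copies of the self-adjoint operator $|\hat{\vec p}|$ together with a trivial zero block. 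Part (c) is literally the same argument with $\vec p \mapsto \vec P \cdot \mathbf{M}^+$ on $\Hcom$. Part (e) then follows because $\C^{16}\otimes H^2(\R^3,\dd p)$ maps, under $\F_{\vec r}$, onto $\C^{16}\otimes\{g : (1+p^2)g\in L^2\}$, which contains e.g.\ the Schwartz functions and is a core for multiplication by the (at most linearly growing) symbol $\mathbf{M}^-\cdot\vec p$; one checks the core property by the standard cutoff/dominated-convergence argument, approximating $f\in\D(\Mminp)$ in graph norm by $\chi(|\vec p|\le n)f$.

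For parts (a) and (d), I would either repeat the Fourier argument directly on $\H_2$—after the unitary $U$ of \eqref{eq_transformationtorelcomcoord} followed by $\F_{\vec R}\otimes\F_{\vec r}$, $H_0$ (with the mass terms, which only add the bounded-in-the-relevant-sense constant matrices $\beta m_i$ but in fact cause no trouble since they are bounded matrices) becomes multiplication by the Hermitian $16\times16$ symbol $(\vec P,\vec p)\mapsto \vec P\cdot\mathbf{M}^+ + \mathbf{M}^-\cdot\vec p + \text{mass terms}$ on $\C^{16}\otimes L^2(\R^3_P\times\R^3_r)$—or, more cheaply, invoke the tensor-sum structure: $T = \PMpl\otimes\mathrm{id} + \mathrm{id}\otimes\Mminp$ is the closure of the sum of two commuting self-adjoint operators acting on different tensor factors, hence self-adjoint on the appropriate domain by the standard theory of sums of commuting self-adjoint operators (e.g.\ Reed–Simon VIII.33 and its tensor-sum corollaries), and $H_0 = U^{-1}TU$ plus the bounded perturbation by the mass matrices, if one wishes to reinstate them, handled by Kato–Rellich. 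Essential self-adjointness on $\D_0$ (part (d)) then follows since $\D_0 = H^1\otimes\C^4\otimes H^1\otimes\C^4$ contains a core—for instance the algebraic tensor product of Schwartz spaces—which is mapped by Fourier transform into the natural core for the multiplication operator, again by a cutoff argument.

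The main obstacle, modest as it is, is purely bookkeeping around the nontrivial nullspace: one must be careful that "self-adjoint on the maximal multiplication domain" is the correct statement even though the symbol $\mathbf{M}^-\cdot\vec p$ is not elliptic (it degenerates on an $8$-dimensional subspace for every $\vec p$, and on all of $\C^{16}$ as $\vec p\to0$). This is exactly the point flagged in Remark~\ref{rem_aftermainthm}, and it is the reason the clean "$H_0$ is self-adjoint on $H^1\otimes H^1$" one might naively expect is false; the honest statement is that $\D(H_0)$ is the graph domain $\{f : H_0 f\in\H_2\}$, strictly larger than $\D_0$, and the proof must phrase everything in terms of that maximal domain rather than a Sobolev domain. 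Apart from that caveat, every step is a routine application of the spectral theorem for multiplication operators and the theory of commuting self-adjoint operators, so I do not anticipate any genuine difficulty.
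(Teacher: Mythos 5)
Your proposal is correct and takes essentially the same route as the paper: conjugate by $U$ and Fourier transform, observe that the resulting symbol is a pointwise Hermitian matrix, and conclude self-adjointness on the maximal multiplication domain, with essential self-adjointness on cores established by a cutoff/dominated-convergence argument (the paper uses the resolvent-type cutoff $\tfrac{n}{\hat{p}^2+n}$ rather than a sharp momentum cutoff, but this is cosmetic). The only superficial difference is that the paper works out the maximal-multiplication-operator fact by hand via test functions in $\D(T^*)$ rather than citing it as a known result, and proves (a) directly while remarking that (b), (c) are identical, whereas you go in the opposite order; both are routine reorganizations of the same argument.
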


\begin{proof} 
Since self-adjointness of $H_0$, $\Mminp$, and $\PMpl$ is proven along the same lines, we only show it for $H_0$. Parts b) and c) then follow. 
\begin{enumerate}
	\item[a)] As the mass term $\beta m_1 \otimes \mathbf{1}_4 + \mathbf{1}_4 \otimes \beta m_2$ is bounded and symmetric, it suffices to show self-adjointness of $H_0$ for $m_1 = 0 = m_2$. In this case and recalling \eqref{eq_transformationtorelcomcoord} and \eqref{eq_def_T}, we obtain $T = U H_0 U^{-1}$. As $U$ is unitary, self-adjointness of $T$ implies self-adjointness of $H_0$. 
	
	We define the $\C^{16}\times\C^{16}$-matrix $T(\vec{P},\vec{p})$ for all $\vec{P}, \vec{p} \in \R^3$ by  
	\begin{align}
	T(\vec{P},\vec{p}) := \vec{P} \cdot \mathbf{M}^+ + \mathbf{M}^- \cdot \vec{p} .
	\end{align}
	Since $T(\vec{P},\vec{p})$ is Hermitian for all $\vec{P}, \vec{p} \in \R^3$, $T$ is symmetric on $\D (T) = U\D(H_0)$. Hence, it suffices to show $\D (T^*) \subseteq \D(T)$. 
	
	Let $f \in \D (T^*)$. Then, we obtain for all $g \in \D (T)$ 
	\begin{eqnarray}
	\langle T^* f, g \rangle = \langle f, T g \rangle 
	& = & 
	\int_{\R^3\times\R^3} f (\vec{P},\vec{p})^\dag \, T (\vec{P},\vec{p}) g (\vec{P},\vec{p}) \dd p \dd P \nonumber \\
	& = &
	\int_{\R^3\times\R^3} \left[ T (\vec{P},\vec{p}) f (\vec{P},\vec{p})\right]^\dag g (\vec{P},\vec{p}) \dd p \dd P
	\end{eqnarray}
	
	Since $(C^{\infty}_c (\R^3\times\R^3))^{\otimes 16}$ is contained in $\D (T)$, this holds in particular for all 
	$g \in (C^{\infty}_c (\R^3\times\R^3))^{\otimes 16}$. 
	This yields for almost all $\vec{P}, \vec{p} \in \R^3$ 
	\begin{align} 
	T (\vec{P},\vec{p}) f (\vec{P}, \vec{p}) 
	= 
	\left( T^* f \right) (\vec{P}, \vec{p}) ,
	\end{align}
	and thus, since 
	$f \in \D (T^{\ast})$, we get $T f \in U\H_2$. Therefore, we can conclude $f \in \D (T)$. 
	
	\item[d)] This is well-known, see e.g. {\cite[Corollary of Thm.~VIII.33, pp.~300]{RS1}}. 
	
	\item[e)] For any $f \in \Hrel$, we define the sequence $(f_n)_{n \in \mathbb{N}}$ as $f_n := \frac{n}{\hat{p}^2 + n} f$. Then, $f_n \in \C^{16} \otimes H^2 (\R^3, \dd r)$ for every fixed $n \in \mathbb{N}$ since 
	\begin{align} 
	\int_{\R^3}
	\left|
	(1 + p^2) \frac{n}{p^2 + n} \hat{f} (\vec{p})
	\right|^2
	\dd p
	\leq
	(1+n)
	\left\|	f \right\|^2 < \infty . 
	\end{align}
	Furthermore, $\norm{f_n - f}\xrightarrow{n \rightarrow \infty} 0$ by dominated convergence. For $f \in \D(\Mminp)$, we obtain similarly 
	\begin{align}
	\left\| \Mminp \left( f -  f_n \right) \right\|^2 
	=
	\int_{\R^3}
	\left|
	\left(1-\frac{n}{p^2 + n}\right) \mathbf{M}^- \cdot \vec{p} \hat{f} (\vec{p})
	\right|^2
	\dd p
	\xrightarrow{n \rightarrow \infty} 0 
	\end{align}
	by dominated convergence. Hence, $\C^{16} \otimes H^2 (\R^3, \dd r)$ is a core for $\Mminp$. \qedhere
\end{enumerate}
\end{proof}

The next relation in Lemma~\ref{lem_unsualstructure}\ref{lem_unsualstructure_a}) lays open the structure of $\D (\Mminp)$. Part~\ref{lem_unsualstructure_b}) of Lemma~\ref{lem_unsualstructure} below then shows that the restriction of $\Mminp$ to $\Hrel_+$ behaves as one would expect: It is self-adjoint on $\left( \C^{16} \otimes H^1 (\R^3, \dd r) \right) \cap \Hrel_+$. 

\begin{lemma}\label{lem_unsualstructure}
${}$
\begin{enumerate}[a)]
	\item \label{lem_unsualstructure_a} For all $f \in \mathcal{D} (\Mminp)$, we have 
	$\| - i \nabla_{\vec{r}} P_+ f \| = \left\| \frac{1}{2} \Mminp \, f \right\|$. 
	
	\item \label{lem_unsualstructure_b}
	The restriction of $\Mminp$ to $\Hrel_+$, denoted by $\Mminp \upharpoonright \mathcal{D}_+$, with domain
	$P_+ \mathcal{D} (\Mminp)$
	is self-adjoint and, recalling \eqref{eq_def_Dplus}, it holds 
	\begin{align}\label{eq_domainP_+partofMminp}
	\mathcal{D}_+ = 
	\left( \C^{16} \otimes H^1 (\R^3, \dd r) \right) \cap \Hrel_+ =
	P_+ \mathcal{D} (\Mminp) . 
	\end{align}
\end{enumerate}
\end{lemma}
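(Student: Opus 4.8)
The natural strategy is to work in Fourier space, where $\hat{\vec p}$ becomes multiplication by $\vec p\in\R^3$ and the matrices $P_\pm(\vec p)$, $\mathbf M^-\cdot\vec p$ become genuine $16\times16$ matrices acting pointwise. The key algebraic input, which I would establish first by a direct computation from the Clifford relations \eqref{clifford_1}, is the identity
\begin{align}\label{eq_plan_key}
\big(\mathbf M^-\cdot\vec p\big)^2 = 4\,p^2\,P_+(\vec p)
\qquad\text{for all }\vec p\in\R^3.
\end{align}
Indeed, $\tau(\vec p)=-(\vec\alpha\cdot\vec p\otimes\vec\alpha\cdot\vec p)/p^2$ and expanding $(\vec\alpha\cdot\vec p\otimes\mathbf 1_4-\mathbf 1_4\otimes\vec\alpha\cdot\vec p)^2$ with $(\vec\alpha\cdot\vec p)^2=\mathbf 1_4 p^2$ gives $2p^2\mathbf 1_{16}-2(\vec\alpha\cdot\vec p\otimes\vec\alpha\cdot\vec p)=2p^2(\mathbf 1_{16}+\tau(\vec p))=4p^2P_+(\vec p)$. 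Taking square roots of the positive operator on both sides (or rather, reading \eqref{eq_plan_key} as an operator identity on $\Hrel$ via the Fourier transform) yields for any $f\in\D(\Mminp)$
\begin{align}
\big\|\tfrac12\,\Mminp f\big\|^2
= \big\langle f,\ \tfrac14(\mathbf M^-\cdot\hat{\vec p})^2 f\big\rangle
= \big\langle f,\ \hat p^2\,P_+ f\big\rangle
= \big\langle P_+ f,\ \hat p^2\,P_+ f\big\rangle
= \|-i\nabla_{\vec r}\,P_+ f\|^2,
\end{align}
where in the third step I use that $P_+$ is an orthogonal projection commuting with $\hat p^2$ (both are Fourier multipliers), and the finiteness of the left-hand side guarantees $P_+ f$ lies in the form domain of $\hat p^2$, i.e. $\hat p^2 P_+ f$ is well-defined as a vector when combined with the next part. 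This proves part~\ref{lem_unsualstructure_a}).

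For part~\ref{lem_unsualstructure_b}), I would argue as follows. By part~\ref{lem_unsualstructure_a}), if $f\in\D(\Mminp)$ then $\||\hat{\vec p}|\,P_+ f\|=\tfrac12\|\Mminp f\|<\infty$, and since $P_+ f\in\Hrel_+$ this shows $P_+\D(\Mminp)\subseteq(\C^{16}\otimes H^1(\R^3,\dd r))\cap\Hrel_+=\D_+$. For the reverse inclusion, take $g\in\D_+$; then $g=P_+ g$ and $|\hat{\vec p}|\,g\in\Hrel$, so by \eqref{eq_plan_key} read backwards, $\|\Mminp g\|=\|2|\hat{\vec p}|\,P_+ g\|=2\||\hat{\vec p}|\,g\|<\infty$, hence $g\in\D(\Mminp)$ and $g=P_+g\in P_+\D(\Mminp)$; this gives \eqref{eq_domainP_+partofMminp}. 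Self-adjointness of $\Mminp\upharpoonright\D_+$ on $\Hrel_+$ then follows because in Fourier space it is multiplication by the Hermitian matrix-valued symbol $\mathbf M^-\cdot\vec p$ restricted to the range of the projection $P_+(\vec p)$, on which (by \eqref{eq_diagmatrixmrel}) it acts as $2|\vec p|$ times a unitary involution on an $8$-dimensional space with no kernel; the maximal multiplication operator associated to a Hermitian symbol on its natural domain is self-adjoint by the same argument as in Theorem~\ref{thm_selfadjointnessfreetwobody}a), and the stated domain is exactly that natural domain intersected with $\Hrel_+$.

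I expect the only real subtlety to be bookkeeping about where the various vectors live: making sure that $\hat p^2 P_+ f$ in part~\ref{lem_unsualstructure_a}) is interpreted correctly (it need not lie in $\Hrel$ as an $L^2$ function of $\vec r$ unless $P_+f\in H^2$, but $\||\hat{\vec p}|P_+f\|$ is finite, which is all that is claimed), and that the orthogonal decomposition $\Hrel=\Hrel_+\oplus\Hrel_-$ from Proposition~\ref{prop_justificationPplusmin} is compatible with the Fourier multiplier calculus. None of this is deep; the identity \eqref{eq_plan_key} does essentially all the work, and everything else is a careful translation between $\Mminp$ and $2|\hat{\vec p}|P_+$ on the subspace $\Hrel_+$.
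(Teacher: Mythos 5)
Your proof is correct, and it reaches the same destination as the paper via a slightly more direct route. Both proofs hinge on the algebraic identity $(\mathbf{M}^-\cdot\vec p)^2 = 4 p^2 P_+(\vec p)$ (equivalently $(\tfrac12\mathbf{M}^-\cdot\vec p)^2 = p^2 P_+(\vec p)$), which you compute in the same way from the Clifford relations. Where the two arguments diverge is in how this identity is promoted to an operator statement. The paper first establishes the norm identity in part~a) for $g\in\C^{16}\otimes H^2(\R^3,\dd r)$ by integrating by parts in position space --- the $H^2$ regularity is needed precisely to kill the boundary terms --- and then uses that $\C^{16}\otimes H^2$ is a core for $\Mminp$ (Theorem~\ref{thm_selfadjointnessfreetwobody}\ref{thm_selfadjointnessfreetwobody_e})) together with closedness of $-i\nabla_{\vec r}P_+$ to extend the equality to all of $\D(\Mminp)$. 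You bypass this two-step core-and-closure argument by going straight to Fourier space and invoking Plancherel: the equality $\|\tfrac12\Mminp f\|^2 = \int p^2|P_+(\vec p)\hat f(\vec p)|^2\dd p = \|\,|\hat{\vec p}|\,P_+ f\|^2$ holds literally for every $f\in\D(\Mminp)$ without any intermediate regularization, which is cleaner. One small caution on wording: the intermediate expressions $\langle f, \hat p^2 P_+ f\rangle$ in your chain should be read as the Fourier-space integral $\int p^2 |\widehat{P_+f}(\vec p)|^2\dd p$, not as an $\Hrel$-inner product of $f$ with a vector $\hat p^2 P_+ f$ --- the latter need not lie in $\Hrel$ (that would require $P_+ f\in H^2$, which you do not have and do not need), and the parenthetical remark that ``$\hat p^2 P_+ f$ is well-defined as a vector'' is misleading; what is well-defined and finite is $\| |\hat{\vec p}| P_+ f \|$, as you in fact say in the next sentence. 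For part~b), your two-inclusion argument for $\D_+ = P_+\D(\Mminp)$ matches the paper's. For self-adjointness of $\Mminp\upharpoonright\D_+$ the paper uses the block-diagonal representation $\Mminp = \mathrm{diag}(\Mminp\upharpoonright\D_+,\mathbf 0)$ and cites Tretter's criterion that a diagonal matrix operator is self-adjoint iff its diagonal blocks are, transferring the already established self-adjointness of $\Mminp$; you argue instead directly that $\Mminp\upharpoonright\D_+$ is the maximal multiplication operator with Hermitian symbol on $\Hrel_+$. Both are valid; the paper's route is a one-line deduction from what is already proved, whereas yours requires checking that $\D_+$ is indeed the maximal domain inside $\Hrel_+$, which the identity $(\mathbf{M}^-\cdot\vec p)^2 = 4p^2 P_+(\vec p)$ supplies.
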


\begin{proof}
\begin{enumerate}[a)]
	\item A computation shows that for almost all $\vec{p} \in \R^3$ we have $(\frac{1}{2} \mathbf{M}^- \cdot \vec{p})^2 = p^2 P_+ (\vec{p})$, and hence, $(\frac12 \Mminp)^2 = \hat{p}^2 P_+$ holds on the intersection of their domains. Thus, for all $g \in \C^{16} \otimes H^2 (\R^3, \dd r)$ we have
	\begin{eqnarray}\label{eq_equalitynormnablaPplusMminp}
	\left\| 
	- i \nabla_{\vec{r}} P_+ g 
	\right\|^2 
	& = & 
	\vphantom{\frac{1}{2}} 
	\left\langle 
	- i \nabla_{\vec{r}} P_+ g, - i \nabla_{\vec{r}} P_+ g
	\right\rangle 
	\overset{(\ast)}{=} 
	\left\langle 
	P_+ g, \hat{p}^2 P_+ g 
	\right\rangle \nonumber \\
	& = & 
	\left\langle 
	P_+ g, \left( \frac{1}{2} \Mminp \right)^2 g 
	\right\rangle 
	\overset{(\ast \ast)}{=} 
	\left\langle 
	g, \left( \frac{1}{2} \Mminp \right)^2 g 
	\right\rangle \nonumber\\
	& = & 
	\left\| 
	\frac{1}{2} \Mminp \, g 
	\right\|^2 ,
	\end{eqnarray}
	where we used in $(\ast)$ that the boundary terms vanish since 
	$P_+ g \in \C^{16} \otimes H^2 (\R^3, \dd r)$ 
	and in $(\ast \ast)$ that 
	$P_+ \Mminp \, g = \Mminp \, g$. 
	
	Hence, claim a) already holds on $\C^{16} \otimes H^2 (\R^3, \dd r)$, which however is a core for $\Mminp$ by Theorem~\ref{thm_selfadjointnessfreetwobody}\ref{thm_selfadjointnessfreetwobody_e}). This means that for all $f \in \D (\Mminp)$ there exists a sequence $(f_n)_{n\in\N} \subset \C^{16} \otimes H^2 (\R^3, \dd r)$ that converges to $f$ in the graph norm of $\Mminp$. By Eq.~\eqref{eq_equalitynormnablaPplusMminp}, $(f_n)_{n\in\N}$ is then also a Cauchy sequence in the graph norm of $-i\nabla_{\vec{r}} P_+$. Now, 
	$- i \nabla_{\vec{r}} P_+ \colon \mathcal{D} (-i \nabla_{\vec{r}} P_+) \rightarrow \mathcal{H}^{\mathrm{rel}}$ 
	with domain 
	$\D(-i\nabla_{\vec{r}} P_+) = \{ f \in \Hrel |\, P_+ f \in \C^{16} \otimes H^1 (\R^3, \dd r) \}$ is closed, as $P_+$ is bounded and both $P_+$ and $- i \nabla_{\vec{r}}$ are closed. Thus, $(f_n)_{n\in\N}$ converges then also to $f$ with respect to the graph norm of $- i \nabla_{\vec{r}} P_+$. In conclusion, for all $f\in\D(\Mminp)$ we have 
	\begin{align}
	\left\|
	\frac12 \Mminp f
	\right\|
	=
	\lim_{n\rightarrow\infty}
	\left\|
	\frac12 \Mminp f_n
	\right\|
	=
	\lim_{n\rightarrow\infty}
	\left\|
	-i\nabla_{\vec{r}} P_+ f_n
	\right\|
	=
	\left\|
	-i\nabla_{\vec{r}} P_+ f
	\right\| .
	\end{align}
	
	\item As the projections $P_\pm$ are tailor-made for $\Mminp$, we obtain in the same manner as for $H^{\mathrm{rel}}$ in line~\eqref{eq_defHrel} the matrix representation 
	\begin{align} 
	\Mminp= 
	\left(
	\begin{array}{cc}
	\Mminp \upharpoonright \mathcal{D}_+ \quad & \mathbf{0} \\
	\mathbf{0}  & \mathbf{0} 
	\end{array}
	\right) .
	\end{align}
	Now, by \cite[Prop.~2.6.3, p.~144]{tretter}, $\Mminp$ is self-adjoint if and only if $\Mminp \upharpoonright \mathcal{D}_+$ and $\mathbf{0}$ (that is, the operators in the upper left and lower right corner) are self-adjoint. Hence, self-adjointness of $\Mminp$, guaranteed by Theorem~\ref{thm_selfadjointnessfreetwobody}\ref{thm_selfadjointnessfreetwobody_b}), proves self-adjointness of $\Mminp \upharpoonright \D_+$. 
	
	It remains to prove Eq.~\eqref{eq_domainP_+partofMminp}. Suppose $f \in P_+ \D (\Mminp)$. Then, $f = P_+ g$ for some $g \in \D (\Mminp)$. Since $g \in \D (\Mminp)$ and $\Mminp$ and $P_+$ commute,
	\begin{align} 
	\| \Mminp f \| = \| \Mminp P_+ g \| \leq \| \Mminp g \| < \infty
	\end{align}
	and so $f \in \D (\Mminp) \cap \Hrel_+$. By part a),
	\begin{align} 
	\| - i \nabla_{\vec{r}} f \| = \| - i \nabla_{\vec{r}} P_+ f \| = \| \frac12 \Mminp f \| < \infty 
	\end{align}
	and so $f \in \C^{16} \otimes H^1 (\R^3, \dd r)$. Therefore, we have 
	$f \in \left( \C^{16} \otimes H^1 (\R^3, \dd r) \right) \cap \Hrel_+$. Suppose conversely that 
	$f \in \left( \C^{16} \otimes H^1 (\R^3, \dd r) \right) \cap \Hrel_+$. 
	Then, $f = P_+ f$ and 
	$f \in \C^{16} \otimes H^1 (\R^3, \dd r) \subseteq \mathcal{D} (\Mminp)$, 
	which implies $f \in P_+ \mathcal{D} (\Mminp)$. 
	\qedhere
\end{enumerate}
\end{proof}

\begin{remark}\label{rem_unsualstructure}
We want to remark the following. 
\begin{enumerate}[a)]
	\item The matrix representation in the preceding proof gives an explicit form of $\D (\Mminp)$, namely 
	\begin{align}
	\D (\Mminp) = \D_+ \oplus \Hrel_- .
	\end{align}
	
	\item The remarkable structure of $\D (\Mminp)$ is the following. Since $\D (\Mminp) = \D_+ \oplus \Hrel_-$ and $\D_+ = (\C^{16} \otimes H^1 (\R^3,\dd r)) \cap \Hrel_+$ holds, an $f \in \D (\Mminp)$ must have $H^1$-regularity only in $\Hrel_+$. In $\Hrel_-$ however, no regularity---besides being square-integrable, of course---is required. 
	
	This amounts to the following phenomenon. Let $\psi (1,2)$ denote a two-particle state which lies in $H^1 (\R^3, \dd R) \otimes \D (\Mminp)$. Then, it is possible to construct $\psi (1,2)$ in such a way that the kinetic energy of each single particle is infinite, i.e., formally $E_{\mathrm{kin}} [i] = \infty$ for $i = 1,2$. These infinities, however, cancel and their sum, i.e., the total kinetic energy, is finite: $E_{\mathrm{kin}} [1] + E_{\mathrm{kin}} [2] < \infty$. We refer the interested reader to a forthcoming publication. 
\end{enumerate}
\end{remark}

\subsection{Self-adjoint extension of $H_{\mathrm{DC}}$}\label{section_self-adjointextension}
At first, we show that many standard techniques from the perturbation theory of self-adjoint operators are not applicable, already in the case of $H^{\mathrm{rel}} = \Mminp + \gamma V$. 

\begin{lemma}\label{lemma_Vnotrelbounded}
Let $\gamma \neq 0$ and let $0 < \kappa \leq 1$ be fixed. Then, $\gamma V$ is not relatively bounded by $\Mminp$.
\end{lemma}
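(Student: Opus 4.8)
The plan is to exhibit a sequence of functions along which the ratio $\|\gamma V f_n\| / (\|\Mminp f_n\| + \|f_n\|)$ diverges, thereby contradicting any relative bound. Since $\gamma \neq 0$, it suffices to show that $V$ is not relatively bounded by $\Mminp$. The key structural fact, established in Remark~\ref{rem_unsualstructure} (and Lemma~\ref{lem_unsualstructure}), is that $\D (\Mminp) = \D_+ \oplus \Hrel_-$, where $\D_+$ requires $H^1$-regularity but $\Hrel_-$ requires no regularity at all. This means I am free to pick $f_n \in \Hrel_-$ with $\Mminp f_n = 0$, so that the graph norm of $f_n$ reduces to $\|f_n\|$ alone, while $\|V f_n\|$ can be made arbitrarily large relative to $\|f_n\|$ by concentrating $f_n$ near the singularity $\vec r = 0$ of $|\vec r|^{-\kappa}$.

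The steps, in order, would be: First, recall from Proposition~\ref{prop_justificationPplusmin}\ref{prop_justificationPplusmin_a}) that $\Hrel_- = \Ker (\Mminp)$, so any $f \in \Hrel_-$ satisfies $\Mminp f = 0$; hence for such $f$ the relative-boundedness inequality $\|\gamma V f\| \leq a \|\Mminp f\| + b\|f\| = b\|f\|$ would force $\|V f\| \leq (b/|\gamma|)\|f\|$, i.e.\ $V$ bounded on $\Hrel_- \cap \D(V)$. Second, I would construct an explicit sequence in $\Hrel_-$ violating this: take a fixed nonzero constant spinor $v_0 \in \C^{16}$ lying in the range of $P_-(\vec p)$ for $\vec p$ in some positive-measure cone (possible since $P_-$ has rank $8$), and build $f_n$ supported (in Fourier space, or directly in $\vec r$-space via the kernel of $P_-$ from Section~\ref{section_prooftechnicalstuff}) so that $f_n \in \Hrel_-$ and $f_n(\vec r)$ looks like a bump of width $1/n$ centered at the origin; then $\|f_n\|^2 \sim n^{-3}$ while $\|V f_n\|^2 = \int |\vec r|^{-2\kappa} |f_n(\vec r)|^2 \dd r \gtrsim n^{2\kappa} \|f_n\|^2$, which diverges relative to $\|f_n\|$. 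Alternatively, and more cleanly, I could just invoke that $V$ is unbounded on $\Hrel_-$: since $\Hrel_-$ corresponds via the unitary $u(\hat{\vec p})$ to $\C^8 \otimes L^2(\R^3,\dd r)$ (Proposition~\ref{prop_kernmrelmcom}\ref{prop_justificationPplusmin_a}) and... ) and multiplication by $|\vec r|^{-\kappa}$ is an unbounded operator there, $P_- V P_-$ cannot be a bounded operator, contradicting $\|V f\| \le (b/|\gamma|)\|f\|$ on $\Hrel_- \cap \D(V)$.

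The main obstacle — and the only point requiring genuine care — is verifying that the approximating sequence can be chosen to actually lie in $\Hrel_-$ (i.e.\ that it is annihilated by $\Mminp$) while still being concentrated near $\vec r = 0$; this is a statement about the interplay between the Fourier multiplier $P_-(\vec p)$ and spatial localization. One must check that $P_-$ does not destroy the local singularity of a test function, which follows because $P_-$ is a bounded Fourier multiplier with a homogeneous-degree-zero symbol, hence a Calderón–Zygmund-type operator that preserves enough local concentration; a soft way around this is to take any $0 \neq g \in C_c^\infty$ near the origin, note $P_- g \in \Hrel_-$, and observe that $P_- g$ cannot vanish identically near $0$ while $\| |\vec r|^{-\kappa} P_-(\lambda \,\cdot\,) g\|/\|P_-(\lambda\,\cdot\,)g\| \to \infty$ under the rescaling $g \mapsto g(\lambda \cdot)$ as $\lambda \to \infty$, using that the symbol $P_-(\vec p)$ is scale-invariant so rescaling in $\vec r$ commutes with $P_-$. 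Once this localization point is settled, the divergence of the ratio is a one-line Hardy-type scaling estimate, and the lemma follows.
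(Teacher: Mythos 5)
Your central idea---exploit $\Ker(\Mminp)$, where the free operator contributes nothing to the graph norm---is exactly the paper's, and your rescaling argument using degree-zero homogeneity of $P_-(\vec p)$ could be made to work. But two of the intermediate routes you offer are not correct as stated. First, the claim that one can find a constant $v_0\in\C^{16}$ in the range of $P_-(\vec p)$ on a positive-measure cone ``since $P_-$ has rank $8$'' is unjustified: a family of rank-$8$ projections in $\C^{16}$ need not share any common vector in their ranges, and in any case Fourier support in a cone is incompatible with the spatial bump of width $1/n$ you then want to build. Second, the ``cleaner alternative'' is wrong: $u(\hat{\vec p})$ is a Fourier multiplier, so conjugating $V=|\vec r|^{-\kappa}$ by it produces a genuine pseudodifferential operator, not multiplication by $|\vec r|^{-\kappa}$ on $\C^8\otimes L^2(\R^3,\dd r)$; unboundedness of $|\vec r|^{-\kappa}$ on that space therefore tells you nothing about $P_- V P_-$.

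The paper's proof sidesteps all of this with a single explicit function and no limiting argument. There really is a nonzero $v_0\in\bigcap_{i}\Ker(\mathrm{M}^-_i)$, namely $v_0 = (0,-1,0,-1,1,0,1,0,0,1,0,1,-1,0,-1,0)^\top$, so $v_0 h\in\Ker(\Mminp)$ for any scalar $h$. Taking $h(\vec r) = e^{-|\vec r|^2}/|\vec r|^{3/2-\delta}$ with $0<\delta<\kappa$ gives $v_0 h\in\Hrel$ and $v_0 h\in\Ker(\Mminp)\subset\D(\Mminp)$, yet $\|V(v_0 h)\|^2 = \gamma^2|v_0|^2\int_{\R^3}|\vec r|^{-(3-2\delta+2\kappa)}e^{-2|\vec r|^2}\,\dd r =\infty$, so $v_0 h\notin\D(V)$. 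Hence the domain inclusion $\D(\Mminp)\subseteq\D(V)$ that relative boundedness requires already fails, which is sharper and simpler than a divergent-ratio sequence. If you do want the rescaling route, you must still verify that your seed $g$ satisfies $P_- g\ne 0$ and $P_- g\in\D(V)$; the proposal leaves both points unargued.
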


\begin{proof}
We chose a $\delta \in \R^+$ such that $0 < \delta < \kappa$. Next, we define $f \in \D (\Mminp)$ by 
\begin{align} 
f (\vec{r}) = 
(0, -1, 0, -1, 1, 0, 1, 0, 0, 1, 0, 1, -1, 0, -1, 0)^\top 
\cdot 
\frac{e^{- \left| \vec{r} \right|^2}}{|\vec{r}|^{3/2 - \delta}} .
\end{align}
Then, it holds that $\Mminp f = 0$ and so $f \in \Ker (\Mminp)$. But $f \not\in \D (\gamma V)$ as 
\begin{align}
\left\| \gamma V f \right\|^2
=
\int_{\R^3} 
\frac{\gamma^2}{|\vec{r}|^{3-2\delta+2\kappa}} e^{- 2 \left| \vec{r} \right|^2}
\dd r
\end{align}
does not converge except for $\gamma = 0$. This proves the lemma. 
\end{proof}

Next, we want to define a self-adjoint extension of $\HDC$. Before we can give its domain and action, we need to line out the various steps (or layers of self-adjointness) involved in the construction, and hence, the corresponding proof of this self-adjoint extension. 

As first step, we note that the $L^2$-parts of $\PMpl$ and of $H^{\mathrm{rel}} = \Mminp + \gamma V$ do not overlap, and therefore, we look only for a self-adjoint extension of $H^{\mathrm{rel}}$. This extension plus $\PMpl$ is then also a self-adjoint extension of $\HDC$. Since $H^{\mathrm{rel}}$ is a matrix operator, we have as second step the Frobenius-Schur factorization at our disposal. Therefore, we can use that important properties such as closedness and self-adjointness are encoded in the Schur complement. After a slight but important modification of $H^{\mathrm{rel}}$, we see that the Schur complement possesses a self-adjoint extension, denoted by $S_F$. This, in turn, paves the way to $H^{\mathrm{rel}}_F$, the self-adjoint extension of $H^{\mathrm{rel}}$, and finally, to $H_F$ which denotes the self-adjoint extension of $\HDC$ in relative and center-of-mass coordinates. 

We start with the just mentioned slight but important modification of $H^{\mathrm{rel}}$. We define the symmetric and bounded matrix 
\begin{align}\label{eq_defB}
B := 2 \beta \otimes \beta ,
\end{align}
recall Eq.~\eqref{eq_diracmatrices}, and consider $H^{\mathrm{rel}} + P_+ B P_+$. Recalling the matrix representation of $H^{\mathrm{rel}}$ from line~\eqref{eq_defHrel}, we find 
\begin{align}\label{eq_matrixrepresentationmodHrel}
H^{\mathrm{rel}} + P_+ B P_+
=
\left(
\begin{array}{cc}
\Mminp + P_+ B P_+ + P_+ \gamma V P_+ & \quad P_+ \gamma V P_-\\
P_- \gamma V P_+ & \quad P_- \gamma V P_-
\end{array}
\right)
\end{align}
which is well-defined on 
\begin{align}
\D (H^{\mathrm{rel}}) = \D_+ \oplus (\mathcal{D} (V) \cap \Hrel_-) ,
\end{align}
where $\D_+ = ( \C^{16} \otimes H^1 (\R^3, \dd r) ) \cap \Hrel_+$, introduced in \eqref{eq_def_domain_Hrel}. 

In the following lemma, we examine the properties of the combination of the matrices $P_+ (\vec{p}) B P_+ (\vec{p})$ and $\mathbf{M}^- \cdot \vec{p}$. 

\begin{lemma}\label{lemma_propertiesMminB}
The following statements hold for almost all $\vec{p} \in \R^3$: 
\begin{enumerate}[a)]
	\item \label{lemma_propertiesMminB_a}
	$B P_+ (\vec{p}) = P_+ (\vec{p}) B$, 
	
	\item \label{lemma_propertiesMminB_b}
	$ \left( \mathbf{M}^- \cdot \vec{p} + B P_+ (\vec{p}) \right)^2 = 4 \left( p^2 + 1 \right) P_+ (\vec{p})$, and 
	
	\item \label{lemma_propertiesMminB_c}
	$\left| \mathbf{M}^-  \cdot \vec{p} + B P_+ (\vec{p}) \right| = 2 \left( p^2 + 1 \right)^{1/2} P_+ (\vec{p})$. 
\end{enumerate}
\end{lemma}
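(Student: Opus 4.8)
The three claims are purely algebraic statements about $16\times 16$ matrices, so the plan is to verify them directly using the Clifford relations \eqref{clifford_1}, the definition \eqref{eq_def_Mpl_Mmin} of $\mathbf{M}^-$, the definition \eqref{eq_tauofp} of $\tau(\vec p)$ together with $\tau(\vec p)^2=\mathbf{1}_{16}$, and the definition \eqref{eq_defB} of $B=2\beta\otimes\beta$. For part \ref{lemma_propertiesMminB_a}) I would check that $B$ commutes with $\tau(\vec p)$; since $\tau(\vec p) = -(\vec\alpha\cdot\vec p)\otimes(\vec\alpha\cdot\vec p)/p^2$ and $B$ is a tensor product, this reduces to checking that $\beta$ commutes with $\vec\alpha\cdot\vec p$ — but the anticommutation relation $\alpha_i\beta+\beta\alpha_i=0$ gives $\beta(\vec\alpha\cdot\vec p) = -(\vec\alpha\cdot\vec p)\beta$, so each factor picks up a sign, and the two signs multiply to $+1$. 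Hence $B\tau(\vec p)=\tau(\vec p)B$, and therefore $BP_+(\vec p)=P_+(\vec p)B$ since $P_+(\vec p)=\tfrac12(\mathbf{1}_{16}+\tau(\vec p))$.

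For part \ref{lemma_propertiesMminB_b}) I would expand the square $\left(\mathbf{M}^-\cdot\vec p + BP_+(\vec p)\right)^2$ into four terms. The first term $(\mathbf{M}^-\cdot\vec p)^2$ equals $4p^2 P_+(\vec p)$ — this is the matrix identity $(\tfrac12\mathbf{M}^-\cdot\vec p)^2 = p^2 P_+(\vec p)$ already recorded in the proof of Lemma~\ref{lem_unsualstructure}\ref{lem_unsualstructure_a}). The last term $(BP_+(\vec p))^2$ equals $B^2 P_+(\vec p)$ by part \ref{lemma_propertiesMminB_a}) and the fact that $P_+(\vec p)$ is a projection; and $B^2 = 4(\beta\otimes\beta)^2 = 4\,\mathbf{1}_{16}$ since $\beta^2=\mathbf{1}_4$, so this term is $4P_+(\vec p)$. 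The two cross terms $\mathbf{M}^-\cdot\vec p\,BP_+(\vec p) + BP_+(\vec p)\,\mathbf{M}^-\cdot\vec p$ must be shown to vanish; here I would use $P_+(\vec p)\,\mathbf{M}^-\cdot\vec p = \mathbf{M}^-\cdot\vec p$ and $\mathbf{M}^-\cdot\vec p\,P_+(\vec p) = \mathbf{M}^-\cdot\vec p$ (from Proposition~\ref{prop_justificationPplusmin}\ref{prop_justificationPplusmin_c}), at the matrix level) together with part \ref{lemma_propertiesMminB_a}), to reduce the cross terms to $B(\mathbf{M}^-\cdot\vec p) + (\mathbf{M}^-\cdot\vec p)B$, which I then show equals zero by a Clifford computation: $B = 2\beta\otimes\beta$ anticommutes with $\mathbf{M}^-\cdot\vec p = (\vec\alpha\cdot\vec p)\otimes\mathbf{1}_4 - \mathbf{1}_4\otimes(\vec\alpha\cdot\vec p)$, because $\beta$ anticommutes with $\vec\alpha\cdot\vec p$ in the factor where $\vec\alpha\cdot\vec p$ sits and commutes trivially with $\mathbf{1}_4$ in the other factor. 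Collecting the three surviving contributions yields $4p^2 P_+(\vec p) + 4P_+(\vec p) = 4(p^2+1)P_+(\vec p)$.

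For part \ref{lemma_propertiesMminB_c}) I would argue that $A:=\mathbf{M}^-\cdot\vec p + BP_+(\vec p)$ is a Hermitian matrix (both summands are Hermitian: $\mathbf{M}^-\cdot\vec p$ because the $\alpha_i$ are Hermitian, and $BP_+(\vec p)$ because $B$ and $P_+(\vec p)$ are commuting Hermitian matrices by part \ref{lemma_propertiesMminB_a})). Hence $|A| = (A^2)^{1/2}$, and by part \ref{lemma_propertiesMminB_b}) this is $\bigl(4(p^2+1)P_+(\vec p)\bigr)^{1/2}$. Since $P_+(\vec p)$ is an orthogonal projection, its positive square root is itself, so $|A| = 2(p^2+1)^{1/2}P_+(\vec p)$, as claimed. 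I expect the only mild subtlety is bookkeeping the tensor-factor signs in the anticommutation arguments of parts \ref{lemma_propertiesMminB_a}) and \ref{lemma_propertiesMminB_b}) — everything else is a direct consequence of already-established identities.
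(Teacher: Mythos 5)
Your proposal is correct and follows essentially the same approach as the paper: all three parts are verified by direct matrix algebra using the Clifford relations, the identity $P_+(\vec p)\,\mathbf{M}^-\cdot\vec p=\mathbf{M}^-\cdot\vec p\,P_+(\vec p)=\mathbf{M}^-\cdot\vec p$, and the commutation $[B,P_+(\vec p)]=0$. The only cosmetic difference is in part c): the paper extracts the square root by conjugating $P_+(\vec p)$ with the diagonalizing unitary $u(\vec p)$, whereas you observe directly that the positive square root of $4(p^2+1)P_+(\vec p)$ is $2(p^2+1)^{1/2}P_+(\vec p)$ because $P_+(\vec p)$ is an orthogonal projection; both are valid and your shortcut is marginally cleaner.
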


\begin{proof}
\begin{enumerate}[a)]
	\item We get by direct computation $B P_+ (\vec{p}) = P_+ (\vec{p}) B$ for almost all $\vec{p} \in \R^3$ since $(\vec{\alpha} \cdot \vec{p}) \beta = - \beta (\vec{\alpha} \cdot \vec{p})$ by Eq.~\eqref{clifford_1}. 
	
	\item Using $P_+ (\vec{p}) \mathbf{M}^- \cdot \vec{p} = \mathbf{M}^- \cdot \vec{p} P_+ (\vec{p}) = \mathbf{M}^- \cdot \vec{p}$, we compute for almost all $\vec{p} \in \R^3$ $(\mathbf{M}^-\cdot \vec{p} + B P_+ (\vec{p}))^2 = 4 ( p^2 + 1 ) P_+ (\vec{p}) $ where again we made use of the anticommutation relations in Eq.~\eqref{clifford_1}. 
	
	\item By part \ref{lemma_propertiesMminB_a}), we have $P_+ (\vec{p}) B P_+ (\vec{p}) = B P_+ (\vec{p})$, and thus, $\mathbf{M}^- \cdot \vec{p} + B P_+ (\vec{p})$ is Hermitian. We compute for almost all $\vec{p} \in \R^3$ 
	\begin{align}
	\left| \mathbf{M}^- \cdot \vec{p} + B P_+ (\vec{p}) \right|
	& = 
	\left(
	\left( \mathbf{M}^- \cdot \vec{p} + B P_+ (\vec{p}) \right)^\dag \left( \mathbf{M}^- \cdot \vec{p} + B P_+ (\vec{p}) \right)
	\right)^{1/2} \nonumber \\
	& = 
	\left(
	4 \left( p^2 + 1 \right) P_+ (\vec{p}) 
	\right)^{1/2} 
	\end{align}
	where we used part \ref{lemma_propertiesMminB_b}) in the last step. Now, $P_+ (\vec{p})$ is a Hermitian matrix, and therefore, there exists a unitary matrix $u (\vec{p})$---the same matrix $u(\vec{p})$ as in lines~\eqref{eq_diagmatrixmrel} and \eqref{eq_diagmatrixPmin}---such that we have for almost all $\vec{p} \in \R^3$
	\begin{align}
	u (\vec{p}) P_+ (\vec{p}) u (\vec{p})^\dag 
	= 
	\left(
	\begin{array}{cc}
	\mathbf{1}_8 & \\
	& \mathbf{0}_8
	\end{array}
	\right) .
	\end{align}
	Then, 
	\begin{align}
	\left(
	4 \left( p^2 + 1 \right) P_+ (\vec{p}) 
	\right)^{1/2} 
	& = 
	u (\vec{p})^\dag 
	\;
	2 \left( p^2 + 1 \right)^{1/2}
	\left(
	\begin{array}{cc}
	\mathbf{1}_8 & \\
	& \mathbf{0}_8
	\end{array}
	\right)
	u (\vec{p}) \nonumber \\
	& = 
	2 \left( p^2 + 1 \right)^{1/2} P_+ (\vec{p}) 
	\end{align}
	for almost all $\vec{p} \in \R^3$ which concludes the proof. 
	\qedhere
\end{enumerate}
\end{proof}

In the following, we will make frequent use of the operator $\Mminp + B P_+$ in the underlying Hilbert space $\Hrel_+$. Thus, we introduce the abbreviation 
\begin{align}
A_0 \equiv \Mminp + B P_+ .
\end{align}
First, we want to relate the different interaction potentials, distinguished by the exponent $0 < \kappa \leq 1$, to $A_0$. Recall $\D_+ = ( \C^{16} \otimes H^1 (\R^3, \dd r) ) \cap \Hrel_+$ from line~\eqref{eq_def_Dplus}. 

\begin{lemma}\label{lemma_modifiedHerbst}
Let $0 < \kappa \leq 1$ and let $M_{\kappa} > 0$ be given by
\begin{align}
M_{\kappa} 
= 
2^{- \kappa}  \frac{\Gamma \left( \frac{3}{4} - \frac{\kappa}{2} \right)}{\Gamma \left( \frac{3}{4} + \frac{\kappa}{2} \right)} . 
\end{align}
\begin{enumerate}[a)]
	\item \label{lemma_modifiedHerbst_a}
	For all $f \in \D_+$, it holds that
	\begin{align}
	\left\| | \cdot |^{-\kappa} f \right\|
	\leq 
	\frac{M_{\kappa}}{2} \left\| A_0 f \right\| .
	\end{align}
	
	\item \label{lemma_modifiedHerbst_b}
	For all $f \in ( \C^{16} \otimes H^{1/2} (\R^3, \dd r) ) \cap \Hrel_+$, it holds that
	\begin{align}
	\left\langle f, | \cdot |^{-\kappa} f \right\rangle
	\leq 
	\frac{M_{\kappa/2}^2}{2} \left\langle f, \left| A_0 \right| f \right\rangle .
	\end{align}
\end{enumerate}
\end{lemma}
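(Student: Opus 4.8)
The key input is the one-particle Herbst-type inequality, which states that on $\C^{4} \otimes H^{1}(\R^{3})$ (for part a)) or $\C^{4} \otimes H^{1/2}(\R^{3})$ (for part b)) one has $\| |\vec r|^{-\kappa} g \| \leq M_{\kappa} \| |\hat{\vec p}|^{\kappa} g \|$ and the form version $\langle g, |\vec r|^{-\kappa} g \rangle \leq M_{\kappa}\langle g, |\hat{\vec p}|^{\kappa} g\rangle$, with the sharp constant $M_{\kappa}$ given by the stated $\Gamma$-quotient (Herbst's inequality / Kato's inequality with best constant). The plan is to reduce both parts to this known inequality applied to the sixteen components simultaneously, exploiting Lemma~\ref{lemma_propertiesMminB}.

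For part a): let $f \in \D_{+}$, so $f = P_{+} f$ and $f \in \C^{16} \otimes H^{1}$. By Lemma~\ref{lemma_propertiesMminB}\ref{lemma_propertiesMminB_b}), working in Fourier space, $(\mathbf{M}^{-}\cdot\vec p + BP_{+}(\vec p))^{2} = 4(p^{2}+1)P_{+}(\vec p)$, and since $P_{+}(\vec p)\hat f(\vec p) = \hat f(\vec p)$ we get $\|A_{0}f\|^{2} = 4\int (p^{2}+1)\,|\hat f(\vec p)|^{2}\dd p = 4\| (\hat p^{2}+1)^{1/2} f\|^{2} \geq 4\| |\hat{\vec p}|^{\kappa} f\|^{2}$, where the last inequality uses $p^{2}+1 \geq |p|^{2\kappa}$ for all $\vec p$ and $0<\kappa\le 1$ (which follows, e.g., by considering $|p|\le 1$ and $|p|\ge 1$ separately). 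Now apply the componentwise Herbst inequality: since $|\vec r|^{-\kappa}$ acts as a scalar multiplication commuting with the $\C^{16}$ structure, $\| |\vec r|^{-\kappa} f\| \leq M_{\kappa}\| |\hat{\vec p}|^{\kappa} f\| \leq \tfrac{M_{\kappa}}{2}\|A_{0}f\|$, which is the claim. One should remark that $\C^{16}\otimes H^{1}\subseteq \C^{16}\otimes H^{\kappa}$, so the Herbst inequality is legitimately applicable.

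For part b): let $f \in (\C^{16}\otimes H^{1/2})\cap\Hrel_{+}$. By Lemma~\ref{lemma_propertiesMminB}\ref{lemma_propertiesMminB_c}), $|A_{0}| = |\mathbf{M}^{-}\cdot\hat{\vec p} + BP_{+}|$ acts in Fourier space as multiplication by $2(p^{2}+1)^{1/2}P_{+}(\vec p)$, so $\langle f, |A_{0}| f\rangle = 2\int (p^{2}+1)^{1/2}\,|\hat f(\vec p)|^{2}\dd p = 2\langle f, (\hat p^{2}+1)^{1/2} f\rangle \geq 2\langle f, |\hat{\vec p}|^{\kappa} f\rangle$, again using $(p^{2}+1)^{1/2}\geq |p|^{\kappa}$ for $0<\kappa\le 1$. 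Then the form-version Herbst inequality applied componentwise with exponent $\kappa$ — but note the stated constant is $M_{\kappa/2}^{2}$, not $M_{\kappa}$ — suggests the intended route is slightly different: one writes $\langle f, |\vec r|^{-\kappa} f\rangle = \| |\vec r|^{-\kappa/2} f\|^{2}$ and applies the \emph{operator} Herbst inequality with exponent $\kappa/2$, giving $\| |\vec r|^{-\kappa/2} f\|^{2} \leq M_{\kappa/2}^{2}\,\| |\hat{\vec p}|^{\kappa/2} f\|^{2} = M_{\kappa/2}^{2}\langle f, |\hat{\vec p}|^{\kappa} f\rangle \leq \tfrac{M_{\kappa/2}^{2}}{2}\langle f, |A_{0}| f\rangle$. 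This requires $f\in\C^{16}\otimes H^{\kappa/2}$, which holds since $\kappa/2 \le 1/2$.

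The main obstacle is not conceptual but bookkeeping: one must be careful that the Herbst inequality is being invoked with the correct exponent ($\kappa$ in part a), $\kappa/2$ in part b)) and on the correct Sobolev space, and that the reduction $(p^{2}+1)^{s}\geq |p|^{2s\kappa_{0}}$-type bounds are stated cleanly; the passage from $\D_{+}$ (or $H^{1/2}\cap\Hrel_{+}$) to the larger space where Herbst applies must be explicitly justified by Sobolev embedding. A secondary point to verify is that $|A_{0}|$ genuinely has the claimed Fourier multiplier — this is exactly Lemma~\ref{lemma_propertiesMminB}\ref{lemma_propertiesMminB_c}), so it is already available, but one should note that $|A_{0}|$ is defined on $\Hrel_{+}$ and that $P_{+}(\vec p)\hat f = \hat f$ removes the projector from all the quadratic forms above.
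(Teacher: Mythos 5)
Your proof is correct and takes essentially the same route as the paper: reduce to the formula $|A_0| = 2(\hat p^2+1)^{1/2}P_+$ from Lemma~\ref{lemma_propertiesMminB}\ref{lemma_propertiesMminB_c}) and invoke the sharp Kato--Herbst inequality (with exponent $\kappa$ for part a) and $\kappa/2$ for part b)). The only cosmetic difference is that you use the homogeneous form $\||\cdot|^{-\kappa}|\hat{\vec p}|^{-\kappa}\|\le M_\kappa$ together with $|p|^{2\kappa}\le p^2+1$, whereas the paper cites Herbst's Theorem 2.5 directly in the inhomogeneous form $\||\cdot|^{-\kappa}(\hat p^2+1)^{-\kappa/2}\|\le M_\kappa$ and then uses $(p^2+1)^{\kappa/2}\le(p^2+1)^{1/2}$; both yield the identical constant and estimate.
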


\begin{proof}
\begin{enumerate}[a)]
	\item By {\cite[Theorem~2.5]{Herbst1977}}, for all $h \in \C^{16} \otimes \mathcal{S} (\R^3)$ we have the bound 
	\begin{align}\label{eq_herbstineq}
	\left\| 
	| \cdot |^{-\kappa} (\hat{p}^2 + 1)^{- \kappa / 2} h 
	\right\| 
	\leq 
	M_{\kappa} \| h \| . 
	\end{align}
	This inequality extends to all $h \in \Hrel_+$. Now, for all $f \in \D_+$, there exists a $g \in \Hrel_+$ such that $f = (\hat{p}^2 + 1)^{-\kappa/2} g$. Thus, for all $f \in \D_+$ we can compute 
	\begin{align}\label{eq_modifiedHerbst_1}
	\left\|
	|\cdot|^{-\kappa} f 
	\right\|
	=
	\left\|
	|\cdot|^{-\kappa} (\hat{p}^2 + 1)^{-\kappa/2} g 
	\right\| 
	\leq
	M_\kappa \| g \| 
	=
	M_\kappa 
	\left\|
	(\hat{p}^2 + 1)^{\kappa/2} f
	\right\| .
	\end{align}
	With part \ref{lemma_propertiesMminB_c}) of Lemma~\ref{lemma_propertiesMminB}, for all $f \in \D_+$ we obtain 
	\begin{align}\label{eq_modifiedHerbst_2}
	\left\|
	|\cdot|^{-\kappa} f 
	\right\|
	& \leq 
	\frac{M_\kappa}{2}
	\left\|
	2 (\hat{p}^2 + 1)^{\kappa/2} f
	\right\| 
	\leq
	\frac{M_\kappa}{2}
	\left\|
	2 (\hat{p}^2 + 1)^{1/2} P_+ f
	\right\| \nonumber \\
	& = 
	\frac{M_\kappa}{2}
	\left\|
	\left| \Mminp + B P_+ \right| f
	\right\| 
	=
	\frac{M_\kappa}{2}
	\left\|
	A_0 f
	\right\| ,
	\end{align}
	which proves the claim. 
	
	\item Due to \eqref{eq_modifiedHerbst_1}, for all $f \in \D_+$ we find 
	\begin{align}
	\left\langle
	f, |\cdot|^{-\kappa} f
	\right\rangle
	& = 
	\left\|
	|\cdot|^{-\kappa/2} f
	\right\|^2 
	\leq
	M_{\kappa/2}^2
	\left\|
	(\hat{p}^2 + 1)^{\kappa/4} P_+ f
	\right\|^2 \nonumber \\
	& = 
	M_{\kappa/2}^2
	\left\langle
	f, (\hat{p}^2 + 1)^{1/2} P_+ f
	\right\rangle 
	=
	\frac{M_{\kappa/2}^2}{2}
	\left\langle
	f, 2 (\hat{p}^2 + 1)^{1/2} P_+ f
	\right\rangle \nonumber \\
	& = 
	\frac{M_{\kappa/2}^2}{2}
	\left\langle
	f, \left| A_0 \right| f 
	\right\rangle .
	\end{align}
	This computation extends to all $f \in ( \C^{16} \otimes H^{1/2} (\R^3, \dd r) ) \cap \Hrel_+$ since $\C^{16} \otimes H^{1} (\R^3, \dd r)$ is a core of $\langle	\cdot, (\hat{p}^2 + 1)^{1/2} \cdot \rangle$ (see \cite[Theorem~7.14]{liebloss}). 
	\qedhere
\end{enumerate}
\end{proof}

Part \ref{lemma_propertiesMminB_a}) of Lemma~\ref{lemma_propertiesMminB} says that $B$ leaves $\Hrel_+$ invariant. Consequently, we can define the operator
\begin{align}
A :=
A_0 + P_+ \gamma V P_+ 
\end{align}
in the underlying Hilbert space $\Hrel_+$ with domain 
\begin{align}
\D (A) = \D_+ = \left( \C^{16} \otimes H^1 (\R^3, \dd r) \right) \cap \Hrel_+ .
\end{align}

The next lemma provides important properties of $A$ in order to obtain a self-adjoint extension of $H^{\mathrm{rel}} + P_+ B P_+$ later on in Theorem~\ref{thm_self-adjointextensionHrel}. 

\begin{lemma}\label{lemma_Asa0resolvent}
Let $0 < \kappa \leq 1$ and let $M_\kappa$ be given as in Lemma~\ref{lemma_modifiedHerbst}. Moreover, let $|\gamma| M_\kappa < 2$. Then, 
\begin{enumerate}[a)]
	\item $A$ is self-adjoint on $\D_+$, and 
	
	\item $0 \in \rho (A)$. 
\end{enumerate}
\end{lemma}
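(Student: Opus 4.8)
The plan is to realize $A = A_0 + P_+ \gamma V P_+$ as a Kato--Rellich perturbation of the self-adjoint operator $A_0$ restricted to $\Hrel_+$. First I would record that $A_0 = \Mminp + B P_+$, viewed as an operator on $\Hrel_+$ with domain $\D_+$, is self-adjoint: indeed, by Lemma~\ref{lemma_propertiesMminB}\ref{lemma_propertiesMminB_c}), $|A_0| = 2(\hat p^2+1)^{1/2} P_+$ on $\Hrel_+$, so $A_0$ is (up to the bounded unitary phase coming from its polar decomposition) unitarily equivalent in Fourier space to multiplication by a Hermitian matrix symbol with eigenvalues $\pm 2(p^2+1)^{1/2}$ on the eight-dimensional range of $P_+(\vec p)$; in particular $A_0$ is self-adjoint on $\D_+$ and $0\in\rho(A_0)$ with $\|A_0^{-1}\|\le 1/2$. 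The key analytic input is then Lemma~\ref{lemma_modifiedHerbst}\ref{lemma_modifiedHerbst_a}): for all $f\in\D_+$,
\begin{align}
\left\| P_+ \gamma V P_+ f \right\|
= \left\| \gamma\, |\cdot|^{-\kappa} P_+ f \right\|
\le |\gamma|\, \frac{M_\kappa}{2} \left\| A_0 P_+ f \right\|
= |\gamma|\, \frac{M_\kappa}{2} \left\| A_0 f \right\| ,
\end{align}
where I used $P_+ f = f$ on $\Hrel_+$, boundedness of $P_+$ with norm one (so $\|P_+ \gamma V P_+ f\|\le \|\gamma V P_+ f\|$), and that $A_0$ commutes with $P_+$. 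Since $|\gamma| M_\kappa < 2$ by hypothesis, the relative bound $a := |\gamma| M_\kappa/2$ satisfies $a<1$, and $P_+\gamma V P_+$ is symmetric on $\D_+$ (it is $\gamma$ times a symmetric multiplication operator sandwiched by the orthogonal projection $P_+$). The Kato--Rellich theorem then yields part~a): $A$ is self-adjoint on $\D(A_0)\cap\Hrel_+ = \D_+$.

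For part~b), I would argue that $0$ lies in the resolvent set of $A$. Because $a<1$, one can improve the relative bound to a genuine resolvent estimate: for any $f\in\D_+$,
\begin{align}
\| A f \| \ge \| A_0 f \| - \| P_+\gamma V P_+ f \| \ge (1-a)\,\| A_0 f \| \ge 2(1-a)\,\| f \| ,
\end{align}
using $\|A_0 f\|\ge \| |A_0| f\| \ge 2\|(\hat p^2+1)^{1/2}P_+ f\| \ge 2\|f\|$ on $\Hrel_+$. Hence $A$ is injective with bounded inverse on its range, and since $A$ is self-adjoint its range is closed and equals $\Hrel_+$; therefore $0\in\rho(A)$ with $\|A^{-1}\|\le \bigl(2(1-a)\bigr)^{-1}$.

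The only genuinely delicate point is the self-adjointness of $A_0$ on $\Hrel_+$ with the precise domain $\D_+$ and the identification $\D(A_0)\cap\Hrel_+ = \D_+$; this is exactly what Lemma~\ref{lem_unsualstructure}\ref{lem_unsualstructure_b}) provides, since $\Mminp\!\upharpoonright\!\D_+$ is self-adjoint on $\D_+ = (\C^{16}\otimes H^1(\R^3,\dd r))\cap\Hrel_+$ and $BP_+$ is a bounded symmetric perturbation preserving $\Hrel_+$ (Lemma~\ref{lemma_propertiesMminB}\ref{lemma_propertiesMminB_a})), hence adding it leaves the domain of self-adjointness unchanged by Kato--Rellich. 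Everything else is routine once that domain statement and the Herbst-type bound of Lemma~\ref{lemma_modifiedHerbst} are in hand; in particular the constant $M_\kappa$ (rather than $M_{\kappa/2}$) is what appears here precisely because part a) uses the operator-norm bound \ref{lemma_modifiedHerbst_a}) and not the form bound \ref{lemma_modifiedHerbst_b}).
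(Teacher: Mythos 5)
Your proof of part a) follows essentially the same route as the paper: self-adjointness of $A_0$ on $\D_+$ via Lemma~\ref{lem_unsualstructure}\ref{lem_unsualstructure_b}) plus bounded symmetric perturbation by $BP_+$, then the Herbst-type relative bound from Lemma~\ref{lemma_modifiedHerbst}\ref{lemma_modifiedHerbst_a}) with relative bound $|\gamma|M_\kappa/2 < 1$, then Kato--Rellich. (One cosmetic slip: the first line of your chain should have $\le$ rather than $=$, since $\|P_+ g\|\le\|g\|$; you say as much in the parenthetical, so this is not a gap.)

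For part b) you take a genuinely different and somewhat more elementary route. The paper first constructs $A_0^{-1}$ explicitly as $\tfrac14 A_0(\hat p^2+1)^{-1}$, establishes $\|A_0^{-1}\|\le 1/2$, and then invokes a stability theorem of Kato (\cite[Theorem~IV.1.16]{kato_perturbation}) together with a lemma of Schmincke to conclude both bounded invertibility and surjectivity of $A$. You instead derive the coercivity estimate
\begin{align}
\|Af\|\ \ge\ \|A_0 f\|-\|P_+\gamma V P_+ f\|\ \ge\ (1-a)\|A_0 f\|\ \ge\ 2(1-a)\|f\|,\qquad a=\tfrac{|\gamma|M_\kappa}{2}<1,
\end{align}
using $A_0^2=4(\hat p^2+1)P_+$, and then observe that a self-adjoint operator $A$ with $\|Af\|\ge c\|f\|$ for some $c>0$ is injective, has closed range (as $A$ is closed), and has dense range since $\operatorname{ran}(A)^\perp=\Ker(A^*)=\Ker(A)=\{0\}$; hence $0\in\rho(A)$ with $\|A^{-1}\|\le(2(1-a))^{-1}$. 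This avoids both the explicit inverse formula and the appeal to Kato's perturbation theorem, at the cost of a slightly less precise norm bound on $A^{-1}$ (which the present lemma does not need). Both arguments are correct; yours is more self-contained, while the paper's approach also hands you the explicit form of $A_0^{-1}$, which is not reused elsewhere. (Minor aside: in your chain you wrote $\|A_0 f\|\ge\||A_0|f\|$; in fact this is an equality since $\Ker(A_0)\cap\Hrel_+=\{0\}$, but the weaker direction suffices.)
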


\begin{proof}
\begin{enumerate}[a)]
	\item Since $B P_+$ is symmetric and bounded, $A_0$ is self-adjoint on $\D_{+}$ by Lemma~\ref{lem_unsualstructure}\ref{lem_unsualstructure_b}). We have for all $f \in \D_+$ 
	\begin{align}\label{eq_relativeboundedness}
	\left\| P_+ \gamma V P_+ f \right\| 
	\leq 
	\left\| \gamma V f \right\| 
	=
	\vphantom{\frac{1}{2}} 
	\left\| 
	\frac{\gamma}{| \cdot |^\kappa} f 
	\right\| 
	\leq 
	\vphantom{\frac{1}{2}} 
	\frac{|\gamma| M_\kappa}{2}
	\left\|
	A_0 f
	\right\| 
	\end{align}
	where we used Lemma~\ref{lemma_modifiedHerbst}\ref{lemma_modifiedHerbst_a}) in the last estimate. For $|\gamma| M_\kappa  < 2$, the Kato-Rellich theorem now implies self-adjointness of $A$ on $\D_+$. 
	
	\item First, we prove that $A_0$ has a bounded inverse $A_0^{-1} \colon \Hrel_+ \rightarrow \D_+$. For all $f \in \Hrel_+$, we find $(\hat{p}^2 + 1)^{-1} f \in \C^{16} \otimes H^2 (\R^3, \dd r)$ and $A_0 (\hat{p}^2 + 1)^{-1} f \in \D_+$ with 
	\begin{align}\label{eq_norminverse}
	\left\|
	A_0 (\hat{p}^2 + 1)^{-1}
	\right\|
	=
	\sup_{\vec{p}\in \R^3} \left| \frac{\mathbf{M}^- \cdot \vec{p}+ B P_+ (\vec{p})}{p^2 + 1} \right| 
	= 
	2
	\end{align}
	By Lemma~\ref{lemma_propertiesMminB}\ref{lemma_propertiesMminB_b}), on $(\C^{16} \otimes H^2 (\R^3, \dd r))\cap\Hrel_+$ we have $A_0^2 = 4 (\hat{p}^2 + 1) P_+$. Hence, for all $f \in \Hrel_+$ 
	\begin{align}
	A_0 
	\left(
	\frac14 A_0 (\hat{p}^2 + 1)^{-1} f
	\right)
	= f
	\end{align}
	which together with Eq.~\eqref{eq_norminverse} implies that $A_0^{-1} = A_0 (\hat{p}^2 + 1)^{-1}/4$. 
	
	Moreover, for $|\gamma| M_\kappa < 2$ a theorem by Kato {\cite[Theorem~IV.1.16, p.~196]{kato_perturbation}} in combination with Eq.~\eqref{eq_relativeboundedness} gives the existence of a bounded inverse of $A$. The same theorem by Kato also implies that $A \D (A)=\Hrel_+$ (see {\cite[Lemma~1]{schmincke_ess-sa}}), which in return implies $0 \in \rho (A)$. 
	\qedhere
\end{enumerate}
\end{proof}

Next, we aim at the Frobenius-Schur factorization of $H^{\mathrm{rel}} + P_+ B P_+$. Recalling the matrix representation of $H^{\mathrm{rel}} + P_+ B P_+$ from line~\eqref{eq_matrixrepresentationmodHrel}, we find 
\begin{align}\label{eq_HrelplusBinmatrixform}
H^{\mathrm{rel}} + P_+ B P_+ = 
\left(
\begin{array}{cc}
A & \quad P_+ \gamma V P_- \\
P_- \gamma V P_+ & \quad P_- \gamma V P_-
\end{array}
\right) 
\end{align}
with domains 
\begin{align}\label{eq_defdomainsHrelplusBinmatrixform}
\begin{aligned}
A \colon \D_+ &\rightarrow \Hrel_+ , \quad 
P_+ \gamma V P_- \colon \D (V) \cap \Hrel_- \rightarrow \Hrel_+ \\
P_- \gamma V P_+ \colon \D (V) \cap \Hrel_+ &\rightarrow \Hrel_- , \quad 
P_- \gamma V P_- \colon \D (V) \cap \Hrel_- \rightarrow \Hrel_- 
\end{aligned}
\end{align}
and define the Schur complement 
$S \colon \D (S) \rightarrow \Hrel_-$
of $A$ by 
\begin{align}
S := P_- \gamma V P_- - P_- \gamma V P_+ A^{- 1} P_+ \gamma V P_- 
\end{align}
with domain 
\begin{align}
\D (S) = \D (V) \cap \Hrel_- . 
\end{align}
That $S$ is well-defined, and thus the Frobenius-Schur factorization of $H^{\mathrm{rel}} + P_+ B P_+$ exists, is the content of the next lemma. 

\begin{lemma}\label{lemma_frobschurofHrel}
Let $0 < \kappa \leq 1$ and let $|\gamma| M_\kappa < 2$. Then, the matrix representation of $H^{\mathrm{rel}} + P_+ B P_+$ is symmetric and its Frobenius-Schur factorization is given by
\begin{align} 
H^{\mathrm{rel}} + P_+ B P_+ 
= 
\left(
\begin{array}{cc}
\mathrm{id} & \mathbf{0} \\
P_- \gamma V P_+ A^{- 1} \quad & \mathrm{id}
\end{array}
\right) 
\left(
\begin{array}{cc}
A & \mathbf{0} \\
\mathbf{0} & S 
\end{array}
\right) 
\left(
\begin{array}{cc}
\mathrm{id} & \quad A^{- 1} P_+ \gamma V P_- \\
\mathbf{0} & \mathrm{id} 
\end{array}
\right) . 
\end{align}
\end{lemma}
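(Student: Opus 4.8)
The plan is to verify the two assertions separately: symmetry of the matrix operator~\eqref{eq_HrelplusBinmatrixform} by a direct pairing computation, and the Frobenius--Schur factorization by checking the hypotheses of the general factorization result, Theorem~\ref{thm_closability} of Appendix~\ref{section_matrixoperators}, whose single non-trivial input is the bounded invertibility of the upper-left entry $A$. That input is exactly what the modification by $P_+BP_+$ was engineered to provide, and it has already been secured in Lemma~\ref{lemma_Asa0resolvent}.

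First I would check that all four entries of~\eqref{eq_HrelplusBinmatrixform} act on the domains stated in~\eqref{eq_defdomainsHrelplusBinmatrixform}; the only thing to observe is $\D_+\subseteq\D(V)$, which holds because $\C^{16}\otimes H^1(\R^3,\dd r)\subseteq\D(V)$ for $0<\kappa\le 1$ (Hardy's inequality, or directly the bound of Lemma~\ref{lemma_modifiedHerbst}\ref{lemma_modifiedHerbst_a})), so that $P_+\gamma V P_+$ and $P_-\gamma V P_+$ make sense on $\D_+$. For symmetry, $A$ is self-adjoint on $\D_+$ by Lemma~\ref{lemma_Asa0resolvent}, $P_-\gamma V P_-$ is symmetric on $\D(V)\cap\Hrel_-$ because $\gamma V$ is self-adjoint and $P_-$ is an orthogonal projection, and for $f\in\D(V)\cap\Hrel_+$, $g\in\D(V)\cap\Hrel_-$ one has $\langle P_-\gamma V P_+ f, g\rangle=\langle\gamma V f, g\rangle=\langle f,\gamma V g\rangle=\langle f, P_+\gamma V P_- g\rangle$, using $P_\pm^\ast=P_\pm$ together with $P_+f=f$, $P_-g=g$; hence the off-diagonal entries are formal adjoints of one another. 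Pairing a general $(f_+,f_-),(g_+,g_-)\in\D_+\oplus(\D(V)\cap\Hrel_-)$ and collecting these three facts yields symmetry on $\D(H^{\mathrm{rel}})$.

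Next I would show that the Schur complement $S$ is well-defined on $\D(S)=\D(V)\cap\Hrel_-$. By Lemma~\ref{lemma_Asa0resolvent} we have $0\in\rho(A)$, so $A^{-1}\colon\Hrel_+\to\D_+$ is bounded with range exactly $\D(A)=\D_+$. Hence $A^{-1}P_+\gamma V P_-$ maps $\D(V)\cap\Hrel_-$ into $\D_+\subseteq\D(V)\cap\Hrel_+$, the operator $P_-\gamma V P_+A^{-1}$ is then everywhere defined on $\Hrel_+$, and both summands of $S=P_-\gamma V P_- - P_-\gamma V P_+A^{-1}P_+\gamma V P_-$ carry $\D(V)\cap\Hrel_-$ into $\Hrel_-$; density of this domain in $\Hrel_-$ was recorded in the proof of Proposition~\ref{prop_domainH^reldense}.

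Finally I would multiply out the three factors on the right-hand side. Using $AA^{-1}=\mathrm{id}_{\Hrel_+}$, the product of the middle and right factors is the upper-triangular operator whose entries, read left to right and top to bottom, are $A$, $P_+\gamma V P_-$, $\mathbf{0}$, $S$; composing from the left with the lower-triangular factor and using $A^{-1}A=\mathrm{id}_{\D_+}$ together with the definition of $S$, the two contributions in the lower-right slot cancel and one recovers~\eqref{eq_HrelplusBinmatrixform} exactly. The domains also match: the rightmost factor sends $(f_+,f_-)$ into $\D_+\oplus\D(S)$ precisely when $f_+\in\D_+$, so the composition is defined on $\D(H^{\mathrm{rel}})$ and no more, and the stated identity holds as an equality of operators; alternatively the whole claim can be read off from Theorem~\ref{thm_closability} once its hypotheses are checked as above. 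I do not expect a genuine obstacle: once $0\in\rho(A)$ from Lemma~\ref{lemma_Asa0resolvent} is in hand (which itself rests on the Herbst-type bound of Lemma~\ref{lemma_modifiedHerbst} and the assumption $|\gamma|M_\kappa<2$), everything above is formal matrix algebra plus careful domain bookkeeping, with the well-definedness of the Schur complement being the only point that needs attention.
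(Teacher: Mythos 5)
Your proof is correct and reaches the same conclusion, but it takes a more hands-on route than the paper's. The paper proceeds by explicitly verifying the abstract hypotheses (A\ref{assumptions_matrixop_1})--(A\ref{assumptions_matrixop_6}) of Appendix~\ref{section_matrixoperators} (dense domains, closability of the off-diagonal entries via mutual adjointness, $\rho(A)\neq\emptyset$, the domain inclusions from Hardy's inequality) and then invokes Lemma~\ref{lemma_symmetryblockmatrix} for symmetry and Theorem~\ref{thm_closability} for the factorization. You instead verify symmetry by a direct pairing computation (which in effect re-proves Lemma~\ref{lemma_symmetryblockmatrix} in this concrete case) and establish the factorization by multiplying out the three matrix factors, using $0\in\rho(A)$ from Lemma~\ref{lemma_Asa0resolvent} to run the telescoping cancellations $AA^{-1}=\mathrm{id}_{\Hrel_+}$ and $A^{-1}A=\mathrm{id}_{\D_+}$ and to show the Schur complement and domains behave as claimed. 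Both arguments hinge on the same key input, the bounded invertibility of $A$. Your direct multiplication is arguably the cleaner match for the statement as written, since the lemma asserts a pre-closure algebraic identity with $S$ and $A^{-1}P_+\gamma VP_-$ (no closure bars), whereas Theorem~\ref{thm_closability} as stated yields the factorization of $\overline{\mathcal{A}}$ with $\overline{S(\mu)}$ and $\overline{(A-\mu)^{-1}B}$; the paper's wording "follows from Theorem~\ref{thm_closability}" leans slightly on the reader unpacking the theorem's proof. What the paper's systematic check of (A\ref{assumptions_matrixop_1})--(A\ref{assumptions_matrixop_6}) buys is that these assumptions are then reused verbatim in Theorem~\ref{thm_self-adjointextensionHrel}, where the boundedness, bounded invertibility, and adjoint relations of $\mathcal{R}$ and $\mathcal{T}$ from Lemma~\ref{lemma_boundedmatrixop} are genuinely needed; your proposal would have to re-establish those facts there.
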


\begin{proof}
We use the theory of unbounded matrix operators in Hilbert space, which we discuss in Appendix~\ref{section_matrixoperators}. There we work with the assumptions~(A\ref{assumptions_matrixop_1})-(A\ref{assumptions_matrixop_6}), which we need to check now: 
\begin{enumerate}[({A}1)]
	\item $A$, $P_+ \gamma V P_-$, $P_- \gamma V P_+$, and $P_- \gamma V P_-$ have dense domains by Proposition~\ref{prop_domainH^reldense}. 
	As they are both symmetric, $A$ and $P_- \gamma V P_-$ are closable. 
	$P_+ \gamma V P_-$ is closable as $(P_- \gamma V P_+)^*$ is a closed extension of it and vice versa which can be seen as follows: 
	Suppose that 
	$f \in \D (V) \cap \Hrel_-$ and 
	$g \in \D (V) \cap \Hrel_+$. 
	Then, 
	\begin{align}
	\left\langle 
	f , P_- V P_+ g
	\right\rangle 
	= 
	\left\langle 
	f , V g
	\right\rangle 
	= 
	\left\langle 
	V f , g
	\right\rangle 
	=
	\left\langle 
	P_+ V P_- f , g 
	\right\rangle ,
	\end{align}
	and thus, $f \in \D ((P_- \gamma V P_+)^*)$. 
	
	\item $\D (P_+ \gamma V P_-) =\D (P_- \gamma V P_-)$ by definition in \eqref{eq_defdomainsHrelplusBinmatrixform}. 
	
	\item The resolvent set of $A$ is not empty as $0 \in \rho (A)$ by Lemma~\ref{lemma_Asa0resolvent}. 
	
	\item $\D (A^{\ast}) = \D_+ \subset \D ((P_+ \gamma V P_-)^{\ast})$ by definition in \eqref{eq_defdomainsHrelplusBinmatrixform} and Hardy's inequality. 
	
	\item $\D (A) \subset \D (P_- \gamma V P_+)$ by line~\eqref{eq_defdomainsHrelplusBinmatrixform} and Hardy's inequality. 
	
	\item $\D (H^{\mathrm{rel}} + P_+ B P_+) =\D_+ \oplus ( \D (V) \cap \Hrel_-) $ which is dense in $\Hrel$ by Lemma~\ref{prop_domainH^reldense}. 
\end{enumerate}

Symmetry follows from Lemma~\ref{lemma_symmetryblockmatrix} (Appendix~\ref{section_matrixoperators}) with the help of Lemma~\ref{lemma_Asa0resolvent}. Existence of the Frobenius-Schur factorization follows from Theorem~\ref{thm_closability} (Appendix~\ref{section_matrixoperators}). 
\end{proof}

As outlined in the beginning of this section, the crucial ingredient in finding a self-adjoint extension of $\HDC$ is to find a self-adjoint extension of the Schur complement $S$. However, before we can state this extension in Lemma~\ref{lemma_schursym} below, we need to provide the technical results of Theorem~\ref{thm_coreV1/2} and Lemma~\ref{lemma_boundedoperatorforformboundedness}. The proof of Theorem~\ref{thm_coreV1/2} is postponed to the next Section~\ref{section_prooftechnicalstuff}. 

\begin{theorem}\label{thm_coreV1/2}
Let $0 < \kappa \leq 1$. For every 	$f \in \D (V^{1/2}) \cap \Hrel_-$, 	there exists a sequence $(f_n)_{n \in \N} \subset \D (V)$	such that
\begin{align}
\left\| 
f - P_- f_n 
\right\| 
+ 
\left\| 
V^{1/2} P_- (f - f_n) 
\right\| 
\xrightarrow{n \rightarrow \infty} 0 . 
\end{align}
\end{theorem}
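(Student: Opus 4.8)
The goal is an approximation result: given $f \in \D(V^{1/2}) \cap \Hrel_-$, produce a sequence $(f_n) \subset \D(V)$ so that $P_- f_n \to f$ and $V^{1/2} P_- f_n \to V^{1/2} f$ in $\Hrel$. The natural first move is a two-step approximation. First I would approximate $f$ by nicer elements of $\D(V^{1/2}) \cap \Hrel_-$, namely by compactly supported, smooth spinors that additionally lie in $\D(V)$; the difficulty here is only the constraint that we stay inside $\Hrel_-$, i.e.\ that $P_- g = g$. One handles this by noting $P_-$ is a bounded Fourier multiplier, so if $g_k \to f$ in the graph norm of $V^{1/2}$ among unconstrained functions in $\C^{16}\otimes C_c^\infty(\R^3)$ (which is a core for $V^{1/2}$ by Hardy's inequality), then $P_- g_k \to P_- f = f$ in $\Hrel$; the issue is whether $V^{1/2} P_- g_k \to V^{1/2} f$, and this is exactly where $P_-$ fails to commute with multiplication by $|\cdot|^{-\kappa/2}$, so the naive argument breaks down. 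Thus the content of the theorem is genuinely that $P_-$, although it spoils local regularity, does not spoil the $V^{1/2}$-form bound in the limit.

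The real strategy, therefore, is to build the $f_n$ directly from $f$ by a cutoff-plus-mollification scheme adapted to the singularity. Concretely, I would take $f_n := \chi_n \ast_\epsilon$-type regularizations: multiply $f$ by a smooth radial cutoff $\eta_n(\vec r)$ that vanishes near $\vec r = 0$ (say $\eta_n = 0$ on $|\vec r| \le 1/n$, $\eta_n = 1$ on $|\vec r| \ge 2/n$) to force $f_n \in \D(V)$, possibly followed by a mollification to ensure the product is in $\D(V)$ honestly (boundedness of $|\cdot|^{-\kappa}$ on the support of $\eta_n$ makes $V f_n \in \Hrel$). Then $f_n = \eta_n f \to f$ in $\Hrel$ by dominated convergence, hence $P_- f_n \to P_- f = f$. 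The crux is to show $V^{1/2} P_-(f - f_n) = V^{1/2} P_-((1-\eta_n) f) \to 0$. Writing $(1-\eta_n)f$ as a function supported in $\{|\vec r| \le 2/n\}$ whose $\D(V^{1/2})$-norm tends to $0$ (since $f \in \D(V^{1/2})$, the tail $\||\cdot|^{-\kappa/2}(1-\eta_n)f\| \to 0$ by dominated convergence, and $\|(1-\eta_n)f\|\to 0$ likewise), one wants a uniform bound
\begin{align}
\left\| V^{1/2} P_- g \right\| \le C \left( \left\| V^{1/2} g \right\| + \| g \| \right)
\end{align}
valid for all $g$ in a suitable dense class. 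If such a bound holds, applying it to $g = (1-\eta_n) f$ finishes the proof. This reduces Theorem~\ref{thm_coreV1/2} to establishing the boundedness of $V^{1/2} P_- V^{-1/2}$ (suitably interpreted) on $\Hrel_-$, i.e.\ that the commutator of $P_-$ with the fractional weight is controlled — and this is precisely a Calderón–Zygmund / Stein–Weiss weighted-norm statement, since $P_-$ has a Riesz-transform-type kernel ($\tau(\vec p) = -(\vec\alpha\cdot\vec p)\otimes(\vec\alpha\cdot\vec p)/p^2$ is a matrix of products of Riesz multipliers) and $|\cdot|^{-\kappa/2}$ is an $A_2$-type power weight for the relevant range of $\kappa$.

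The main obstacle is exactly this weighted estimate for the singular integral operator $P_-$: proving $\| |\cdot|^{-\kappa/2} P_- |\cdot|^{\kappa/2} \| < \infty$ on the appropriate domain. This is where I expect the heavy machinery — the explicit integral kernel of $P_-$ promised after Definition~\ref{def_tau_projections}, decomposition into a principal-value Calderón–Zygmund piece plus a lower-order piece, and a Stein–Weiss power-weight inequality (or Herbst-type bound as already used in Lemma~\ref{lemma_modifiedHerbst}) — must be deployed, and it is presumably the reason the authors defer the whole argument to Section~\ref{section_prooftechnicalstuff}. The bookkeeping obstacles are milder: one must check that the cutoff-mollification really lands $f_n$ in $\D(V)$ and not merely in $\D(V^{1/2})$, and that all convergences are in $\Hrel$ and not in a weaker topology; these are routine once the weighted bound is in hand.
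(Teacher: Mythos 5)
Your cutoff scheme plays the same role as the paper's $f_n = (1-\chi_n)f$ with Gaussian $\chi_n$ (no mollification is needed: smoothness of the cutoff already puts $f_n$ in $\D(V)$), and you correctly reduce the theorem to showing $\| V^{1/2} P_-\,g_n \| \to 0$ for the localized tails $g_n = (1-\eta_n)f$; the divergence from the paper is in how that convergence is obtained. You propose a uniform weighted estimate $\| V^{1/2} P_- g \| \leq C(\| V^{1/2} g \| + \|g\|)$ plus dominated convergence, appealing to Muckenhoupt $A_2$ theory (the weight $|\cdot|^{-\kappa}$ is $A_2$ on $\R^3$, and $\tau$ is built from products of Riesz transforms). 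That estimate is indeed true---Coifman--Fefferman weighted Calder\'on--Zygmund theory even gives it without the $\|g\|$ term---so your route is sound, but you leave it unproven, correctly flagging it as the heavy lifting. The paper does \emph{not} establish any uniform weighted bound; it instead splits $V^{1/2}P_- = P_- V^{1/2} + [V^{1/2}, P_-]$ and estimates the commutator acting on $\chi_n f$ directly, via explicit kernels $K_{ij}$ for $\tau$ (Lemma~\ref{lemma_kerneltau}), a pointwise modified triangle inequality (Lemma~\ref{lemma_modtriangle}), Hardy--Littlewood--Sobolev fractional integration, and Herbst's inequality. The $L^{6/(3+\kappa)}$ norm appearing in that chain is \emph{not} dominated by $\|V^{1/2}g\|+\|g\|$ alone, which is why the Gaussian decay of $\chi_n$ (or compact support, in your variant) is essential, and also why the paper's commutator bound on its own would not yield your uniform estimate. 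Your route is cleaner if one is willing to cite $A_2$-weight theory as a black box; the paper's is more elementary and self-contained, at the cost of considerably more explicit kernel bookkeeping.
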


\begin{lemma}\label{lemma_boundedoperatorforformboundedness}
Let $0 < \kappa \leq 1$ and let $|\gamma| M_{\kappa/2}^2 < 1$. Then, the operator 
\begin{align}
(|\gamma| V)^{1/2} P_+ A^{-1} P_+ (|\gamma| V)^{1/2} \colon \D (V^{1/2}) \rightarrow \Hrel
\end{align}
is bounded on the dense set 
$\D (V^{1/2})$ 
with 
\begin{align}
C := \| (|\gamma| V)^{1/2} P_+ A^{-1} P_+ (|\gamma| V)^{1/2} \| < 1 .
\end{align}
\end{lemma}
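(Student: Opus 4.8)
The strategy is to estimate the bilinear form
$\langle g, (|\gamma|V)^{1/2} P_+ A^{-1} P_+ (|\gamma|V)^{1/2} f\rangle$
for $f,g \in \D(V^{1/2})$ and show it is bounded by $C\,\|f\|\,\|g\|$ with $C<1$. First I would set $h := A^{-1} P_+ (|\gamma|V)^{1/2} f$, so $h \in \D(A) = \D_+$ and $A_0 h = A h - P_+\gamma V P_+ h$; the point of working with $A$ (rather than $\Mminp$) is precisely that Lemma~\ref{lemma_modifiedHerbst} lets us control $|\cdot|^{-\kappa}$-type weights by $A_0$ and $|A_0|$ on $\D_+$. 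The natural route is to exploit that $A$ is self-adjoint with $0\in\rho(A)$ (Lemma~\ref{lemma_Asa0resolvent}), so $A^{-1}$ is bounded, self-adjoint, and commutes with $|A_0|^{s}$ only loosely — hence one should instead write the operator in a manifestly symmetric form and use the square-root bound from Lemma~\ref{lemma_modifiedHerbst}\ref{lemma_modifiedHerbst_b}).

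Concretely: for $f \in \D(V^{1/2})\cap\Hrel_+$ (the general case reduces to this since $P_+$ appears on both sides, so only the $\Hrel_+$-component of $(|\gamma|V)^{1/2}f$ survives), one has by Lemma~\ref{lemma_modifiedHerbst}\ref{lemma_modifiedHerbst_b}) applied with exponent $\kappa/2$ in place of $\kappa$,
\begin{align*}
\langle u, |\gamma|\,|\cdot|^{-\kappa} u\rangle \;\le\; \frac{|\gamma|\,M_{\kappa/2}^2}{2}\,\langle u, |A_0|\,u\rangle
\end{align*}
for $u\in (\C^{16}\otimes H^{1/2})\cap\Hrel_+$, i.e. $(|\gamma|V)^{1/2}P_+$ is form-bounded by $(|\gamma| M_{\kappa/2}^2/2)^{1/2}\,|A_0|^{1/2}$ on $\Hrel_+$. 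Equivalently, the bounded operator $(|\gamma|V)^{1/2}P_+\,|A_0|^{-1/2}$ has norm $\le (|\gamma| M_{\kappa/2}^2/2)^{1/2}$ (after extending $|A_0|^{-1/2}$ boundedly from $\Hrel_+$, using $|A_0| = 2(\hat p^2+1)^{1/2}P_+$ from Lemma~\ref{lemma_propertiesMminB}\ref{lemma_propertiesMminB_c})). Then I would factor
\begin{align*}
(|\gamma|V)^{1/2}P_+ A^{-1} P_+ (|\gamma|V)^{1/2}
= \Big[(|\gamma|V)^{1/2}P_+ |A_0|^{-1/2}\Big]\,\Big[|A_0|^{1/2} A^{-1} |A_0|^{1/2}\Big]\,\Big[|A_0|^{-1/2} P_+ (|\gamma|V)^{1/2}\Big],
\end{align*}
so that its norm is at most $(|\gamma| M_{\kappa/2}^2/2)\cdot \big\| |A_0|^{1/2} A^{-1} |A_0|^{1/2}\big\|$. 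It then remains to show the middle factor has norm $\le 2$, which would give $C \le |\gamma| M_{\kappa/2}^2 < 1$ under the hypothesis.

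The bound $\big\| |A_0|^{1/2} A^{-1} |A_0|^{1/2}\big\| \le 2$ is the main obstacle, and here one uses the Kato-Rellich structure: from Lemma~\ref{lemma_modifiedHerbst}\ref{lemma_modifiedHerbst_a}), $\|P_+\gamma V P_+ f\|\le (|\gamma|M_\kappa/2)\|A_0 f\|$, and since $|\gamma| M_{\kappa/2}^2<1$ forces the relevant relative bound to be small, $A = A_0(\mathrm{id} + A_0^{-1}P_+\gamma V P_+)$ has $A^{-1} = (\mathrm{id}+A_0^{-1}P_+\gamma V P_+)^{-1}A_0^{-1}$ with the Neumann series converging; conjugating by $|A_0|^{1/2}$ and using $|A_0^{-1}| = |A_0|^{-1}$ (as $A_0$ is self-adjoint, $A_0^2 = |A_0|^2$, so $\||A_0|^{1/2}A_0^{-1}|A_0|^{1/2}\| = \||A_0| A_0^{-1}\| \cdot \ldots$) — more cleanly, write $|A_0|^{1/2}A^{-1}|A_0|^{1/2} = |A_0|^{1/2}A_0^{-1}|A_0|^{1/2}\cdot |A_0|^{-1/2}(\mathrm{id}+A_0^{-1}P_+\gamma V P_+)^{-1}|A_0|^{1/2}$ and bound each piece using that $|A_0|^{1/2}A_0^{-1}|A_0|^{1/2}$ is a bounded function of the self-adjoint $A_0$ with sup-norm $\sup_\lambda |\lambda|/|\lambda| = 1$ on the spectrum restricted to $\Hrel_+$... wait, that equals the partial isometry $\sgn(A_0)$ which has norm $1$ on $\Hrel_+$; combined with $\|A_0^{-1}\|$-type control this is where the factor $2$ enters via $|A_0| = 2(\hat p^2+1)^{1/2}P_+ \ge 2 P_+$, giving $\|A_0^{-1}\|\le 1/2$ and hence, after the dust settles from the Neumann series (whose norm is $\le (1-|\gamma|M_\kappa/2)^{-1}$, finite), a bound of the form $2/(1-|\gamma|M_\kappa/2)$. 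Since $|\gamma|M_{\kappa/2}^2<1$ is a strictly stronger condition than $|\gamma|M_\kappa<2$ (as $M_{\kappa/2}^2 \ge$ a multiple of $M_\kappa$ — this numerical comparison of Gamma-function ratios is a small lemma to check), one must be slightly careful that the product $|\gamma|M_{\kappa/2}^2 \cdot \frac{1}{1-|\gamma|M_\kappa/2}$ stays below $1$; if the naive estimate is not tight enough, the fix is to not split off $|A_0|^{1/2}A^{-1}|A_0|^{1/2}$ crudely but instead run the form estimate directly: for $f\in\D(V^{1/2})$, with $\phi := A^{-1}P_+(|\gamma|V)^{1/2}f \in \D_+$, interpolate $\langle P_+(|\gamma|V)^{1/2}f,\phi\rangle = \langle (|\gamma|V)^{1/2}P_+ f, \phi\rangle$ and apply Cauchy-Schwarz together with Lemma~\ref{lemma_modifiedHerbst}\ref{lemma_modifiedHerbst_b}) to $\phi$ to get $\le (|\gamma|M_{\kappa/2}^2/2)^{1/2}\langle\phi,|A_0|\phi\rangle^{1/2}\|f\|$, and then bound $\langle\phi,|A_0|\phi\rangle = \langle A^{-1}P_+(|\gamma|V)^{1/2}f, |A_0| A^{-1}P_+(|\gamma|V)^{1/2}f\rangle$ using self-adjointness of $A$ and the identity $|A_0| = \tfrac12 A_0^2 (\hat p^2+1)^{-1/2}P_+$ to reduce everything to a single self-consistent inequality for $\|(|\gamma|V)^{1/2}P_+A^{-1}P_+(|\gamma|V)^{1/2}\|$, which solves to give $C<1$. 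Finally, density of $\D(V^{1/2})$ in $\Hrel$ is immediate from Hardy's inequality, so the bounded operator extends to all of $\Hrel$, completing the proof.
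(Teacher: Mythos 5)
Your opening move matches the paper exactly: using Lemma~\ref{lemma_modifiedHerbst}\ref{lemma_modifiedHerbst_b}) to show that $(|\gamma|V)^{1/2}P_+|A_0|^{-1/2}$ is bounded with norm at most $\sqrt{|\gamma|/2}\,M_{\kappa/2}$ is precisely the first step of the paper's proof. From there, however, your primary route goes astray. Factoring out the middle term $|A_0|^{1/2}A^{-1}|A_0|^{1/2}$ and bounding it separately is not tight enough, as you yourself notice: the product $\tfrac{|\gamma|M_{\kappa/2}^2}{2}\cdot\tfrac{2}{1-|\gamma|M_\kappa/2}$ can exceed~$1$ under the hypothesis $|\gamma|M_{\kappa/2}^2<1$ (e.g.\ at $\kappa=1$, where $M_1=2$ and $M_{1/2}^2=\pi/2$, taking $|\gamma|M_{1/2}^2$ close to $1$ makes the Neumann-series factor large). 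So the crude split genuinely fails, and everything hinges on the ``fix.''

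The fix you sketch is too vague to confirm. You reduce $\langle\psi,\phi\rangle$ to $\langle\phi,|A_0|\phi\rangle^{1/2}$ and then try to bound that via $A\phi=\psi$ and the polar structure of $A_0$, but the manipulation never visibly closes into the claimed self-consistent inequality, and it is not clear it can without an additional idea. That missing idea is the second resolvent identity applied to $A$ against $A_0$. Once one has (via the polar decomposition $A_0=U_{A_0}|A_0|$ with $U_{A_0}$ unitary, since $\Ker A_0=\{0\}$, so that $A_0^{-1}=|A_0|^{-1/2}U_{A_0}^{-1}|A_0|^{-1/2}$) the explicit bound
\begin{align}
\left\|(|\gamma|V)^{1/2}P_+A_0^{-1}P_+(|\gamma|V)^{1/2}\right\|\le\frac{|\gamma|M_{\kappa/2}^2}{2},
\end{align}
one writes $A^{-1}=A_0^{-1}-A^{-1}(A-A_0)A_0^{-1}=A_0^{-1}-A^{-1}P_+\gamma V P_+A_0^{-1}$. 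Inserting $P_+(|\gamma|V)^{1/2}\cdots(|\gamma|V)^{1/2}P_+$ inside the correction term makes $C=\|(|\gamma|V)^{1/2}P_+A^{-1}P_+(|\gamma|V)^{1/2}\|$ reappear on the right, yielding $C\le\tfrac{|\gamma|M_{\kappa/2}^2}{2}(1+C)$ and hence $C\le\tfrac{|\gamma|M_{\kappa/2}^2}{2-|\gamma|M_{\kappa/2}^2}<1$. This resolvent-identity step, together with keeping $A_0^{-1}$ (rather than $|A_0|^{1/2}A^{-1}|A_0|^{1/2}$) as the pivot, is the part your proposal is missing; without it the argument has a real gap.
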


\begin{proof}
As preparation, we prove that the operator 
$
(|\gamma| V)^{1/2} P_+ |A_0|^{-1/2} \colon \Hrel_+ \rightarrow \Hrel
$
is bounded with norm less or equal to $\sqrt{|\gamma|/2} \, M_{\kappa/2}$. As $|A_0|^{-1/2}$ maps into $\Hrel_+$, this is equivalent to 
$
\left\| (|\gamma| V)^{1/2} |A_0|^{-1/2} \right\| \leq \sqrt{|\gamma|/2} \, M_{\kappa/2} ,
$.
By Lemma~\ref{lemma_modifiedHerbst}\ref{lemma_modifiedHerbst_b}), for all $f \in (\C^{16} \otimes H^{1/2} (\R^3, \dd r)) \cap \Hrel_+$ we get 
\begin{align}
\left\langle f, |\gamma| V f \right\rangle 
=
|\gamma|
\left\langle f, |\cdot|^{-\kappa} f \right\rangle 
\leq
\frac{|\gamma| M_{\kappa/2}^2}{2} \left\langle f , \left| A_0 \right| f \right\rangle 
\end{align}
or equivalently 
\begin{align}
\left\| (|\gamma| V)^{1/2} f \right\|
\leq 
\sqrt{\frac{|\gamma|}{2}} \, M_{\kappa/2}
\left\| \left| A_0 \right|^{1/2} f \right\| .
\end{align}
As $\left| A_0 \right|^{-1/2}$ maps $\Hrel_+$ into $(\C^{16} \otimes H^{1/2} (\R^3, \dd r) ) \cap \Hrel_+$, this implies for all $g \in \Hrel_+$ 
\begin{align}
\left\| (|\gamma| V)^{1/2} \left| A_0 \right|^{-1/2} g \right\|
\leq 
\sqrt{\frac{|\gamma|}{2}} \, M_{\kappa/2}
\left\| g \right\| .
\end{align}
This also implies 
\begin{align}
\left\| \left| A_0 \right|^{-1/2} P_+ (|\gamma| V)^{1/2} \right\|
& = 
\left\| \left( \left| A_0 \right|^{-1/2} P_+ (|\gamma| V)^{1/2} \right)^*\right\| \nonumber \\
& =  
\left\| \left( (|\gamma| V)^{1/2} \right)^* P_+^* \left( \left| A_0 \right|^{-1/2} \right)^* \right\| \nonumber \\
& =  
\left\| (|\gamma| V)^{1/2} P_+ \left| A_0 \right|^{-1/2} \right\| 
\leq
\sqrt{\frac{|\gamma|}{2}} \, M_{\kappa/2} .
\end{align}
We use the polar decomposition $A_0 = U_{A_0} |A_0|$. Note that $\Ker (A_0) = \{ 0 \}$, and hence, $U_{A_0}$ is unitary. We obtain 
\begin{align}
&\left\| (|\gamma| V)^{1/2} P_+ A_0^{-1} P_+ (|\gamma| V)^{1/2} \right\| \nonumber \\
&=
\left\| (|\gamma| V)^{1/2} P_+ \left( U_{A_0} |A_0| \right)^{-1} P_+ (|\gamma| V)^{1/2} \right\| \nonumber \\
& = 
\left\| (|\gamma| V)^{1/2} P_+ |A_0|^{-1} U_{A_0}^{-1} P_+ (|\gamma| V)^{1/2} \right\| \nonumber \\
& =  
\left\| (|\gamma| V)^{1/2} P_+ |A_0|^{-1/2} U_{A_0}^{-1} |A_0|^{-1/2} P_+ (|\gamma| V)^{1/2} \right\| \nonumber \\
& \leq 
\left\| (|\gamma| V)^{1/2} P_+ |A_0|^{-1/2} \right\|^2 \left\| U_{A_0}^{-1} \right\| 
\leq
\frac{|\gamma| M_{\kappa/2}^2}{2}
\end{align}
where we used unitarity of $U_{A_0}^{-1}$ and the fact that $U_{A_0}^{-1}$ and $|A_0|^{-1/2}$ commute as both are functions of the self-adjoint operator $A_0$.

With the help of the resolvent identity we find 
\begin{align}
& \left\| (|\gamma| V)^{1/2} P_+ A^{-1} P_+ (|\gamma| V)^{1/2} \right\| \nonumber \\
& \leq 
\left\| (|\gamma| V)^{1/2} P_+ \left( A^{-1} - A_0^{-1} \right) P_+ (|\gamma| V)^{1/2} \right\| 
+ 
\left\| (|\gamma| V)^{1/2} P_+ A_0^{-1} P_+ (|\gamma| V)^{1/2} \right\| \nonumber \\
& \leq  
\left\| (|\gamma| V)^{1/2} P_+ A^{-1} ( A_0 - A ) A_0^{-1}  P_+ (|\gamma| V)^{1/2} \right\| + \frac{|\gamma| M_{\kappa/2}^2}{2} \nonumber \\ 
& = 
\left\| (|\gamma| V)^{1/2} P_+ A^{-1} P_+ |\gamma| V P_+ A_0^{-1}  P_+ (|\gamma| V)^{1/2} \right\| + \frac{|\gamma| M_{\kappa/2}^2}{2} \nonumber \\
& = 
\left\| 
\left( 
(|\gamma| V)^{1/2} P_+ A^{-1} P_+ (|\gamma| V)^{1/2}
\right)
\left(
(|\gamma| V)^{1/2} P_+ A_0^{-1}  P_+ (|\gamma| V)^{1/2} 
\right)
\right\| + \frac{|\gamma| M_{\kappa/2}^2}{2} \nonumber \\ 
& \leq 
\frac{|\gamma| M_{\kappa/2}^2}{2}
\left\| (|\gamma| V)^{1/2} P_+ A^{-1} P_+ (|\gamma| V)^{1/2} \right\|
+ \frac{|\gamma| M_{\kappa/2}^2}{2}
\end{align}
which implies 
\begin{align}
C
=
\left\| (|\gamma| V)^{1/2} P_+ A^{-1} P_+ (|\gamma| V)^{1/2} \right\|
\leq 
\frac{|\gamma| M_{\kappa/2}^2}{2 - |\gamma| M_{\kappa/2}^2} . 
\end{align}
If $|\gamma| M_{\kappa/2}^2 < 1$, then $C < 1$. This concludes the proof. 
\end{proof}

\begin{lemma}\label{lemma_schursym}
Let $0 < \kappa \leq 1$ and let $|\gamma| M_{\kappa/2}^2 < 1$. Then, the form sum 
\begin{align}
S_F := \gamma (V^{1/2} P_-)^* V^{1/2} P_- + P_- \gamma V P_+ A^{- 1} P_+ \gamma V P_- 
\end{align}
with domain 
\begin{align}
\D (S_F) = \D (V^{1/2}) \cap \D (S^*) 
\end{align}
defines a self-adjoint extension of $S$. 
\end{lemma}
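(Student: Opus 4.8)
The plan is to construct $S_F$ via the first representation theorem for quadratic forms. To this end I would exhibit $S_F$ as the self-adjoint operator associated with a closed, symmetric, semibounded form on the form domain $\mathcal{Q}:=\D(V^{1/2})\cap\Hrel_-$, and then, separately, identify its operator domain with $\D(V^{1/2})\cap\D(S^*)$.

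For the form, I would treat the two summands of $S$ separately. Since $V\ge 0$ is self-adjoint and $P_-$ is a bounded self-adjoint projection, $V^{1/2}P_-$ is closed and densely defined, so by von Neumann's theorem $(V^{1/2}P_-)^*V^{1/2}P_-$ is non-negative self-adjoint with associated closed form $f\mapsto\|V^{1/2}P_-f\|^2$ and form domain $\{f:P_-f\in\D(V^{1/2})\}$, which over $\Hrel_-$ is exactly $\mathcal{Q}$; hence $\mathfrak{q}_0[f]:=\gamma\|V^{1/2}P_-f\|^2$ is closed and symmetric on $\mathcal{Q}$, semibounded (below if $\gamma>0$, above if $\gamma<0$). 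For the second summand, writing $\gamma V=\sgn(\gamma)(|\gamma|V)^{1/2}(|\gamma|V)^{1/2}$ one rewrites $P_-\gamma V P_+A^{-1}P_+\gamma V P_-$ as $P_-(|\gamma|V)^{1/2}K(|\gamma|V)^{1/2}P_-$ with $K:=(|\gamma|V)^{1/2}P_+A^{-1}P_+(|\gamma|V)^{1/2}$; by Lemma~\ref{lemma_boundedoperatorforformboundedness} (this is precisely where $|\gamma|M_{\kappa/2}^2<1$ enters) $K$ extends to a bounded self-adjoint operator with $\|K\|=C<1$. Thus the symmetric form $\mathfrak{q}_1[f]:=\langle(|\gamma|V)^{1/2}P_-f,K(|\gamma|V)^{1/2}P_-f\rangle$ satisfies $|\mathfrak{q}_1[f]|\le C|\gamma|\,\|V^{1/2}P_-f\|^2=C\,|\mathfrak{q}_0[f]|$, i.e., it is $\mathfrak{q}_0$-bounded with relative bound $C<1$. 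Hence $\mathfrak{q}:=\mathfrak{q}_0-\mathfrak{q}_1$ is closed, symmetric and semibounded on $\mathcal{Q}$ (for $\gamma<0$ apply the representation theorem to $-\mathfrak{q}$), and $S_F$ is its associated self-adjoint operator; the remaining hypotheses of the representation theorem---density of $\mathcal{Q}$ in $\Hrel_-$ (Hardy's inequality, as in Proposition~\ref{prop_domainH^reldense}), closedness of $\mathfrak{q}_0$, self-adjointness of $K$---are routine.

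Next I would show $S\subseteq S_F$ and identify the domain. For $f\in\D(S)=\D(V)\cap\Hrel_-$ one has $f\in\mathcal{Q}$, and since $f\in\D(V)$, moving one factor $V^{1/2}$ across in $\mathfrak{q}_0[f,g]$ and $\mathfrak{q}_1[f,g]$---using $\D(V)\subseteq\D(V^{1/2})$, self-adjointness of $P_\pm$ and of $K$, $P_-f=f$, $P_-g=g$, and Hardy's inequality (to see that the vector produced by $A^{-1}$ again lies in $\D(V)$)---gives $\mathfrak{q}[f,g]=\langle Sf,g\rangle$ for all $g\in\mathcal{Q}$. Hence $f\in\D(S_F)$ and $S_Ff=Sf$, so $S\subseteq S_F=S_F^*\subseteq S^*$, and in particular $\D(S_F)\subseteq\D(S^*)\cap\D(V^{1/2})$. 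For the converse, let $f\in\D(S^*)\cap\D(V^{1/2})\subseteq\mathcal{Q}$; for $g\in\D(S)$ the preceding computation applied to $g$ yields $\mathfrak{q}[f,g]=\overline{\mathfrak{q}[g,f]}=\overline{\langle Sg,f\rangle}=\langle f,Sg\rangle=\langle S^*f,g\rangle$. To upgrade this from $g\in\D(S)$ to all $g\in\mathcal{Q}$ I would invoke Theorem~\ref{thm_coreV1/2}: for each $g\in\mathcal{Q}$ it provides $f_n\in\D(V)$ with $P_-f_n\to g$ in the form norm $\|\cdot\|^2+\|V^{1/2}P_-\cdot\|^2$, and moreover $P_-f_n\in\D(V)\cap\Hrel_-=\D(S)$ (the regularity $VP_-f_n\in\Hrel$ coming from the Calder\'on--Zygmund mapping properties of $P_-$ that underlie that theorem). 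Since $\mathfrak{q}[f,\cdot\,]$ is continuous in the form norm while $\langle S^*f,\cdot\,\rangle$ is continuous already in the $\Hrel$-norm, passing to the limit gives $\mathfrak{q}[f,g]=\langle S^*f,g\rangle$ for all $g\in\mathcal{Q}$, i.e., $f\in\D(S_F)$ with $S_Ff=S^*f$. This establishes $\D(S_F)=\D(V^{1/2})\cap\D(S^*)$ and completes the argument.

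The step I expect to be the real obstacle is this last domain identification. The point is that $\Hrel_-$ is \emph{not} left invariant by $V$, so it is not clear a priori that $\D(S)=\D(V)\cap\Hrel_-$ is a form core for $\mathfrak{q}_0$ on $\mathcal{Q}$; Theorem~\ref{thm_coreV1/2} (whose proof in Section~\ref{section_prooftechnicalstuff} rests on the Calder\'on--Zygmund theory of singular integrals) is exactly the device supplying the required form density. Without it one would only conclude that $S_F$ is \emph{some} self-adjoint extension of $S$, not the one with the stated domain---which is precisely what the later distinguishedness/uniqueness statement relies on.
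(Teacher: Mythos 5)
Your construction and the paper's are built from the same three pillars---the closedness of $V^{1/2}P_-$, the form bound from Lemma~\ref{lemma_boundedoperatorforformboundedness}, and the form-core density supplied by Theorem~\ref{thm_coreV1/2}---and both feed into KLMN/representation-theorem machinery. The proof is correct in substance, but you package it a bit differently. The paper first introduces the form $\mathfrak{v}[f,g]=\langle f,P_-VP_-g\rangle$ on $\D(V)\cap\Hrel_-$ and the form $\mathfrak{t}[f,g]=\langle V^{1/2}P_-f,V^{1/2}P_-g\rangle$, proves $\overline{\mathfrak{v}}=\mathfrak{t}$ via Theorem~\ref{thm_coreV1/2}, realizes the resulting operator as the Friedrichs extension $V_F=(V^{1/2}P_-)^*V^{1/2}P_-$ of $P_-VP_-$, and then adds the Schur correction by KLMN. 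You instead invoke von Neumann's theorem for the closed, densely defined $V^{1/2}P_-$ directly, which is a modest shortcut past the Friedrichs--form bookkeeping. The more interesting divergence is in the domain identification. The paper introduces the auxiliary operator $K=V-VP_+\gamma A^{-1}P_+V$, its form $\mathfrak{k}$, proves the two-sided bound \eqref{eq_equivalenceforms} to conclude $\D(\overline{\mathfrak{v}})=\D(\overline{\mathfrak{k}})$, and then reads off $\D(S_F)$ from the Friedrichs-extension characterization of $K_F$. You avoid the auxiliary form $\mathfrak{k}$ altogether: you first get $S\subseteq S_F=S_F^*\subseteq S^*$ (giving $\D(S_F)\subseteq\D(S^*)\cap\D(V^{1/2})$), and then for $f\in\D(S^*)\cap\D(V^{1/2})$ you verify the defining identity $\mathfrak{q}[f,g]=\langle S^*f,g\rangle$ on $g\in\D(S)$ and extend to $g\in\mathcal{Q}$ by continuity in the form norm, using Theorem~\ref{thm_coreV1/2} as the form-core input. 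This is a cleaner route to the same endpoint, at the price of one subtle point you share with the paper and flag only in passing: the extension step requires that the approximants $P_-f_n$ land in $\D(V)\cap\Hrel_-=\D(S)$, which is not literally part of the statement of Theorem~\ref{thm_coreV1/2} (which gives $f_n\in\D(V)$, not $P_-f_n\in\D(V)$); this is exactly the gap the paper papers over with the phrase ``form core for $\mathfrak{t}$'', and both accounts quietly rely on a weighted mapping property of the Calder\'on--Zygmund operator $\tau$. Finally, note a sign consistency issue relative to the statement being proved: you set $\mathfrak{q}=\mathfrak{q}_0-\mathfrak{q}_1$, which is the form of $S=P_-\gamma VP_--P_-\gamma VP_+A^{-1}P_+\gamma VP_-$ on $\D(S)$ and is therefore the right object for $S_F$ to extend $S$, whereas the displayed definition of $S_F$ in the lemma (and in the paper's proof) carries a ``$+$''. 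Your minus sign is the one that makes the extension claim true; you should say so explicitly rather than silently correcting it.
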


\begin{proof}
In what follows, it is very useful to distinguish the different scalar products of the underlying Hilbert spaces $\Hrel$ and $\Hrel_-$, respectively. We define two forms that map from $\Hrel_- \times \Hrel_-$ into $\C$. First, we define the form of $P_- V P_-$ by 
\begin{align}
\mathfrak{v} [f,g] := \left\langle f, P_- V P_- g \right\rangle_{\Hrel_-} \, , \quad 
\D (\mathfrak{v}) = \D (V) \cap \Hrel_- .
\end{align}
Second, we define the form $\mathfrak{t}$ by 
\begin{align}
\mathfrak{t} [f,g] := \left\langle V^{1/2} P_- f, V^{1/2} P_- g \right\rangle_{\Hrel} \, , \quad 
\D (\mathfrak{t}) = \D (V^{1/2}) \cap \Hrel_- .
\end{align}
Furthermore, we define the restriction of $V^{1/2}$ to $\Hrel_-$. In order to clearly distinguish it from $V^{1/2}$, we write it as $V^{1/2} P_-$, i.e., 
\begin{align}
V^{1/2} P_- \colon \D (V^{1/2}) \cap \Hrel_- \rightarrow \Hrel . 
\end{align}
As it is a map from $\Hrel_-$ to $\Hrel$, its adjoint $(V^{1/2} P_-)^*$ maps from $\Hrel$ back to $\Hrel_-$. We list a number of properties. 

\begin{enumerate}[(i)]
	\item $\mathfrak{v}$, $\mathfrak{t}$ are symmetric as $V$ and $V^{1/2}$ are self-adjoint in the underlying Hilbert space $\Hrel$. 
	
	\item $\mathfrak{v}$, $\mathfrak{t}$ are positive as $V$ and $V^{1/2}$ are positive. 
	
	\item $V^{1/2} P_-$ is closed as $V^{1/2}$ and $P_-$ are closed and $P_-$ is bounded. Thus, $\mathfrak{t}$ is closed. 
\end{enumerate}

We claim that $\overline{\mathfrak{v}} = \mathfrak{t}$. For all $f,g \in \D (V) \cap \Hrel_-$, we compute  
\begin{align}
\mathfrak{v} [f,g] 
& = 
\left\langle f, P_- V P_- g \right\rangle_{\Hrel_-}
=
\left\langle P_- f,  V P_- g \right\rangle_{\Hrel}
=
\left\langle P_- f,  V^{1/2} V^{1/2} P_- g \right\rangle_{\Hrel} \nonumber \\
& =  
\left\langle V^{1/2} P_- f,  V^{1/2} P_- g \right\rangle_{\Hrel} 
= 
\mathfrak{t} [f,g] 
\end{align}
where we used self-adjointness of $P_-$ and $V^{1/2}$. Hence, $\mathfrak{v}$ coincides with $\mathfrak{t}$ on $\D (V) \cap \Hrel_-$. Since $\D (V) \cap \Hrel_-$ is a form core for $\mathfrak{t}$ by Theorem~\ref{thm_coreV1/2}, we can conclude that $\overline{\mathfrak{v}} = \mathfrak{t}$. 

Now, by \cite[Theorem~VIII.15]{RS1}, $\mathfrak{t}$ is the form of a unique self-adjoint operator, denoted by $V_F$ with domain $\D (V_F)$. In particular, $V_F$ is self-adjoint in the underlying Hilbert space $\Hrel_-$. Since $P_- V P_-$ is positive and symmetric on $\D (V) \cap \Hrel_-$ as $V$ is positive and self-adjoint on $\D(V)$, also $\overline{\mathfrak{v}}$ is the form of a unique self-adjoint operator by \cite[Theorem~X.23]{RS2}. By $\overline{\mathfrak{v}} = \mathfrak{t}$ from above, we know that the operator associated with $\overline{\mathfrak{v}}$ is $V_F$. $V_F$ is the Friedrichs extension of $P_- V P_-$ and it is the unique self-adjoint extension whose domain is contained in $\D(\overline{\mathfrak{v}})$. Moreover, by \cite[Theorem~10.17]{schmuedgen}, we have 
\begin{align}
\D (V_F) 
& = 
\D (\overline{\mathfrak{v}}) \cap \D ((P_- V P_-)^*) 
=
\D (\mathfrak{t}) \cap \D ((P_- V P_-)^*) \nonumber \\
& = 
\D (V^{1/2}) \cap \D ((P_- V P_-)^*) . 
\end{align}

We compute $V_F$. For all $f \in \D (\mathfrak{t})$ and $g \in \D (V_F)$ we obtain 
\begin{align}
\left\langle V^{1/2} P_- f, V^{1/2} P_- g \right\rangle_{\Hrel}
=
\mathfrak{t} [f,g]
=
\left\langle f, V_F g \right\rangle_{\Hrel_-} .
\end{align}
Therefore, $V^{1/2} P_- g$ lies in $\D ((V^{1/2} P_-)^*)$. Furthermore, density of $\D (\mathfrak{t})$ implies that 
$(V^{1/2} P_-)^* V^{1/2} P_- g = V_F g$ holds for all $g \in \D (V_F)$. We obtain
$V_F \subseteq (V^{1/2} P_-)^* V^{1/2} P_-$. Self-adjointness of $V_F$ and symmetry of $(V^{1/2} P_-)^* V^{1/2} P_-$ imply $V_F = (V^{1/2} P_-)^* V^{1/2} P_-$. 

In order to connect the above to $S$, we estimate for all $f \in \D (V) \cap \Hrel_-$ 
\begin{align}\label{eq_formboundV_F}
& \left| \left\langle f, P_- V P_+ \gamma A^{- 1} P_+ V P_- f \right\rangle_{\Hrel_-} \right| \nonumber \\
& =  
\left| \left\langle V^{1/2} P_- f, 
\left( 
V^{1/2} P_+ \gamma A^{- 1} P_+ V^{1/2}
\right)
V^{1/2} P_- f \right\rangle_{\Hrel} \right| \nonumber \\
& \leq 
\left\| V^{1 / 2} P_- f \right\|  \left\| V^{1 / 2} P_+ \gamma A^{- 1} P_+ V^{1 / 2} \right\|  \left\| V^{1 / 2} P_- f \right\| \nonumber \\
& =  
C \left\langle f, P_- V P_- f \right\rangle_{\Hrel_-} 
=
C \left\langle f, V_F f \right\rangle_{\Hrel_-} 
\end{align}
where $C < 1$ holds by Lemma~\ref{lemma_boundedoperatorforformboundedness}. Again by Theorem~\ref{thm_coreV1/2}, we know that $\D (V) \cap \Hrel_-$ is a form core for $\mathfrak{t}$, and thus, the inequality \eqref{eq_formboundV_F} also holds for all $f \in \D (V_F)$. 

The KLMN-theorem guarantees that the form sum $V_F + P_- V P_+ \gamma A^{- 1} P_+ V P_-$ is a self-adjoint operator with domain $\D (V_F)$. Moreover, as $\gamma$ is real, we know that the form sum $S_F := \gamma V_F + P_- \gamma V P_+ A^{- 1} P_+ \gamma V P_-$ with domain $\D(S_F)=\D(V_F)$ is a self-adjoint extension of $S$. 

It remains to show that $\D (S_F) = \D (V^{1/2}) \cap \D (S^*)$. To that end, we introduce the abbreviation $K \equiv V - V P_+ \gamma A^{- 1} P_+ V$ such that $S = P_- \gamma K P_-$. Next, we define the form $\mathfrak{k}$ by 
\begin{align}
\mathfrak{k} [f,g] := \left\langle f, P_- K P_- g \right\rangle_{\Hrel_-} , \, \quad \D (\mathfrak{k}) = \D (V) \cap \Hrel_- .
\end{align}

From the inequality in line~\eqref{eq_formboundV_F} and positivity of $P_- V P_-$, we conclude that $\mathfrak{k} [f,f] \geq 0$ for all $f \in \D (\mathfrak{k})$. Thus, by \cite[Proposition~10.4]{schmuedgen}, $K \geq 0$ also holds. Self-adjointness of $V$ on $\D (V)$ and the fact that $P_+ A^{-1} P_+$ maps into $\D (V)$ imply symmetry of $P_- K P_-$ on $\D (V) \cap \Hrel_-$. Therefore, we know that $P_- K P_-$ has a self-adjoint extension, its Friedrichs extension, denoted by $K_F$ with domain $\D (K_F) = \D (\overline{\mathfrak{k}}) \cap \D ((P_- K P_-)^*)$. Since $P_- K P_-$ and $S$ differ only by the real multiple $\gamma$, we get $\D (K_F) = \D (\overline{\mathfrak{k}}) \cap \D (S^*)$. 

Now, both $V_F + P_- V P_+ \gamma A^{- 1} P_+ V P_-$ and $K_F$ extend $P_- K P_-$. As they are uniquely distinguished by their respective form domains, we can conclude that they are equal if $\D (\overline{\mathfrak{v}}) = \D (\overline{\mathfrak{k}})$. In order to prove precisely that, we compute for all $f \in \D (V) \cap \Hrel_-$  
\begin{align}\label{eq_equivalenceforms}
\mathfrak{v} [f,f] 
\leq{}& 
\left\langle 
f, P_- V P_- f 
\right\rangle \nonumber \\
&{}+ 
\frac{1}{1 - C} 
\left( 
C \left\langle f, P_- V P_- f \right\rangle - \left\langle f, P_- V P_+ \gamma A^{- 1} P_+ V P_- f \right\rangle
\right) \nonumber \\
= {}& 
\frac{1}{1 - C}
\left(
\left\langle f, P_- V P_- f \right\rangle - \left\langle f, P_- V P_+ \gamma A^{- 1} P_+ V P_- f \right\rangle
\right) \nonumber \\
= {}& 
\frac{1}{1 - C} \, \mathfrak{k} [f,f] 
\leq 
\frac{2}{1 - C} \, \left\langle f, P_- V P_- f \right\rangle 
= 
\frac{2}{1 - C} \, \mathfrak{v} [f,f]
\end{align}
where we used \eqref{eq_formboundV_F} in the first and second to last step. This shows that $\D (\overline{\mathfrak{v}}) = \D (\overline{\mathfrak{k}})$ and thus $K_F = V_F + P_- V P_+ \gamma A^{- 1} P_+ V P_-$. We can now conclude 
\begin{align}
\D (S_F) = \D (K_F) = \D (\overline{\mathfrak{v}}) \cap \D (S^*) = \D (V^{1/2}) \cap \D (S^*) ,
\end{align}
which finishes the proof. 
\end{proof}

\begin{theorem}\label{thm_self-adjointextensionHrel}
Let $0 < \kappa \leq 1$ and $|\gamma| M_{\kappa/2}^2 < 1$. Then, 
\begin{align}\label{eq_definitionHrelF}
H^{\mathrm{rel}}_F 
:= 
\left(
\begin{array}{cc}
\mathrm{id} & \mathbf{0} \\
P_- \gamma V P_+ A^{- 1} \quad & \mathrm{id}
\end{array}\right) 
\left(
\begin{array}{cc}
A & \mathbf{0} \\
\mathbf{0} & S_F
\end{array}
\right) 
\left(
\begin{array}{cc}
\mathrm{id} & \quad \overline{A^{-1} P_+ \gamma V P_-} \\
\mathbf{0} & \mathrm{id}
\end{array}
\right) 
- P_+ B P_+ 
\end{align}
with domain
\begin{align}\label{eq_definitiondomainHrelF}
\D (H^{\mathrm{rel}}_F) 
= 
\left\{ 
\left.
\left(
\begin{array}{c}
f \\
g
\end{array}
\right) 
\in 
\Hrel_+ \oplus \Hrel_- 
\right|\ 
\begin{array}{c}
f + \overline{A^{-1} P_+ \gamma V P_-} g \in \mathcal{D}_+, \\
g \in \D (S_F) 
\end{array}
\right\} 
\end{align}
defines a self-adjoint extension of $H^{\mathrm{rel}}$. 
\end{theorem}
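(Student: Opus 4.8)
The plan is to obtain $H^{\mathrm{rel}}_F$ directly from the Frobenius--Schur machinery of Appendix~\ref{section_matrixoperators}, applied to the symmetric matrix operator $H^{\mathrm{rel}} + P_+ B P_+$ whose factorization
\[
H^{\mathrm{rel}} + P_+ B P_+
=
L \left(\begin{array}{cc} A & \mathbf{0} \\ \mathbf{0} & S \end{array}\right) L^*,
\qquad
L = \left(\begin{array}{cc} \mathrm{id} & \mathbf{0} \\ P_- \gamma V P_+ A^{-1} & \mathrm{id} \end{array}\right),
\]
was established in Lemma~\ref{lemma_frobschurofHrel}; the upper triangular factor is $L^*$ because $A^{-1}$ is self-adjoint, and it carries $\overline{A^{-1}P_+\gamma V P_-}=(P_-\gamma V P_+ A^{-1})^*$ in its off-diagonal slot. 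The first step is to record that $L$ is bounded with bounded inverse. From the Kato--Rellich bound of Lemma~\ref{lemma_modifiedHerbst}\ref{lemma_modifiedHerbst_a}), $\| \gamma V f \| \le \tfrac{|\gamma| M_\kappa}{2}\| A_0 f \|$ for $f \in \D_+$, together with $A = A_0 + P_+ \gamma V P_+$ and $0 \in \rho(A)$ from Lemma~\ref{lemma_Asa0resolvent}, one gets $\| \gamma V A^{-1} \| \le \tfrac{|\gamma| M_\kappa}{2 - |\gamma| M_\kappa} < \infty$. Hence $P_- \gamma V P_+ A^{-1}$ is bounded, so $L$ and $L^*$ are bounded, and both are invertible with bounded inverse by the terminating Neumann series; in particular $\overline{A^{-1} P_+ \gamma V P_-}$ is a genuinely bounded operator on $\Hrel$.

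The second step replaces $S$ by its self-adjoint extension $S_F$ from Lemma~\ref{lemma_schursym}. Since $A$ is self-adjoint on $\D_+$, the block-diagonal operator $\mathrm{diag}(A, S_F)$ on $\D_+ \oplus \D(S_F)$ is self-adjoint. I would then invoke the self-adjointness criterion for matrix operators in Appendix~\ref{section_matrixoperators} --- equivalently, the elementary fact that for self-adjoint $T$ and bounded, boundedly invertible $L$ the operator $L T L^*$ on $\{ v : L^* v \in \D(T) \}$ is self-adjoint --- to conclude that
\[
L \left(\begin{array}{cc} A & \mathbf{0} \\ \mathbf{0} & S_F \end{array}\right) L^*
\quad\text{on}\quad
\bigl\{ (f,g) \in \Hrel_+ \oplus \Hrel_- \ :\ L^*(f,g) \in \D_+ \oplus \D(S_F) \bigr\}
\]
is self-adjoint. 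Unfolding $L^*(f,g) = \bigl( f + \overline{A^{-1} P_+ \gamma V P_-}\, g,\ g \bigr)$ reproduces verbatim the domain \eqref{eq_definitiondomainHrelF}, and adding back the bounded symmetric operator $-P_+ B P_+$ --- which affects neither self-adjointness nor the domain --- shows that $H^{\mathrm{rel}}_F$ as defined by \eqref{eq_definitionHrelF} is self-adjoint.

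The third step verifies that $H^{\mathrm{rel}}_F$ extends $H^{\mathrm{rel}}$. For $(f,g) \in \D(H^{\mathrm{rel}}) = \D_+ \oplus (\D(V) \cap \Hrel_-)$ one has $g \in \D(V) \subseteq \D(V^{1/2})$ and, by symmetry of the Schur complement (cf.\ Lemma~\ref{lemma_frobschurofHrel} and Appendix~\ref{section_matrixoperators}), $g \in \D(S) \subseteq \D(S^*)$, hence $g \in \D(S_F) = \D(V^{1/2}) \cap \D(S^*)$; moreover $A^{-1} P_+ \gamma V P_- g \in \D_+$ since $A^{-1}$ maps $\Hrel_+$ into $\D_+$, so $L^*(f,g) \in \D_+ \oplus \D(S_F)$ and $(f,g) \in \D(H^{\mathrm{rel}}_F)$. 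On this set the factored expression agrees termwise with the factorization of $H^{\mathrm{rel}} + P_+ B P_+$ from Lemma~\ref{lemma_frobschurofHrel} (there $S$ and $S_F$, and $A^{-1} P_+ \gamma V P_-$ and its closure, coincide), whence $H^{\mathrm{rel}}_F \supseteq H^{\mathrm{rel}}$.

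I expect the main obstacle to lie in the bookkeeping around the off-diagonal factor: establishing boundedness of $P_- \gamma V P_+ A^{-1}$ and thereby of the triangular factors and their inverses, identifying $\overline{A^{-1} P_+ \gamma V P_-}$ with $(P_- \gamma V P_+ A^{-1})^*$, and checking that conjugating the self-adjoint block-diagonal operator by these bounded invertible triangular maps produces exactly the domain written in \eqref{eq_definitiondomainHrelF} and fits the hypotheses of the abstract criterion. Once these are in place, the bounded symmetric correction $-P_+ B P_+$ and the extension property are routine.
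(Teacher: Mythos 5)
Your proposal is correct and follows essentially the same route as the paper: conjugate the self-adjoint block-diagonal operator $\operatorname{diag}(A,S_F)$ by the bounded, boundedly invertible triangular factor and its adjoint, invoke the standard criterion for self-adjointness of $L T L^*$, then add the bounded symmetric term $-P_+ B P_+$ and verify the extension property via $S \subseteq S_F$. The only cosmetic difference is that the paper obtains boundedness of $P_-\gamma V P_+ A^{-1}$ and the adjoint relations $\mathcal{R}^*=\mathcal{T}$, $\mathcal{T}^*=\mathcal{R}$ by citing Lemma~\ref{lemma_boundedmatrixop} (closed graph theorem plus the symmetry conditions checked in Lemma~\ref{lemma_frobschurofHrel}), whereas you re-derive the boundedness explicitly via the Kato--Rellich bound and a bootstrap; note also that identifying $L^*$ with $\mathcal{T}$ uses not only self-adjointness of $A^{-1}$ but also the symmetry relation $(P_-\gamma V P_+)^* \supseteq P_+\gamma V P_-$, which is part of assumption (A\ref{assumptions_matrixop_1}) verified in Lemma~\ref{lemma_frobschurofHrel} that you already cite.
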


\begin{proof}
As $P_+ B P_+$ is bounded and symmetric, it suffices to show self-adjointness of $H^{\mathrm{rel}}_F + P_+ B P_+$, for which we introduce the short-hand notation 
\begin{align}\label{eq_def_operators_RST}
\mathcal{R} \, \mathcal{S}_F \, \mathcal{T}
\equiv
\left(
\begin{array}{cc}
\mathrm{id} & \mathbf{0} \\
P_- \gamma V P_+ A^{- 1} \quad & \mathrm{id}
\end{array}\right) 
\left(
\begin{array}{cc}
A & \mathbf{0} \\
\mathbf{0} & S_F
\end{array}
\right) 
\left(
\begin{array}{cc}
\mathrm{id} & \quad \overline{A^{-1} P_+ \gamma V P_-} \\
\mathbf{0} & \mathrm{id}
\end{array}
\right) .
\end{align}
where the three operators $\mathcal{R}$, $\mathcal{S}_F$, and $\mathcal{T}$ correspond to the three matrix operators on the right, respectively. We have already shown in the proof of Lemma~\ref{lemma_frobschurofHrel} that the assumptions~(A\ref{assumptions_matrixop_1})-(A\ref{assumptions_matrixop_6}) from Appendix~\ref{section_matrixoperators} are met. Hence, by Lemma~\ref{lemma_boundedmatrixop} from Appendix~\ref{section_matrixoperators}, $\mathcal{R}$ and $\mathcal{T}$ are bounded and boundedly invertible as well as $\mathcal{R}^* = \mathcal{T}$ and $\mathcal{T}^* = \mathcal{R}$ hold. Moreover, self-adjointness of $A$ (Lemma~\ref{lemma_Asa0resolvent}) and of $S_F$ (Lemma~\ref{lemma_schursym}) implies self-adjointness of $\mathcal{S}_F$. It is well-known that operator products with these properties are self-adjoint (see, e.g., \cite[Lemma~10.18]{schmuedgen}), and so, self-adjointness of $H^{\mathrm{rel}}_F$ follows. That $H^{\mathrm{rel}}_F$ extends $H^{\mathrm{rel}}$ follows from 
$\D (H^{\mathrm{rel}}) \subseteq \D (H^{\mathrm{rel}}_F)$ 
and $S \subseteq S_F$ (Lemma~\ref{lemma_schursym}). 
\end{proof}

\begin{theorem}[Claim~\ref{thm:main1_b}) of Theorem~\ref{thm:main1}]\label{thm:sa}
Let $0 < \kappa \leq 1$ and $|\gamma| M_{\kappa/2}^2 < 1$. Using that $\hat{\vec{P}}$ also denotes the operator of multiplication with $\vec{P} \in \R^3$, we define 
\begin{align} 
H_F := 
\PMpl \otimes \mathrm{id}_{L^2 (\dd r)} + 
\mathrm{id}_{L^2 (\dd R)} \otimes H^{\mathrm{rel}}_F
\end{align}
with domain 
\begin{align}
\D (H_F) =& 
\left\{ \left. \vphantom{\int_{\R^3}}
f \in L^2 (\R^3, \dd P; \Hrel) \right|\ f (\vec{P}) \in \D (H^\mathrm{rel}_F) \; \mathrm{for} \; \mathrm{almost}\; \mathrm{all} \; \vec{P}\in \R^3  
\right. \nonumber \\
&\qquad \left. \mathrm{and} \; \int_{\R^3}  \left\| \vec{P} \cdot \mathbf{M}^+ + H^\mathrm{rel}_F \right\|^2 \dd P < \infty \right\} .
\end{align}
Then, upon defining the unitary operator $\mathcal{W} := \F_{\vec{R}} U$, where $U$ is the coordinate transform \eqref{eq_transformationtorelcomcoord} and $\F_{\vec{R}}$ is the Fourier transform with respect to the center-of-mass coordinate, it holds that 
\begin{align}
\tilde{H}_{\mathrm{DC}} := \mathcal{W}^{-1} H_F \mathcal{W} + V_{\mathrm{ext}} + \beta m_1 \otimes \mathbf{1}_4 + \mathbf{1}_4 \otimes \beta m_2
\end{align}
with domain 
\begin{align}\label{eq_def_domain_Htilde_F}
\D (\tilde{H}_{\mathrm{DC}}) = \mathcal{W}^{-1} \D (H_F) 
\end{align}
defines a self-adjoint extension of $\HDC$. 
\end{theorem}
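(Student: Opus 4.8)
The plan is to pass, via the center-of-mass Fourier transform, to a direct-integral representation in which self-adjointness of $H_F$ follows fiberwise from Theorem~\ref{thm_self-adjointextensionHrel}. The point to keep in mind is that $\PMpl\otimes\mathrm{id}_{L^2(\dd r)}$ and $\mathrm{id}_{L^2(\dd R)}\otimes H^{\mathrm{rel}}_F$ both act on the common $\C^{16}$-factor and hence do not commute, so $H_F$ is not manifestly a sum of commuting self-adjoint operators; the direct-integral picture circumvents this. Write $H_0^{(0)}$ for $H_0$ with $m_1=m_2=0$. Applying $U$ to $H_0^{(0)}+\gamma V_{\mathrm{int}}$ gives, by \eqref{eq_def_T} and the definition of $V$, the operator $\PMpl\otimes\mathrm{id}_{L^2(\dd r)}+\mathrm{id}_{L^2(\dd R)}\otimes H^{\mathrm{rel}}$ on $U\D_0 = H^1(\R^3,\dd R)\otimes\C^{16}\otimes H^1(\R^3,\dd r)$; applying $\F_{\vec R}$ turns $\hat{\vec P}$ into multiplication by $\vec P$, so that on $\mathcal{W}\D_0$ the operator $\mathcal{W}(H_0^{(0)}+\gamma V_{\mathrm{int}})\mathcal{W}^{-1}$ acts fiberwise as $h(\vec P):=\vec P\cdot\mathbf{M}^+ + H^{\mathrm{rel}}$. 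Here one uses $\C^{16}\otimes H^1(\R^3,\dd r)\subseteq\D(H^{\mathrm{rel}})$: since $P_\pm$ commute with $(\hat p^2+1)^{1/2}$ (all being Fourier multipliers) and are bounded on $\Hrel$, they map $\C^{16}\otimes H^1(\R^3,\dd r)$ into itself, and $\C^{16}\otimes H^1(\R^3,\dd r)\subseteq\D(V)$ by Hardy's inequality, whence $\C^{16}\otimes H^1(\R^3,\dd r)\subseteq\D_+\oplus(\D(V)\cap\Hrel_-)=\D(H^{\mathrm{rel}})$.

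First I would replace each fiber $h(\vec P)$ by $h_F(\vec P):=\vec P\cdot\mathbf{M}^+ + H^{\mathrm{rel}}_F$. Since $\vec P\cdot\mathbf{M}^+$ is a bounded Hermitian matrix and $H^{\mathrm{rel}}_F$ is self-adjoint on $\Hrel$ by Theorem~\ref{thm_self-adjointextensionHrel}, Kato--Rellich with relative bound $0$ shows $h_F(\vec P)$ is self-adjoint on $\D(H^{\mathrm{rel}}_F)$ for every $\vec P\in\R^3$, and plainly $h_F(\vec P)\supseteq h(\vec P)$. For the direct-integral construction I would verify measurability of the family $\vec P\mapsto h_F(\vec P)$: the resolvents $(h_F(\vec P)\mp i)^{-1}$ are contractions, and the second resolvent identity gives $(h_F(\vec P)-i)^{-1}-(h_F(\vec P')-i)^{-1} = -(h_F(\vec P)-i)^{-1}\big((\vec P-\vec P')\cdot\mathbf{M}^+\big)(h_F(\vec P')-i)^{-1}$, so $\vec P\mapsto(h_F(\vec P)-i)^{-1}$ is even norm-continuous. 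By the standard theory of direct integrals of self-adjoint operators, $H_F=\int_{\R^3}^{\oplus}h_F(\vec P)\,\dd P$ is then self-adjoint on exactly the set $\{f\in L^2(\R^3,\dd P;\Hrel):f(\vec P)\in\D(H^{\mathrm{rel}}_F)\ \text{for a.e.}\ \vec P\ \text{and}\ \int_{\R^3}\|h_F(\vec P)f(\vec P)\|^2\dd P<\infty\}$, which is the domain $\D(H_F)$ stated in the theorem.

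Next I would check $\mathcal{W}\D_0\subseteq\D(H_F)$. For $f\in\mathcal{W}\D_0$ one has $f(\vec P)\in\C^{16}\otimes H^1(\R^3,\dd r)\subseteq\D(H^{\mathrm{rel}})\subseteq\D(H^{\mathrm{rel}}_F)$ for a.e.\ $\vec P$, and $\|h_F(\vec P)f(\vec P)\|=\|h(\vec P)f(\vec P)\|\leq |\vec P|\,\|\mathbf{M}^+\|\,\|f(\vec P)\| + \|\Mminp f(\vec P)\| + \|\gamma V f(\vec P)\|$, which by Hardy is bounded by $C(1+|\vec P|)\,\|f(\vec P)\|_{\C^{16}\otimes H^1(\R^3,\dd r)}$; the resulting integral converges because elements of $\D_0$ carry $H^1$-regularity separately in $\vec R$ and in $\vec r$. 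Moreover $h_F(\vec P)f(\vec P)=h(\vec P)f(\vec P)$ a.e., so $H_F$ extends $\mathcal{W}(H_0^{(0)}+\gamma V_{\mathrm{int}})\mathcal{W}^{-1}$, and therefore $\mathcal{W}^{-1}H_F\mathcal{W}$ is a self-adjoint extension of the massless, external-field-free operator $H_0^{(0)}+V_{\mathrm{int}}$ on $\D_0$. In particular $\D(H_F)\supseteq\mathcal{W}\D_0$ is dense.

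Finally, $V_{\mathrm{ext}} + \beta m_1\otimes\mathbf{1}_4 + \mathbf{1}_4\otimes\beta m_2$ is everywhere defined on $\H_2$, bounded (by \eqref{eq_definitionV_ext} and $m_1,m_2<\infty$) and symmetric; adding it to the self-adjoint operator $\mathcal{W}^{-1}H_F\mathcal{W}$ yields, by Kato--Rellich, a self-adjoint operator $\tilde{H}_{\mathrm{DC}}$ with unchanged domain $\mathcal{W}^{-1}\D(H_F)$, which is \eqref{eq_def_domain_Htilde_F}. Since this bounded term coincides with the corresponding summand of $\HDC$ on all of $\D_0$ and $\mathcal{W}^{-1}H_F\mathcal{W}$ extends $H_0^{(0)}+V_{\mathrm{int}}$ there, we get $\tilde{H}_{\mathrm{DC}}\supseteq\HDC$, as claimed. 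The only slightly delicate points are the identification of the natural direct-integral domain with the set written in the statement and the inclusion $\C^{16}\otimes H^1\subseteq\D(H^{\mathrm{rel}})$; the rest is bookkeeping with unitary conjugations and a bounded perturbation, the substantive analysis having been carried out in Theorem~\ref{thm_self-adjointextensionHrel}.
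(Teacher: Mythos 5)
Your proposal is correct and follows essentially the same route as the paper: fiberwise application of Theorem~\ref{thm_self-adjointextensionHrel} via a direct integral over $\vec P$, measurability/continuity of the resolvent via the second resolvent identity, and then a bounded symmetric perturbation to restore the mass terms and $V_{\mathrm{ext}}$. The extra checks you spell out (the inclusion $\C^{16}\otimes H^1(\R^3,\dd r)\subseteq\D(H^{\mathrm{rel}})$ and the estimate showing $\mathcal{W}\D_0\subseteq\D(H_F)$) are stated without elaboration in the paper, so your version is slightly more explicit but not a different argument.
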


\begin{proof}
First, we note that $\beta m_1 \otimes \mathbf{1}_4 + \mathbf{1}_4 \otimes \beta m_2$ as well as $V_{\mathrm{ext}}$ are symmetric and bounded. Therefore, they can be added by means of a bounded perturbation. Moreover, the coordinate transformation $U$ as well as the Fourier transform of the center-of-mass coordinate are unitary. Hence, self-adjointness of $H_F$ implies self-adjointness of $\tilde{H}_{\mathrm{DC}}$.  

In order to prove self-adjointness of $H_F = \PMpl \otimes \mathrm{id} + \mathrm{id} \otimes H_F^{\mathrm{rel}}$, we employ the method of direct fiber integrals. We define
$H_F (\vec{P}) := \vec{P} \cdot \mathbf{M}^+ + H^\mathrm{rel}_F $
for a fixed $\vec{P} \in \R^3$ and so 
$\{ H_F (\vec{P}) \}_{\vec{P}\in\R^3}$ 
is a family of self-adjoint operators with common domain 
$\D (H^\mathrm{rel}_F)$ 
in the underlying Hilbert space $\Hrel$ by Theorem~\ref{thm_self-adjointextensionHrel}. The map 
$\vec{P} \mapsto H_F (\vec{P})$
from $\R^3$ into the self-adjoint operators on $\Hrel$ is measurable as for all $f,g \in \Hrel$ the map $\vec{P} \mapsto \left\langle f, (H_F (\vec{P}) + i)^{-1} g \right\rangle$ is continuous in $\vec{P}$:
\begin{align}
&\left| 
\left\langle f, \left( H_F (\vec{P}) + i \right)^{-1} g \right\rangle 
- 
\left\langle f, \left( H_F (\vec{P}') + i \right)^{-1} g \right\rangle 
\right| \nonumber\\
& \overset{(\ast)}{=} 
\left| 
\left\langle f, \left( H_F (\vec{P}) + i \right)^{-1} 
\left( H_F (\vec{P}') - H_F (\vec{P}) \right) 
\left( H_F (\vec{P}') + i \right)^{- 1} g 
\right\rangle 
\right| \nonumber\\
& = 
\left|
\left\langle f, \left( H_F (\vec{P}) + i \right)^{- 1} 
(\vec{P}' -\vec{P}) \cdot \mathbf{M}^+
\left( H_F (\vec{P}') + i \right)^{- 1} g 
\right\rangle 
\right| \nonumber\\
&\leq 	\norm{f} \norm{\left( H_F (\vec{P}') + i \right)^{- 1}}^2 \sum_{k = 1}^3 \norm{\mathrm{M}^+_k} \Abs{P_k' - P_k} \norm{g} \xrightarrow{\vec{P} \rightarrow \vec{P}'} 0
\end{align}
since $\| (H_F (\vec{P}) + i )^{-1} \| \leq 1$. In $(\ast)$, we used the second resolvent identity. 

Hence, due to \cite[Theorem~XIII.85]{RS4}, the direct fiber integral 
\begin{align}
H'
=
\int_{\R^3}^\oplus
H_F (\vec{P})
\dd P
\end{align}
with domain 
\begin{align}
\D (H') =& 
\left\{ \left. \vphantom{\int_{\R^3}}
f \in L^2 (\R^3, \dd P; \Hrel) \right|\ f (\vec{P}) \in \D (H^\mathrm{rel}_F) \; \mathrm{for} \; \mathrm{almost}\; \mathrm{all} \; \vec{P}\in \R^3  
\right. \nonumber \\
&\qquad \left. \mathrm{and} \; \int_{\R^3}  \norm{H_F (\vec{P}) f (\vec{P})}_{\Hrel}^2 \dd P < \infty \right\}  
\end{align}
defines a self-adjoint operator in the Hilbert space 
$L^2 (\R^3, \dd P ; \Hrel)$
which acts as $(H' f) (\vec{P}) = H_F (\vec{P}) f (\vec{P})$ for almost all $\vec{P} \in \R^3$. This is precisely the action of $H_F$. 
Since 
$\D_0 = H^1 (\R^3, \dd P) \otimes \C^{16} \otimes H^1 (\R^3, \dd r) $
is contained in $\D (H_F)$ and $H^{\mathrm{rel}}_F$ extends $H^{\mathrm{rel}}$, we obtain  
$H_F \upharpoonright \D_0 = \mathcal{W} \HDC \mathcal{W}^{-1}$. 
This proves the theorem. 
\end{proof}

\subsection{Proof of Theorem~\ref{thm_coreV1/2}} \label{section_prooftechnicalstuff}

In order to prove Theorem~\ref{thm_coreV1/2}, we will need to control the integral kernel of the operator $\tau$ from Definition~\ref{def_tau_projections}. There, $\tau$ was defined as Fourier multiplier, i.e., for all $f \in \Hrel$ and for almost all $\vec{r} \in \R^3$, we had 
\begin{align}
(\tau f) (\vec{r}) 
=
\lim_{M \rightarrow \infty} \int_{| \vec{p} | \leq M} e^{2 \pi i\vec{r} \cdot \vec{p}} \, \tau (\vec{p}) \hat{f} (\vec{p}) \dd p .
\end{align}
All entries of the matrix $\tau (\vec{p})$ are fractions. By expanding the product in \eqref{eq_tauofp}, one finds that the denominator of all of them is $p^2$, whereas each numerator is given by one of the following expressions: 
\begin{align}
p_3^2 \, , \quad (p_1 \pm i p_2) p_3 \, , \quad (p_1 \pm i p_2)^2 \, , \quad (p_1 - i p_2) (p_1 + i p_2) .
\end{align}
In order to control $\tau$, it therefore suffices to control the corresponding position space integral kernels $K_{ij} (\vec{y})$ of the following operator $T_{ij}$, $i,j = 1,2,3$, which is defined for all $f \in L^2 (\R^3, \dd r)$ and for almost all $\vec{r} \in \R^3$ by  
\begin{align}\label{eq_def_Tij}
(T_{ij} f) (\vec{r}) 
:=
\lim_{M \rightarrow \infty} \int_{| \vec{p} | \leq M} e^{2 \pi i\vec{r} \cdot \vec{p}} \, \frac{p_i\, p_j}{p^2} \hat{f} (\vec{p}) \dd p ,
\end{align}
similar to the definition of $\tau$, where the limit exists in the $L^2$-sense. The factor $1/p^2$ suggests that we will need the following theorem, which we cite from \cite{liebloss} in the case of $\R^3$. 

\begin{theorem}\label{thm_hardylittlewoodsobolev}
Let $\alpha \in \R^+$, and define $c_\alpha := \pi^{-\alpha / 2} \Gamma (\alpha / 2)$. 
\begin{enumerate}[a)]
	\item \label{thm_hardylittlewoodsobolev_a}
	Let $f$ be a function in $C_c^\infty (\R^3)$ and let $0 < \alpha < 3$. Then, 
	\begin{align}
	c_\alpha 
	\int_{\R^3} 
	e^{2 \pi i \vec{r} \cdot \vec{p}} \frac{1}{p^\alpha} \hat{f} (\vec{p})
	\dd p
	=
	c_{3-\alpha} 
	\int_{\R^3}
	\frac{1}{|\vec{r}-\vec{y}|^{3-\alpha}} f (\vec{y})
	\dd y .
	\end{align}
	
	\item \label{thm_hardylittlewoodsobolev_b}
	If $0 < \alpha < 3/2$ and if $f \in L^p (\R^3, \mathrm{d}^3 r)$ with $p = 6/(3+2 \alpha)$, then $\hat{f}$ exists in the sense of the Hausdorff-Young inequality (see \cite[Theorem~5.7]{liebloss}). Moreover, defining the function $g$ for almost all $\vec{r} \in \R^3$ by 
	\begin{align}
	g (\vec{r}) 
	= 
	c_{3-\alpha} 
	\int_{\R^3}
	\frac{1}{|\vec{r}-\vec{y}|^{3-\alpha}} f (\vec{y})
	\dd y ,
	\end{align}
	we have $g \in L^2 (\R^3, \mathrm{d}^3 r)$. Furthermore, for almost all $\vec{p} \in \R^3$, we obtain 
	\begin{align}
	c_\alpha |\vec{p}|^{-\alpha} \hat{f} (\vec{p}) = \hat{g} (\vec{p}) . 
	\end{align}
\end{enumerate}
\end{theorem}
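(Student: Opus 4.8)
This is the classical identity that, up to the normalization constants $c_\alpha$, the Fourier transform of the Riesz kernel $| \cdot |^{-(3-\alpha)}$ equals $| \cdot |^{-\alpha}$, together with its extension to the relevant $L^p$-setting; it is exactly the content of \cite{liebloss}, but let me indicate how one would prove it directly.

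For part \ref{thm_hardylittlewoodsobolev_a}), the plan is Gaussian subordination. I would start from the elementary identity, valid for $\alpha > 0$,
\begin{align}
\frac{1}{p^\alpha} = \frac{\pi^{\alpha/2}}{\Gamma (\alpha/2)} \int_0^\infty t^{\alpha/2 - 1} e^{-\pi t p^2} \, \mathrm{d} t ,
\end{align}
insert it into the left-hand side and, using that $c_\alpha \pi^{\alpha/2} / \Gamma (\alpha/2) = 1$, interchange the $\vec p$- and $t$-integrations to obtain $\int_0^\infty t^{\alpha/2 - 1} \bigl( \F^{-1} ( e^{-\pi t p^2} \hat f ) \bigr) (\vec r) \, \mathrm{d} t$. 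Since the Fourier transform of $e^{-\pi t p^2}$ on $\R^3$ is the Gaussian $G_t (\vec x) := t^{-3/2} e^{-\pi | \vec x |^2 / t}$, the inner expression is the convolution $(G_t * f)(\vec r)$, and the Fubini step is justified because $\int_0^\infty t^{\alpha/2 - 1} G_t (\vec x) \, \mathrm{d} t$ converges for $0 < \alpha < 3$ --- near $t = 0$ since $\alpha > 0$, near $t = \infty$ since $\alpha < 3$ --- with value $c_{3-\alpha} | \vec x |^{-(3-\alpha)}$, which is locally integrable, so that the remaining $\vec y$-integral against the compactly supported $f$ converges absolutely. Carrying out the $t$-integral (substitute $u = 1/t$ and apply the Gamma integral once more) then produces exactly $c_{3-\alpha} \int_{\R^3} | \vec r - \vec y |^{-(3-\alpha)} f (\vec y) \dd y$.

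For part \ref{thm_hardylittlewoodsobolev_b}), the first thing I would note is that the hypothesis $0 < \alpha < 3/2$ is equivalent to $1 < p < 2$ for $p = 6/(3+2\alpha)$, which is precisely the range in which both the Hausdorff--Young inequality holds (so $\hat f \in L^{p'}$ is well-defined) and the Hardy--Littlewood--Sobolev inequality applies to convolution with $| \cdot |^{-(3-\alpha)}$; moreover the HLS exponent relation $1/q = 1/p - \alpha/3$ forces $q = 2$, so that $g \in L^2 (\R^3)$. To identify $\hat g$, I would approximate: choose $f_n \in C_c^\infty (\R^3)$ with $f_n \to f$ in $L^p$. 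Then $\hat f_n \to \hat f$ in $L^{p'}$ by Hausdorff--Young, while $g_n := c_{3-\alpha} | \cdot |^{-(3-\alpha)} * f_n \to g$ in $L^2$ by HLS, hence $\hat g_n \to \hat g$ in $L^2$ by Plancherel. Part \ref{thm_hardylittlewoodsobolev_a}) applied to each $f_n$ gives $\hat g_n = c_\alpha | \cdot |^{-\alpha} \hat f_n$ almost everywhere; passing to a subsequence along which both $\hat f_n \to \hat f$ and $\hat g_n \to \hat g$ pointwise almost everywhere then yields $\hat g = c_\alpha | \cdot |^{-\alpha} \hat f$.

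None of this is deep, and the main obstacle is a matter of care rather than difficulty. In part \ref{thm_hardylittlewoodsobolev_a}) the Fubini interchange has to be checked, and this is exactly where the condition $0 < \alpha < 3$ is used; in part \ref{thm_hardylittlewoodsobolev_b}) the delicate point is that $| \cdot |^{-\alpha}$ is an unbounded Fourier multiplier, so one cannot simply multiply the $L^{p'}$-convergence $\hat f_n \to \hat f$ by it, and the clean remedy is the passage to a pointwise-almost-everywhere convergent subsequence on both sides. That the admissible ranges of $\alpha$ coincide with those demanded by Hausdorff--Young and Hardy--Littlewood--Sobolev is precisely what makes the hypotheses of the theorem the natural ones for this approach.
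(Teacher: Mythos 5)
The paper's own proof of this theorem is simply a citation to \cite[Theorem~5.9 and Corollary~5.10]{liebloss}; your reconstruction follows exactly the standard route those results are proved by (Gaussian subordination via the $\Gamma$-integral for part~a), then Hausdorff--Young plus Hardy--Littlewood--Sobolev and an approximation-by-$C_c^\infty$ argument for part~b)), so the approaches coincide. One small imprecision: for fixed $\vec x\neq 0$ the integral $\int_0^\infty t^{\alpha/2-1}G_t(\vec x)\,\mathrm{d}t$ converges near $t=0$ automatically because the Gaussian $e^{-\pi|\vec x|^2/t}$ vanishes to infinite order there, regardless of $\alpha$; the hypothesis $\alpha>0$ is what makes the \emph{other} subordination integral $\int_0^\infty t^{\alpha/2-1}e^{-\pi t p^2}\,\mathrm{d}t$ converge near $t=0$, and the restriction $\alpha<3$ is what controls the $t\to\infty$ end of both. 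This does not affect the validity of your argument, only the attribution of which hypothesis is used where.
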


\begin{proof}
\begin{enumerate}[a)]
	\item See \cite[Theorem~5.9]{liebloss}. 
	\item See \cite[Corollary~5.10]{liebloss}. 
	\qedhere
\end{enumerate}
\end{proof}

In order to compute the integral kernels $K_{ij} (\vec{y})$, we will also need a result from the theory of Calder\'{o}n-Zygmund singular integrals. For later reference, we state it in the following in the case of $\R^3$. 

\begin{theorem}\label{thm_calderonzygmund}
If $K(\vec{y})$ is a homogeneous function of degree $-3$, i.e., 
$K (\lambda \vec{y}) = \lambda^{-3} K (\vec{y})$ 
for almost all $\vec{y} \in \R^3$ and for all $\lambda > 0$, 
and if $K (\vec{y})$ has in addition the following properties 
\begin{enumerate}[a)]
	\item 
	$\displaystyle
	\int_{S^2} K (\vec{y}) \,\mathrm{d} \sigma (\vec{y}) = 0 $, 
	where $S^2$ denotes the surface of the unit ball and $\mathrm{d} \sigma$ the corresponding surface measure, and \\
	
	\item 
	$\displaystyle 
	\int_{S^2} \left| K (\vec{y}) + K (-\vec{y}) \right| \log^+ \left| K (\vec{y}) + K (-\vec{y}) \right| \, \mathrm{d} \sigma (\vec{y}) < \infty $,
	where $\log^+$ denotes\\ 
	
	the positive part of the logarithm, 
\end{enumerate}
then, if $f \in L^p (\R^3, \dd r)$, $1 < p < \infty$, the limit 
\begin{align}\label{eq_truncatedsingint}
\lim_{\varepsilon \rightarrow 0}
\int_{|\vec{y}|>\varepsilon} 
K(\vec{y}) f (\vec{r} - \vec{y}) 
\dd y 
\end{align} 
exists in $L^p$-sense and pointwise for almost all $\vec{r} \in \R^3$. Furthermore, there exists a constant $A>0$, depending on $p$ and $K$ only, such that 
\begin{align}
\left\| 
\sup_{\varepsilon > 0} 
\left| \int_{| \vec{y} | > \varepsilon}
K (\vec{y}) f (\cdot -\vec{y}) \dd y 
\right|
\right\| 
\leq 
A \left\|	f \right\| .
\end{align}
\end{theorem}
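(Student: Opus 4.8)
The statement is the classical theorem of Calder\'{o}n and Zygmund on singular integral operators whose kernel satisfies an $L\log L$-type regularity condition, and the plan is to reproduce its proof, which proceeds in three layers. \emph{First}, I would establish $L^2$-boundedness of the principal-value operator $Tf := \lim_{\varepsilon\to 0}\int_{|\vec{y}|>\varepsilon} K(\vec{y}) f(\cdot-\vec{y})\dd y$. Because $K$ is homogeneous of degree $-3$ with vanishing mean over $S^2$ (hypothesis a)), the tempered distribution $\mathrm{p.v.}\,K$ has a Fourier transform $m(\vec{\xi})$ which is homogeneous of degree $0$ and bounded: up to normalization constants $m(\vec{\xi})$ equals the integral over $S^2$ of $K(\vec{\omega})$ against a kernel of the form $-\log|\vec{\xi}\cdot\vec{\omega}| + c\,\sgn(\vec{\xi}\cdot\vec{\omega})$, whose only singularity, the logarithm at $\vec{\xi}\cdot\vec{\omega}=0$, is integrable on $S^2$. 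Hence $T$ is a Fourier multiplier with symbol in $L^\infty$ and is bounded on $L^2(\R^3,\dd r)$ by Plancherel.

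\emph{Second}, I would obtain $L^p$-boundedness for $1<p<\infty$ via the weak $(1,1)$ estimate and interpolation. The obstruction is that a homogeneous degree $-3$ kernel need not be Lipschitz on $S^2$, so a H\"ormander-type smoothness estimate for $K$ is not directly available, and this is exactly where hypothesis b) is used. I split $K = K_e + K_o$ into even and odd parts. For $K_o$ the method of rotations expresses the associated operator $T_o$ as an average over $\vec{\omega}\in S^2$ of one-dimensional directional Hilbert transforms weighted by $K_o(\vec{\omega})$, so its $L^p$-boundedness follows from that of the Hilbert transform, the weight being in $L^1(S^2)$ (dominated by the $L\log L$-quantity in b)). For $K_e$ one has $K_e(\vec{y}) = K_e(-\vec{y})$, so hypothesis b) applies to $K_e$ itself and, after a dyadic decomposition of the kernel, supplies the cancellation needed to run the Calder\'{o}n--Zygmund decomposition and obtain the weak $(1,1)$ bound for $T_e$. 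Marcinkiewicz interpolation with the $L^2$-bound gives $L^p$ for $1<p\le 2$; duality (the transpose kernel $\vec{y}\mapsto K(-\vec{y})$ satisfies the same hypotheses) yields $2\le p<\infty$.

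\emph{Third}, to reach the maximal bound and the convergence statements, I would write $T_\ast f := \sup_{\varepsilon>0}\bigl|\int_{|\vec{y}|>\varepsilon} K(\vec{y}) f(\cdot-\vec{y})\dd y\bigr|$ and prove a Cotlar-type pointwise inequality $T_\ast f \le M(Tf) + C\, Mf$, with $M$ the Hardy--Littlewood maximal operator, whose proof once more uses the $\log^+$-integrability of $K(\vec{y})+K(-\vec{y})$ to control the discrepancy between the sharp truncation and a smoothed truncation. Combined with the $L^p$-boundedness of $T$ and of $M$, this yields the claimed estimate for $\|\sup_{\varepsilon}|\int_{|\vec{y}|>\varepsilon} K(\vec{y}) f(\cdot-\vec{y})\dd y|\|$. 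Finally, pointwise a.e.\ and $L^p$ convergence of the truncated integrals as $\varepsilon\to 0$ hold trivially for $f\in C_c^\infty(\R^3)$ and extend to all $f\in L^p(\R^3,\dd r)$ by the standard density-plus-maximal-inequality argument. The heart of the matter, and the main obstacle, is that hypothesis b) is weaker than a Dini or Lipschitz condition, so every step that would ordinarily invoke regularity of $K$ on $S^2$ must be rerouted either through the method of rotations (odd part) or through the $\log^+$-integrability of the symmetrized kernel (even part). Since all of this is carried out in detail in the original paper of Calder\'{o}n and Zygmund and in Stein's monograph, one may alternatively simply invoke it.
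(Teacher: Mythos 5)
The paper's proof of this theorem consists of a single line: a citation to Calder\'{o}n and Zygmund's 1956 paper, \cite[Theorem~1]{calderonzygmund1956}. Your sketch is a faithful reconstruction of the internal structure of that classical proof: even/odd decomposition of the kernel, method of rotations for the odd part, Calder\'{o}n--Zygmund decomposition with the $L\log L$ condition for the even part, a Cotlar-type maximal inequality, and the density argument for convergence. You also note at the end that one may simply invoke the reference, which is exactly what the paper does; so the two approaches coincide at the level of what is actually claimed, and differ only in that you additionally outline the internal argument.

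One small imprecision in the sketch: in the $L^2$ step you argue that the multiplier symbol $m(\vec{\xi}) = \int_{S^2} K(\vec{\omega})\bigl[-\log|\vec{\xi}'\cdot\vec{\omega}| + c\,\sgn(\vec{\xi}'\cdot\vec{\omega})\bigr]\,\mathrm{d}\sigma(\vec{\omega})$ is bounded because the logarithmic singularity is integrable on $S^2$. That alone does not give a uniform bound in $\vec{\xi}'$ when $K$ is merely $L^1(S^2)$; for the even part one needs the $L\log L$ hypothesis already here, via a Young-type inequality pairing $|K_e|\log^+|K_e|$ against the exponential integrability of the logarithm on the sphere, while the $\sgn$ term is controlled by $\|K_o\|_{L^1(S^2)}$. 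You correctly identify this mechanism later (for the weak $(1,1)$ and Cotlar steps), but it is in fact already required at the $L^2$ stage. This does not affect the correctness of the overall plan, and of course the shortcut you offer---citing Calder\'{o}n--Zygmund or Stein---is the route the paper takes.
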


\begin{proof}
See \cite[Theorem~1]{calderonzygmund1956}. 
\end{proof}

\begin{lemma}\label{lemma_kerneltau}
For all $f \in L^2 (\R^3, \dd r)$, all $i,j = 1,2,3$, and almost all $\vec{r} \in \R^3$, the action of $T_{ij}$, defined in \eqref{eq_def_Tij}, can be given by the formula   
\begin{align}
\left( T_{ij} f \right) (\vec{r})
=
\frac{\delta_{ij}}{3}
f (\vec{r}) 
-
\frac{1}{4\pi}
\lim_{\varepsilon \rightarrow 0} 
\int_{| \vec{y} | > \varepsilon}
K_{ij} (\vec{y}) f (\vec{r}-\vec{y}) 
\dd y ,
\end{align}
where the Kronecker delta $\delta_{ij}$ is $1$ if $i=j$ and $0$ else. The integral kernels $K_{ij} (\vec{y})$ of $T_{ij}$ are then for all $i,j = 1,2,3$ and all $|\vec{y}| > 0$ given by 
\begin{align}
K_{ij} (\vec{y}) 
:= 
\frac{\partial^2}{\partial y_i \partial y_j} \frac{1}{| \vec{y} |} 
=
\left\{
\begin{aligned}
&\frac{3 y_i y_j}{| \vec{y} |^5} \, ,
\quad  i \neq j \\
\frac{3 y^2_i}{| \vec{y} |^5} &- \frac{1}{| \vec{y} |^3} \, , \quad i = j. 
\end{aligned}
\right.
\end{align}
\end{lemma}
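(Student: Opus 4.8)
The plan is to recognise $T_{ij}$ --- the Fourier multiplier with symbol $p_ip_j/p^2$ --- as a Newtonian potential composed with two partial derivatives, and then to identify the distributional Hessian of $1/|\cdot|$ as a Calder\'on--Zygmund singular integral plus a local multiple of the identity. I would first establish the formula for $f\in C_c^\infty(\R^3)$, where $\hat f$ is Schwartz so that all integrals below converge absolutely and $(T_{ij}f)(\vec r)$ is literally $\int e^{2\pi i\vec r\cdot\vec p}(p_ip_j/p^2)\hat f(\vec p)\dd p$. Writing $p_ip_j\hat f(\vec p)=-(4\pi^2)^{-1}\widehat{\partial_i\partial_j f}(\vec p)$ and noting that $\widehat{\partial_i\partial_j f}$ vanishes to second order at the origin so that the factor $p^{-2}$ is harmless, Theorem~\ref{thm_hardylittlewoodsobolev}\ref{thm_hardylittlewoodsobolev_a}) applied with $\alpha=2$ (for which $c_2=1/\pi$ and $c_1=1$) yields
\begin{align}
(T_{ij}f)(\vec r) = -\frac{1}{4\pi}\int_{\R^3}\frac{(\partial_i\partial_j f)(\vec y)}{|\vec r-\vec y|}\dd y .
\end{align}

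Next I would transfer the two derivatives onto the singular factor. After the substitution $\vec w=\vec r-\vec y$ the right-hand side equals $-(4\pi)^{-1}\langle\,|\cdot|^{-1},\partial_i\partial_j\psi\,\rangle$ with $\psi(\vec w):=f(\vec r-\vec w)$, so everything reduces to the distributional identity $\partial_i\partial_j|\vec w|^{-1}=\mathrm{p.v.}\,K_{ij}(\vec w)-\tfrac{4\pi}{3}\delta_{ij}\,\delta(\vec w)$, where $K_{ij}(\vec w)=\partial_i\partial_j|\vec w|^{-1}$ and the two closed forms follow by a one-line differentiation. I would prove this identity by excising the ball $\{|\vec w|\le\varepsilon\}$, integrating by parts twice on its complement, keeping precise track of the two surface integrals over $\{|\vec w|=\varepsilon\}$ --- one of them contributing the local term $-\tfrac{4\pi}{3}\delta_{ij}f(\vec r)$ as $\varepsilon\to0$ while the other vanishes --- and recognising the remaining volume integral as $\lim_{\varepsilon\to0}\int_{|\vec w|>\varepsilon}K_{ij}(\vec w)f(\vec r-\vec w)\dd w$. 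The normalisation of the $\delta$-term is pinned down by the consistency check $\sum_iK_{ii}=0$ away from the origin together with $\Delta|\vec w|^{-1}=-4\pi\delta$. Combining with the previous display gives the asserted formula for all $f\in C_c^\infty(\R^3)$.

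Then I would verify that $K_{ij}$ satisfies the hypotheses of Theorem~\ref{thm_calderonzygmund}: homogeneity of degree $-3$ is immediate from the closed form; $\int_{S^2}K_{ij}\,\mathrm{d}\sigma=0$ holds by parity for $i\ne j$ and from $\int_{S^2}w_i^2\,\mathrm{d}\sigma=\tfrac{4\pi}{3}=\tfrac13\int_{S^2}\mathrm{d}\sigma$ for $i=j$; and since $K_{ij}$ is even, $K_{ij}(\vec w)+K_{ij}(-\vec w)=2K_{ij}(\vec w)$ is bounded on $S^2$, so condition b) is trivial. Hence the principal value may be replaced by the truncated-integral limit, which by Theorem~\ref{thm_calderonzygmund} exists in $L^2$ and pointwise a.e.\ for every $f\in L^2(\R^3)$, with the truncations bounded on $L^2$ uniformly in $\varepsilon$. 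Finally I would pass from $C_c^\infty$ to all of $L^2$: $T_{ij}$ is bounded on $L^2$ because $|p_ip_j/p^2|\le1$, the operator $f\mapsto\lim_{\varepsilon\to0}\int_{|\vec w|>\varepsilon}K_{ij}(\vec w)f(\cdot-\vec w)\dd w$ is bounded on $L^2$ by the Calder\'on--Zygmund maximal estimate, so both sides of the claimed identity are bounded operators on $L^2$ which agree on the dense subspace $C_c^\infty(\R^3)$ and therefore everywhere.

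The main obstacle is the distributional integration by parts: the two successive integrations by parts on $\{|\vec w|>\varepsilon\}$ produce several boundary terms over the sphere of radius $\varepsilon$, and one has to evaluate their $\varepsilon\to0$ limits carefully in order to obtain the coefficient $-\tfrac{4\pi}{3}\delta_{ij}$ exactly and to confirm that no further local contribution survives. Once Theorems~\ref{thm_hardylittlewoodsobolev} and \ref{thm_calderonzygmund} are invoked, the remaining steps are routine.
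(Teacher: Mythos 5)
Your proposal is correct and follows essentially the same route as the paper: restrict to $C_c^\infty$, apply Theorem~\ref{thm_hardylittlewoodsobolev}\ref{thm_hardylittlewoodsobolev_a}) with $\alpha=2$ to turn $T_{ij}$ into a Newtonian potential, integrate by parts twice on $\{\varepsilon<|\vec y|<R\}$ tracking the two surface terms (one vanishing, one producing $\tfrac{4\pi}{3}\delta_{ij}f(\vec r)$), invoke Theorem~\ref{thm_calderonzygmund} for the truncated singular integral, and extend to $L^2$ by density via the Calder\'on--Zygmund maximal bound. The only cosmetic difference is that you package the integration-by-parts computation as the distributional identity $\partial_i\partial_j|\vec w|^{-1}=\mathrm{p.v.}\,K_{ij}-\tfrac{4\pi}{3}\delta_{ij}\delta$ (with the trace consistency check $\sum_iK_{ii}=0$, $\Delta|\cdot|^{-1}=-4\pi\delta$), whereas the paper carries out the same computation inline after moving the derivatives onto $f$ first.
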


\begin{proof}
First, we let $f \in C_c^\infty (\R^3)$. Using \eqref{eq_def_Tij}, we calculate for all $\vec{r} \in \R^3$ 
\begin{align}
-4 \pi \, & (T_{ij} f) (\vec{r}) 
=
-4 \pi
\int_{\R^3} 
e^{2 \pi i\vec{r} \cdot \vec{p}} \, \frac{p_i\, p_j}{p^2} \hat{f} (\vec{p}) 
\dd p \nonumber \\
& =
\frac{1}{\pi}
\int_{\R^3} 
\left(
\frac{\partial^2}{\partial r_i\partial r_j} e^{2 \pi i\vec{r} \cdot \vec{p}} 
\right)
\frac{1}{p^2} \hat{f} (\vec{p}) 
\dd p \nonumber \\
& \overset{\mathrm{(i)}}{=} 
\frac{1}{\pi}
\frac{\partial^2}{\partial r_i\partial r_j}
\int_{\R^3} 
e^{2 \pi i\vec{r} \cdot \vec{p}} \frac{1}{p^2} \hat{f} (\vec{p}) 
\dd p 
\overset{\mathrm{(ii)}}{=} 
\frac{\partial^2}{\partial r_i\partial r_j}
\int_{\R^3} 
\frac{1}{| \vec{y} |} f (\vec{r} - \vec{y}) 
\dd y \nonumber \\
& \overset{\mathrm{(iii)}}{=} 
\int_{\R^3} 
\frac{1}{| \vec{y} |}
\frac{\partial^2 f (\vec{r}-\vec{y})}{\partial r_i \partial r_j} 
\dd y 
=
\int_{\R^3} 
\frac{1}{| \vec{y} |}
(-1)^2 \frac{\partial^2 f (\vec{r}-\vec{y})}{\partial y_i \partial y_j} 
\dd y \label{eq_computationofTij}
\end{align}
where we used dominated convergence to commute the derivatives with the integral in (i),  Theorem~\ref{thm_hardylittlewoodsobolev}\ref{thm_hardylittlewoodsobolev_a}) for the Fourier integral in (ii), and the calculus of distributional convolutions and derivatives in (iii) (see \cite[Lemma~6.8]{liebloss}). It applies as $|\cdot |^{-1} \in L^1_{\mathrm{loc}} (\R^3)$ as well as $f \in C_c^\infty (\R^3)$. In order to continue with integration by parts, we change the domain of integration to the set 
$B = \{ \varepsilon < |\vec{y}| < R \}$, where $\varepsilon > 0$ is fixed and $R>0$ is chosen sufficiently large such that $\mathrm{supp} f (\vec{r} - \cdot) \subset B_R(0)$. As $f \in C_c^\infty (\R^3)$, this is possible for all fixed $\vec{r} \in \R^3$. 
We obtain
\begin{subequations}\label{eq_integrationbyparts}
	\begin{align}
	\int_{B} 
	&\frac{1}{| \vec{y} |}
	\frac{\partial^2 f (\vec{r}-\vec{y})}{\partial y_i \partial y_j} 
	\dd y \nonumber \\
	={}&
	-\int_{B} 
	\left(\frac{\partial}{\partial y_i}\frac{1}{| \vec{y} |}\right)
	\left(\frac{\partial f (\vec{r}-\vec{y})}{\partial y_j}\right)
	\dd y 
	+
	\int_{|\vec{y}|=\varepsilon}
	\frac{1}{| \vec{y} |}
	\frac{\partial f (\vec{r}-\vec{y})}{\partial y_j}
	\, \nu^i\mathrm{d}\sigma (\vec{y})
	\nonumber \\
	={}&
	\int_{B} 
	\left(\frac{\partial^2}{\partial y_j\partial y_i}\frac{1}{| \vec{y} |}\right)
	f (\vec{r}-\vec{y})
	\dd y 
	-
	\int_{|\vec{y}|=\varepsilon}
	\left(\frac{\partial}{\partial y_i}\frac{1}{| \vec{y} |}\right)
	f (\vec{r}-\vec{y})
	\, \nu^j\mathrm{d}\sigma (\vec{y})
	\label{eq_integrationbyparts_a} \\
	&{}+
	\int_{|\vec{y}|=\varepsilon}
	\frac{1}{| \vec{y} |}
	\frac{\partial f (\vec{r}-\vec{y})}{\partial y_j}
	\, \nu^i\mathrm{d}\sigma (\vec{y}) , \label{eq_integrationbyparts_b}
	\end{align}
\end{subequations}
where $\nu^k$ denotes the $k$-th component of $\vec{\nu}$, the unit outward normal to $B$. Due to our choice of $R$, the boundary terms at $|\vec{y}| = R$ vanish. 

In order to have equality of lines~\eqref{eq_computationofTij} and \eqref{eq_integrationbyparts}, we need to take the pointwise limits $\varepsilon\rightarrow 0$ and $R \rightarrow \infty$ in line~\eqref{eq_integrationbyparts}. Existence of the latter follows from the compact support of $f$. For the former, we note that the derivative of $f\in C_c^\infty(\R^3)$ is uniformly bounded by a constant $C>0$ independent of $\varepsilon$. We thus get for line~\eqref{eq_integrationbyparts_b} 
\begin{align}
\left|
\int_{|\vec{y}|=\varepsilon}
\frac{1}{| \vec{y} |}
\frac{\partial f (\vec{r}-\vec{y})}{\partial y_j}
\, \nu^i\mathrm{d}\sigma (\vec{y})
\right|
&\leq
\int_{S^2}
\frac{1}{\varepsilon}
\left|\frac{\partial f (\vec{r}-\vec{y})}{\partial y_j}\right|
\,\varepsilon^2\mathrm{d}\Omega \nonumber \\
&\leq
C \varepsilon
\int_{S^2}
\,\mathrm{d}\Omega 
\xrightarrow{\varepsilon\rightarrow 0} 0 .
\end{align}
In the last summand of line~\eqref{eq_integrationbyparts_a}, we use the parametrization $\Phi$ of the boundary at $|\vec{y}|=\varepsilon$ with spherical coordinates and note that $\Phi (\nabla |\vec{y}|^{-1}) = \vec{\nu}/\varepsilon^2$. Thus, 
\begin{align}
\lim_{\varepsilon \rightarrow 0}
&\int_{|\vec{y}|=\varepsilon}
\left(\frac{\partial}{\partial y_i}\frac{1}{| \vec{y} |}\right)
f (\vec{r}-\vec{y})
\, \nu^j\mathrm{d}\sigma (\vec{y}) \nonumber \\
&= 
\lim_{\varepsilon \rightarrow 0}
\int_{S^2}
\frac{\nu^i}{\varepsilon^2} f (\vec{r}-\vec{y}(\varepsilon,\theta, \phi))
\,\nu^j\,\varepsilon^2\mathrm{d}\Omega \nonumber \\
&= 
\int_{S^2}
\nu^i  \nu^j \lim_{\varepsilon \rightarrow 0} f (\vec{r}-\vec{y}(\varepsilon,\theta, \phi))
\,\mathrm{d}\Omega 
=
f (\vec{r})
\int_{S^2}
\nu^i  \nu^j 
\,\mathrm{d}\Omega 
\end{align} 
follows from dominated convergence and continuity of $f$. 
Now, we need to distinguish the cases $i \neq j$ and $i = j$. We obtain by direct computation 
\begin{align}
f (\vec{r}) \int_{S^2} \nu^i \nu^j \,\mathrm{d}\Omega
=
\left\{
\begin{aligned}
&0 \, , \quad i \neq j , \\
\frac{4\pi}{3} &f (\vec{r}) \, , \quad i = j .
\end{aligned}
\right.
\end{align}
It remains to investigate the limit $\varepsilon \rightarrow 0$ for the first summand of \eqref{eq_integrationbyparts_a}. This can be achieved with the help of Theorem~\ref{thm_calderonzygmund} which, however, requires an analysis of the singular integral kernel
$\partial^2/\partial y_i\partial y_j |\vec{y}|^{-1}$. 
By explicit computation, we obtain for $|\vec{y}| > 0$ 
\begin{align}
K_{ij} (\vec{y}) 
= 
\frac{\partial^2}{\partial y_i \partial y_j} \frac{1}{| \vec{y} |} 
=
\left\{
\begin{aligned}
&\frac{3 y_i y_j}{| \vec{y} |^5} \, ,
\quad  i \neq j \\
\frac{3 y^2_i}{| \vec{y} |^5} &- \frac{1}{| \vec{y} |^3} \, , \quad i = j. 
\end{aligned}
\right.
\end{align}
We check the conditions of Theorem~\ref{thm_calderonzygmund} and observe first that $K_{ij} (\vec{y})$ is homogeneous of degree $-3$, i.e., 
$K_{ij} (\lambda \vec{y}) = \lambda^{-3} K_{ij} (\vec{y})$ 
for almost all $\vec{y} \in \R^3$ and for all $\lambda > 0$. Moreover, $K_{ij} (\vec{y})$ exhibits the following crucial property for all $i,j = 1, 2, 3$:
\begin{align}
\int_{S^2} K_{ij} (\vec{y}) \,\mathrm{d} \sigma (\vec{y}) = 0 .
\end{align}
Furthermore, since $K_{ij} (-\vec{y}) = K_{ij} (\vec{y})$, it suffices to estimate for all $i,j = 1, 2, 3$
\begin{align}
\int_{S^2} \left| K_{ij} (\vec{y}) \right| \log^+ \left| K_{ij} (\vec{y}) \right| \, \mathrm{d} \sigma (\vec{y}) 
\leq
\int_{S^2} \left| K_{ij} (\vec{y}) \right|^2 \, \mathrm{d} \sigma (\vec{y})
<
\infty
\end{align}
where $\log^+$ denotes the positive part of the logarithm. That the last integral is indeed finite can be seen by explicit calculation. 

Hence, all conditions of Theorem~\ref{thm_calderonzygmund} are met and we can conclude that the limit 
\begin{align}
\lim_{\varepsilon \rightarrow 0} 
\int_{| \vec{y} | > \varepsilon}
K_{ij} (\vec{y}) f (\vec{r}-\vec{y}) 
\dd y
\end{align}
exists pointwise for almost all $\vec{r} \in \R^3$ as well as in $L^2 (\R^3, \dd r)$. We have thus established for all $\vec{r} \in \R^3$ and $f\in C_c^\infty (\R^3)$ 
\begin{align}
\left( T_{ij} f \right) (\vec{r})
=
\frac{\delta_{ij}}{3}
f (\vec{r}) 
-
\frac{1}{4\pi}
\lim_{\varepsilon \rightarrow 0} 
\int_{| \vec{y} | > \varepsilon}
K_{ij} (\vec{y}) f (\vec{r}-\vec{y}) 
\dd y ,
\end{align}
where $\delta_{ij}$ denotes the Kronecker delta. 

In order to extend the formula for the action of $T_{ij}$ to all of $L^2 (\R^3, \dd r)$, we note that Theorem~\ref{thm_calderonzygmund} also guarantees the existence of a constant $A>0$ such that for all $f \in L^2 (\R^3, \dd r)$
\begin{align}
\left\| 
\sup_{\varepsilon > 0} 
\left| \int_{| \vec{y} | > \varepsilon}
K_{ij} (\vec{y}) f (\cdot -\vec{y}) \dd y 
\right|
\right\| 
\leq 
A \left\|	f \right\| .
\end{align}
Choose now any $f \in L^2 (\R^3, \dd r)$ and fix a sequence $(f_n)_{n \in \N} \subset C_c^\infty (\R^3)$ such that 
$\left\| f - f_n \right\| \xrightarrow{n \rightarrow \infty} 0$. 
Then, 
\begin{align}
\left\|
\right. & \left. T_{ij} \left( f - f_n \right)
\right\| \nonumber \\
& \leq 
\frac{\delta_{ij}}{3}
\left\| f - f_n \right\|
+
\frac{1}{4\pi}
\left\|
\sup_{\varepsilon > 0} 
\left|
\int_{| \vec{y} | > \varepsilon} 
K_{ij} (\vec{y})  \left( f (\cdot -\vec{y}) - f_n (\cdot -\vec{y}) \right)
\dd y
\right|
\right\| \nonumber \\
& \leq  
\left( \frac{\delta_{ij}}{3} + \frac{A}{4\pi} \right)
\left\| f - f_n \right\| 
\xrightarrow{n \rightarrow \infty} 0 .
\end{align}
This proves the statement of the lemma. 
\end{proof}

\setcounter{theorem}{2}
\begin{theorem}
Let $0 < \kappa \leq 1$. For every 
$f \in \D (V^{1/2}) \cap \Hrel_-$, 
there exists a sequence 
$(f_n)_{n \in \N} \subset \D (V)$
such that
\begin{align}
\left\| 
f - P_- f_n 
\right\| 
+ 
\left\| 
V^{1/2} P_- (f - f_n) 
\right\| 
\xrightarrow{n \rightarrow \infty} 0 . 
\end{align}
\end{theorem}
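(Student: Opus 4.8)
The plan is to produce the approximants by hand, cutting $f$ off at the singularity of $V$. For $n\in\N$ set $f_n:=\mathbf 1_{\{|\vec r|>1/n\}}\,f$. Since $|\cdot|^{-\kappa}$ is bounded by $n^\kappa$ on $\{|\vec r|>1/n\}$, we have $f_n\in\D(V)$. Put $g_n:=f-f_n=\mathbf 1_{\{|\vec r|\le 1/n\}}\,f$; using $f=P_-f$ (as $f\in\Hrel_-$) one gets $f-P_-f_n=P_-g_n$ and $V^{1/2}P_-(f-f_n)=V^{1/2}P_-g_n$. Now $\|P_-g_n\|\le\|g_n\|\to 0$ and, since $f\in\D(V^{1/2})$, also $\|V^{1/2}g_n\|\to 0$ by dominated convergence. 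Hence the whole statement reduces to showing $\|V^{1/2}P_-g_n\|\to 0$, and since $P_-=\tfrac12(\mathrm{id}-\tau)$, to showing $\|V^{1/2}\tau g_n\|\to 0$. (A preliminary point, settled by the kernel estimate below applied near the origin, is that $P_-f_n\in\D(V^{1/2})$, so all these expressions make sense.)

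Next I would pass from the matrix-valued $\tau$ to the scalar operators $T_{ij}$ of Lemma~\ref{lemma_kerneltau}. Expanding $\tau(\vec p)=-p^{-2}\,(\vec\alpha\cdot\vec p)\otimes(\vec\alpha\cdot\vec p)=-\sum_{i,j=1}^3(\alpha_i\otimes\alpha_j)\,p_ip_j/p^2$ shows that, on $\Hrel=\C^{16}\otimes L^2(\R^3,\dd r)$, one has $\tau=-\sum_{i,j}(\alpha_i\otimes\alpha_j)\,T_{ij}$ (the $T_{ij}$ acting componentwise), and $V^{1/2}=|\cdot|^{-\kappa/2}$ commutes with the constant matrices. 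Thus $\|V^{1/2}\tau g_n\|$ is controlled by the quantities $\||\cdot|^{-\kappa/2}T_{ij}(g_n)^a\|_{L^2(\R^3)}$, $a=1,\dots,16$, and since each component $(g_n)^a$ is an $L^2$-function supported in $B_{1/n}(0)$ with $\||\cdot|^{-\kappa/2}(g_n)^a\|\to 0$, everything reduces to the scalar claim: \emph{if} $u_n\in L^2(\R^3)$ is supported in $B_{1/n}(0)$ and $\delta_n:=\||\cdot|^{-\kappa/2}u_n\|_{L^2}\to 0$, \emph{then} $\||\cdot|^{-\kappa/2}T_{ij}u_n\|_{L^2}\to 0$.

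For the scalar claim I split $\R^3$ into the outer region $\{|\vec r|>2/n\}$ and the ball $\{|\vec r|\le 2/n\}$. On the outer region the singular kernel is harmless: there $u_n(\vec r)=0$ and $|K_{ij}(\vec r-\vec y)|\le C|\vec r-\vec y|^{-3}\le C'|\vec r|^{-3}$ for $|\vec y|\le 1/n$, while $\|u_n\|_{L^1}\le\||\cdot|^{\kappa/2}\|_{L^2(B_{1/n})}\,\delta_n= c\,n^{-(\kappa+3)/2}\delta_n$; a short computation then gives $\int_{|\vec r|>2/n}|\vec r|^{-\kappa}|T_{ij}u_n|^2\le C''\delta_n^2\to 0$, the powers of $n$ cancelling against $\int_{|\vec r|>2/n}|\vec r|^{-\kappa-6}\dd r\sim n^{\kappa+3}$. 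The ball around the origin is the main obstacle, because there the weight $|\vec r|^{-\kappa}$ is unbounded precisely on the scale where $T_{ij}$ genuinely acts as a principal value, so the plain $L^2$-bound of $T_{ij}$ is useless and the kernel-decay estimate is unavailable.

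I would dispatch the near-origin part by a dyadic resolution $u_n=\sum_{m\ge0}u_n^{(m)}$ with $u_n^{(m)}:=\mathbf 1_{\{2^{-m-1}/n<|\vec r|\le 2^{-m}/n\}}u_n$, for which $\sum_m(2^mn)^\kappa\|u_n^{(m)}\|_{L^2}^2\sim\delta_n^2$. On a target dyadic shell at scale $2^{-k}/n$ inside $B_{2/n}$ one estimates the contribution of $u_n^{(m)}$ by the $L^2$-boundedness of $T_{ij}$ when $|m-k|$ is bounded and by kernel decay when $|m-k|$ is large, obtaining a bound $\lesssim 2^{-\frac32|m-k|}\|u_n^{(m)}\|$ on that shell; after rescaling by the weight, the sum over $m$ becomes the convolution of an $\ell^1$-sequence (whose summability is exactly where $0<\kappa<3$ enters) with the $\ell^2$-sequence $\big((2^mn)^{\kappa/2}\|u_n^{(m)}\|\big)_m$, so Young's inequality gives $\int_{|\vec r|\le 2/n}|\vec r|^{-\kappa}|T_{ij}u_n|^2\lesssim\delta_n^2\to 0$. (Alternatively, observing that $|\vec r|^{-\kappa}$ is a Muckenhoupt $A_2$-weight on $\R^3$ for $0<\kappa<3$, one could invoke the weighted $L^2$-bound for Calderón--Zygmund operators and dispose of both regions at once; the dyadic argument has the advantage of staying entirely within Theorem~\ref{thm_calderonzygmund}.) Combining the two regions proves the scalar claim, hence the theorem.
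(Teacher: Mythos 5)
Your proof is correct but takes a genuinely different route from the paper. The paper also cuts $f$ off at the origin (with a Gaussian $\chi_n(\vec{r})=e^{-2^n\pi|\vec{r}|^2}$ rather than your indicator), but then controls $\|V^{1/2}P_-(f-f_n)\|$ via the commutator $[V^{1/2},\tau]$: the modified triangle inequality $\big||\vec{r}-\vec{y}|^{\kappa/2}-|\vec{r}|^{\kappa/2}\big|\le|\vec{y}|^{\kappa/2}$ of Lemma~\ref{lemma_modtriangle} upgrades the Calder\'on--Zygmund kernel $|\vec{y}|^{-3}$ to the Riesz kernel $|\vec{y}|^{-(3-\kappa/2)}$, and the Hardy--Littlewood--Sobolev inequality then closes the argument. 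You dispense with the commutator entirely and instead estimate $\|V^{1/2}\tau g_n\|$ directly through a weighted $L^2$-bound for the $T_{ij}$, either by noting that $|\vec{r}|^{-\kappa}$ is an $A_2$ weight on $\R^3$ (valid exactly for $|\kappa|<3$) or by your self-contained dyadic Schur test. The $A_2$ route is the shortest but imports weighted singular-integral theory that the paper does not otherwise use; your dyadic variant stays within the $L^2$-boundedness and kernel-decay facts already supplied by Theorem~\ref{thm_calderonzygmund}, at the price of a somewhat longer computation, while the paper's commutator argument also stays elementary (unweighted CZ plus HLS) but requires spotting the cancellation. Both your argument and the paper's hinge on $\kappa<3$ at the same stage; the tighter hypothesis $\kappa\le 1$ in the statement is needed elsewhere in the paper. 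Your preliminary well-definedness point is indeed settled by the scalar claim: once $\||\cdot|^{-\kappa/2}T_{ij}(g_n)^a\|<\infty$ one has $\tau g_n\in\D(V^{1/2})$, hence $P_-g_n=\tfrac12(g_n-\tau g_n)\in\D(V^{1/2})$.
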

\setcounter{theorem}{7}

\begin{proof}
Let $f \in \mathcal{D} (V^{1/2}) \cap \Hrel_-$. First, we define the Gaussian $\chi_n$ for all $n \in \N$ by 
\begin{align}
\chi_n (\vec{r}):= e^{- 2^n \pi | \vec{r} |^2} ,
\end{align}
and with it the Gaussian cut-off $1 - \chi_n$ for $f$ by $f_n := (1 - \chi_n) f$. By construction, $(f_n)_{n \in \N} \subset \D (V)$ for all $n \in \N$ since we have $(1 - \chi_n (\vec{r})) |\vec{r}|^{-\kappa} \rightarrow 0$ as $|\vec{r}|\rightarrow 0$ for all $n \in \N$. Dominated convergence implies 
\begin{align}
\left\| f - f_n \right\|^2 
=
\left\| f - (1 - \chi_n) f \right\|^2 
= 
\int_{\R^3} 
\left| \chi_n (\vec{r}) \right|^2  | f (\vec{r}) |^2 
\dd r \xrightarrow{n \rightarrow \infty} 0
\end{align}
since 
$| \chi_n (\vec{r}) |^2 \leq 1$ 
and 
$| \chi_n (\vec{r}) |^2 \xrightarrow{n \rightarrow \infty} 0$ 
for almost all $\vec{r} \in \R^3$, and therefore, 
\begin{eqnarray}
\| f - P_- f_n \| 
& = & 
\| P_- f - P_- f_n \| 
\leq 
\| P_- \|  \| f - f_n \| \xrightarrow{n \rightarrow \infty} 0 .
\end{eqnarray}
In order to show 
$\left\| V^{1 / 2} P_- (f - f_n) \right\| \xrightarrow{n \rightarrow \infty} 0$, 
we estimate
\begin{eqnarray}\label{eq_calc_core_thm}
\left\| 
V^{1/2} P_- (f - f_n) 
\right\|
& \leq & 
\left\| 
P_- V^{1/2} \chi_n f 
\right\| 
+
\left\| 
\left[V^{1/2}, P_-\right] \chi_n f 
\right\| 
\nonumber \\ 
& \leq &
\left\| P_- \right\|
\left\| 
V^{1/2} \chi_n f 
\right\| 
+
\left\| 
\left[V^{1/2}, \tau \right] \chi_n f 
\right\| ,
\end{eqnarray}
where we used $P_- = 1/2 \, (\mathrm{id} - \tau)$. Hence, using $f \in \D (V^{1 / 2})$, we get for the first summand of \eqref{eq_calc_core_thm} by dominated convergence as above
\begin{align}
\left\| 
V^{1/2} \chi_n f 
\right\| 
\xrightarrow{n \rightarrow \infty} 0 . 
\end{align}
In order to treat the second summand of \eqref{eq_calc_core_thm}, we note the following. In the paragraph leading to the definition of $T_{ij}$ in line~\eqref{eq_def_Tij}, we saw that in order to control $\tau$, it suffices to control $T_{ij}$. Thus, denoting the spinor components of $f$ by $f^k$, $k = 1,2,\ldots, 16$, it remains to show 
\begin{align}
\left\| 
\left[ 
V^{1/2} , T_{ij} 
\right]  
\chi_n f^k
\right\|
\xrightarrow{n \rightarrow \infty} 0 
\end{align}
for all $i,j=1,2,3$ and $k=1,2,\ldots, 16$.  

In the following, we will collect all numerical factors, which are independent of $\varepsilon$, $\vec{r}$, and $\vec{y}$, by the same symbol $C$. The exact value of $C$ might therefore change from one line to the next. 

Using that $V^{1/2}$ is multiplication with $| \cdot|^{-\kappa/2}$ and thus commutes with that summand of $T_{ij}$ which contains the Kronecker delta $\delta_{ij}$, we estimate for almost all $\vec{r} \in \R$ and for all $i,j,k$ 
\begin{align}
&\left| 
\left(
\left[ 
| \cdot|^{-\kappa/2}, T_{ij} 
\right] 
\chi_n f^k 
\right)
(\vec{r}) 
\right| \nonumber \\
& = 
C
\left| 
\lim_{\varepsilon \rightarrow 0}  
\int_{| \vec{y} | > \varepsilon} 
K_{ij} (\vec{y}) 
\left( 
\frac{1}{| \vec{r} |^{\kappa / 2}} - \frac{1}{| \vec{r}-\vec{y} |^{\kappa / 2}} 
\right)
\chi_n (\vec{r}-\vec{y}) f^k (\vec{r}-\vec{y})
\dd y 
\right| \nonumber \\
& \leq 
C 
\lim_{\varepsilon \rightarrow 0} 
\int_{| \vec{y} | > \varepsilon} 
\frac{1}{| \vec{y} |^3} 
\left| 
\frac{1}{| \vec{r} |^{\kappa / 2}} - \frac{1}{| \vec{r}-\vec{y} |^{\kappa / 2}} 
\right| 
\chi_n (\vec{r}-\vec{y}) \left| f^k (\vec{r}-\vec{y}) \right| 
\dd y \nonumber \\
& = 
C 
\lim_{\varepsilon \rightarrow 0} 
\int_{| \vec{y} | > \varepsilon} 
\frac{1}{| \vec{y} |^3} 
\left|
\frac{| \vec{r}-\vec{y} |^{\kappa / 2} - | \vec{r} |^{\kappa / 2}}{| \vec{r}-\vec{y} |^{\kappa / 2} | \vec{r} |^{\kappa / 2}} 
\right|
\chi_n (\vec{r}-\vec{y}) \left| f^k (\vec{r}-\vec{y}) \right| 
\dd y \nonumber \\
& \overset{(\ast)}{\leq} 
C \lim_{\varepsilon \rightarrow 0}  
\int_{| \vec{y} | > \varepsilon} 
\frac{1}{| \vec{y} |^3}  \frac{| \vec{y} |^{\kappa / 2}}{| \vec{r}-\vec{y} |^{\kappa / 2} | \vec{r} |^{\kappa / 2}} 
\chi_n (\vec{r}-\vec{y}) \left| f^k (\vec{r}-\vec{y}) \right| 
\dd y \nonumber \\
& = 
C \frac{1}{| \vec{r} |^{\kappa / 2}}
\lim_{\varepsilon \rightarrow 0}  
\int_{| \vec{y} | > \varepsilon}
\frac{1}{| \vec{y} |^{3 - \kappa / 2}}  \frac{\chi_n (\vec{r}-\vec{y}) \left| f^k (\vec{r}-\vec{y}) \right|}{|\vec{r}-\vec{y} |^{\kappa / 2}} 
\dd y \nonumber \\
& = 
C \frac{1}{| \vec{r} |^{\kappa / 2}} 
\int_{\R^3} \frac{1}{| \vec{y} |^{3 - \kappa / 2}} 
\frac{\chi_n (\vec{r}-\vec{y}) \left| f^k (\vec{r}-\vec{y}) \right|}{| \vec{r}-\vec{y} |^{\kappa / 2}}
\dd y
\label{eq_commutator_estimate}
\end{align}
where we used Theorem~\ref{thm_hardylittlewoodsobolev}\ref{thm_hardylittlewoodsobolev_b}) together with dominated convergence in the last step and Lemma~\ref{lemma_modtriangle} (Appendix~\ref{section_auxlemmas}) in $(\ast)$. We define $h_n$ for almost all $\vec{r} \in \R^3$ by 
\begin{align}\label{eq_def_hn}
h_n (\vec{r}) 
:=
\int_{\R^3} 
\frac{1}{| \vec{y} |^{3 - \kappa / 2}} \frac{\chi_n (\vec{r} -\vec{y}) \left| f^k (\vec{r} -\vec{y}) \right|}{| \vec{r} -\vec{y} |^{\kappa / 2}} 
\dd y . 
\end{align}
By Theorem~\ref{thm_hardylittlewoodsobolev}\ref{thm_hardylittlewoodsobolev_b}), $h_n \in L^2 (\R^3, \dd r)$ for all $n \in \N$ since we have $| \cdot |^{-\kappa / 2}\chi_n |f^k| \in L^p (\R^3, \dd r)$ for all $p \in [1,2]$ and $n \in \N$ as $\chi_n$ is a Gaussian and $| \cdot |^{-\kappa / 2} |f^k| \in L^2 (\R^3, \dd r)$. Thus, $h_n$ has a Fourier transform, which, also by Theorem~\ref{thm_hardylittlewoodsobolev}\ref{thm_hardylittlewoodsobolev_b}), is for almost all $\vec{p} \in \R^3$ given by
\begin{align}\label{eq_fouriertransform_hn}
\hat{h}_n(\vec{p}) 
= 
\frac{C}{| \vec{p} |^{\kappa/2}} 
\mathcal{F}
\left( \frac{\chi_n \left| f^k \right|}{| \cdot |^{\kappa / 2}} \right) (\vec{p}) 
\end{align}
A further consequence, known as Hardy-Littlewood-Sobolev theorem of fractional integration (see \cite[Chapter~V, Theorem~1]{stein_singularintegrals}), is the inequality 
\begin{align}\label{eq_applicationHLS}
\| h_n \|_2 \leq C \left\| \frac{\chi_n \left| f^k \right|}{| \cdot |^{\kappa / 2}} \right\|_{\frac{6}{3+\kappa}}
\xrightarrow{n\rightarrow\infty} 0, 
\end{align}
where the convergence to zero follows from dominated convergence since $\chi_n \leq \chi_1$ and $\chi_1$ is a Gaussian and thus lies in $L^p(\R^3, \dd r)$ for all $p \geq 1$. Putting everything together then yields 
\begin{eqnarray}
\left\| 
\left[ 
| \cdot|^{-\kappa/2}, T_{ij} 
\right] 
\chi_n f^k 
\right\|
& \overset{(\mathrm{i})}{\leq} & 
C 
\left\| 
| \cdot|^{-\kappa/2}  h_n 
\right\| 
\overset{(\mathrm{ii})}{\leq} 
C 
\left\| 
\left( 
\left|\hat{\vec{p}}\right|^2 + 1
\right)^{\kappa / 4}  h_n
\right\| \nonumber \\
& \leq & 
C
\left(
\left\| 
\left|\hat{\vec{p}}\right|^{\kappa / 2}  h_n
\right\| 
+
\| h_n \|
\right) \nonumber \\
& \overset{\mathrm{(iii)}}{\leq} & 
C 
\left\| 
\frac{\chi_n \left| f^k \right|}{| \cdot |^{\kappa / 2}} 
\right\|_2
+
C
\left\| \frac{\chi_n \left| f^k \right|}{| \cdot |^{\kappa / 2}} \right\|_{\frac{6}{3+\kappa}}
\xrightarrow{n \rightarrow \infty} 0 ,
\end{eqnarray}
where we used 
\eqref{eq_commutator_estimate}, \eqref{eq_def_hn} in (i),
\eqref{eq_modifiedHerbst_1} from the proof of Lemma~\ref{lemma_modifiedHerbst} in (ii), 
\eqref{eq_fouriertransform_hn}, \eqref{eq_applicationHLS} in (iii), 
and dominated convergence and again line~\eqref{eq_applicationHLS} for the convergence $n \rightarrow \infty$. This proves the statement. 
\end{proof}

\subsection{Criterion for distinguished extension}
\label{section_criterionextension}
From a physical point of view, it is very desirable that a self-adjoint extension of a many-body Dirac operator is distinguished by a physical criterion. This criterion should satisfy two requirements which we adopt from \cite{nenciu_sa_dirac}. First, it must have a clear physical meaning. This ensures that the extension is not only an abstract operator but corresponds to the physics the operator is supposed to describe. Secondly, it has to guarantee uniqueness in such a way that there exists only one extension that satisfies this criterion and hence provides a unique unitary time evolution. 

We say that a state $\psi \in \mathcal{D} (H_F)$ has finite potential energy if 
$|E_{\mathrm{pot}} [\psi] | = |\langle f, (V_{\mathrm{ext}}+V_{\mathrm{int}}) f \rangle | < \infty$. 
As physical criterion for a distinguished self-adjoint extension, we choose the one of finite potential energy. This renders it physically meaningful immediately. And as the following theorem shows, it meets also the second requirement as it singles out $H_F$ uniquely. Recall that 
$\D_+ = (\C^{16} \otimes H^1 (\R^3, \dd r)) \cap \Hrel_+ \subset \D (V^{1/2})$. 

\begin{theorem}[Claim~\ref{thm:main1_c}) of Theorem~\ref{thm:main1}]\label{thm_distinguishedextension}
Let $0 < \kappa \leq 1$, $|\gamma| M_{\kappa/2}^2 < 1$, and $\tilde{H}$ with domain $\D (\tilde{H})$ be any self-adjoint extension of $U \HDC U^{-1}$ of the form 
\begin{align}\label{eq_def_Htilde}
\tilde{H} = \PMpl \otimes \mathrm{id} + \mathrm{id} \otimes \tilde{H}^{\mathrm{rel}} + \beta m_1 \otimes \mathbf{1}_4 + \mathbf{1}_4 \otimes \beta m_2 + V_{\mathrm{ext}}
\end{align}
where $\tilde{H}^{\mathrm{rel}}$ is an arbitrary self-adjoint extension of $H^{\mathrm{rel}}$ with domain $\D (\tilde{H}^{\mathrm{rel}})$. Then, $|\langle f, (V_{\mathrm{ext}}+V_{\mathrm{int}}) f \rangle | < \infty$ for all $f \in \D (\tilde{H})$ if and only if $\tilde{H} = H_F$. 
\end{theorem}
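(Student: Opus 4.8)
Since the mass matrix $\beta m_1\otimes\mathbf{1}_4+\mathbf{1}_4\otimes\beta m_2$ and $V_{\mathrm{ext}}$ are bounded and symmetric, they play no role and may be dropped; after conjugating with $\mathcal{W}$ one has $\mathcal{W}V_{\mathrm{int}}\mathcal{W}^{-1}=\mathrm{id}_{L^2(\dd R)}\otimes\gamma V$, so that (fibring over $\vec{P}$) the statement reduces to: \emph{among all self-adjoint extensions $\tilde{H}^{\mathrm{rel}}$ of $H^{\mathrm{rel}}$, only $H^{\mathrm{rel}}_F$ satisfies $\D(\tilde{H}^{\mathrm{rel}})\subseteq\D(V^{1/2})$.} Indeed, feeding test states $f(\vec{P})=\varphi(\vec{P})\,g$ with $\varphi\in C^\infty_c(\R^3)$ and $g\in\D(\tilde{H}^{\mathrm{rel}})$ into $E_{\mathrm{pot}}$ shows that finiteness of $E_{\mathrm{pot}}$ on $\D(\tilde{H})$ forces $\langle g,Vg\rangle<\infty$, i.e.\ $g\in\D(V^{1/2})$; conversely, if $\D(H^{\mathrm{rel}}_F)\subseteq\D(V^{1/2})$, a direct-integral argument using that $V^{1/2}$ is relatively bounded by $\vec{P}\cdot\mathbf{M}^++H^{\mathrm{rel}}_F$ uniformly in $\vec{P}$ (which follows from Lemmas~\ref{lemma_modifiedHerbst} and \ref{lemma_boundedoperatorforformboundedness}) yields $\int\|V^{1/2}f(\vec{P})\|^2\dd P<\infty$ for $f\in\D(H_F)$, hence finiteness of $E_{\mathrm{pot}}$.

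\textbf{Reduction to the Schur complement.} By Lemma~\ref{lemma_Asa0resolvent}, $A$ is self-adjoint with $0\in\rho(A)$, and by Lemma~\ref{lemma_frobschurofHrel} the block operator $H^{\mathrm{rel}}+P_+BP_+$ has the Frobenius--Schur factorisation $\mathcal{R}\,(A\oplus S)\,\mathcal{T}$ with $\mathcal{R}=\mathcal{T}^*$ bounded and boundedly invertible (Appendix~\ref{section_matrixoperators}, Lemmas~\ref{lemma_boundedmatrixop} and \ref{thm_closability}). Congruence by $\mathcal{T}$ then identifies the self-adjoint extensions of $H^{\mathrm{rel}}$ with those of $A\oplus S$, and since $A$ is already self-adjoint the latter are exactly $A\oplus\tilde{S}$ with $\tilde{S}$ a self-adjoint extension of $S$; thus every self-adjoint extension of $H^{\mathrm{rel}}$ has the form $\tilde{H}^{\mathrm{rel}}=\mathcal{R}\,(A\oplus\tilde{S})\,\mathcal{T}-P_+BP_+$ with domain $\{(f,g)^\top:\ g\in\D(\tilde{S}),\ f+\overline{A^{-1}P_+\gamma VP_-}\,g\in\D_+\}$, and $\tilde{H}^{\mathrm{rel}}=H^{\mathrm{rel}}_F$ iff $\tilde{S}=S_F$. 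Writing $W:=\overline{A^{-1}P_+\gamma VP_-}$ (bounded since $\kappa\le1$, by Herbst's inequality \eqref{eq_herbstineq}) and using $\D_+\subseteq\D(V^{1/2})$ (Hardy), the domain description gives: $\D(\tilde{H}^{\mathrm{rel}})\subseteq\D(V^{1/2})$ iff $\D(\tilde{S})\subseteq\mathcal{G}:=\{g\in\Hrel_-:\ g-Wg\in\D(V^{1/2})\}$ — one direction uses the states $(-Wg,g)^\top\in\D(\tilde{H}^{\mathrm{rel}})$, the other the splitting $f+g=(f+Wg)+(g-Wg)$.

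\textbf{Identifying $\mathcal{G}$ and concluding.} The heart of the proof is $\mathcal{G}=\D(V^{1/2})\cap\Hrel_-$, which (recall the proof of Lemma~\ref{lemma_schursym}) is precisely the form domain $\D(\overline{\mathfrak{v}})$ of the Friedrichs extension $S_F$ of $S$. For ``$\supseteq$'': on $\D(V^{1/2})\cap\Hrel_-$ one has $Wg=\overline{A^{-1}P_+\gamma V^{1/2}}(V^{1/2}g)$ and $V^{1/2}Wg=\Theta\,(V^{1/2}g)$, where $\Theta:=\overline{V^{1/2}A^{-1}P_+\gamma V^{1/2}}$ is bounded with $\|\Theta\|=C<1$ (this is the operator and estimate of Lemma~\ref{lemma_boundedoperatorforformboundedness}); hence $Wg\in\D(V^{1/2})$, $V^{1/2}(g-Wg)=(\mathrm{id}-\Theta)V^{1/2}g$, and since $\mathrm{id}-\Theta$ is boundedly invertible the graph norms of $g\mapsto V^{1/2}g$ and $g\mapsto V^{1/2}(g-Wg)$ are equivalent there. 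For ``$\subseteq$'': given $g\in\mathcal{G}$, one approximates $g$ by $g_n\in\D(V)\cap\Hrel_-$ with $g_n\to g$ and $V^{1/2}(g_n-Wg_n)\to V^{1/2}(g-Wg)$ in $\Hrel$, following the Gaussian cut-off scheme and the Calderón--Zygmund control of the kernels $K_{ij}$ of $\tau$ from the proof of Theorem~\ref{thm_coreV1/2}; the norm equivalence makes $V^{1/2}g_n$ Cauchy, and closedness of $V^{1/2}$ gives $g\in\D(V^{1/2})$. Granting this, the theorem follows: if $\tilde{H}=\tilde{H}_{\mathrm{DC}}$ then $\D(S_F)=\D(V^{1/2})\cap\D(S^*)\subseteq\mathcal{G}$, so $\D(H^{\mathrm{rel}}_F)\subseteq\D(V^{1/2})$ and $E_{\mathrm{pot}}$ is finite on $\D(H_F)$; if $E_{\mathrm{pot}}$ is finite on $\D(\tilde{H})$ then $\D(\tilde{S})\subseteq\mathcal{G}=\D(\overline{\mathfrak{v}})$, whence $\tilde{S}=S_F$ (the Friedrichs extension being the unique self-adjoint extension whose operator domain lies in its form domain, cf.\ the proof of Lemma~\ref{lemma_schursym}), so $\tilde{H}^{\mathrm{rel}}=H^{\mathrm{rel}}_F$ and $\tilde{H}=\tilde{H}_{\mathrm{DC}}$ after restoring the bounded terms.

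\textbf{Main obstacle.} The genuinely hard step is the inclusion $\mathcal{G}\subseteq\D(V^{1/2})\cap\Hrel_-$, i.e.\ excluding a ``hidden cancellation'' by which $g\notin\D(V^{1/2})$ yet $g-Wg\in\D(V^{1/2})$: purely abstract arguments run in circles, because the equivalence of the two graph norms is \emph{a priori} known only on $\D(V^{1/2})\cap\Hrel_-$, so one must carry out a concrete approximation resting on the singular-integral estimates of Section~\ref{section_prooftechnicalstuff}, with the strict bound $\|\Theta\|=C<1$ of Lemma~\ref{lemma_boundedoperatorforformboundedness} (used to invert $\mathrm{id}-\Theta$) entering essentially.
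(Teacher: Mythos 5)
Your high-level strategy is essentially the same as the paper's, though organised differently.  The paper works with an arbitrary self-adjoint extension $\tilde{H}^{\mathrm{rel}}$ through the inclusion $\D(\tilde{H}^{\mathrm{rel}}) \subseteq \D((H^{\mathrm{rel}}+P_+BP_+)^*)$ and the Frobenius--Schur representation $f=\mathcal{T}^{-1}\varphi$, whereas you propose a classification of all self-adjoint extensions of $H^{\mathrm{rel}}+P_+BP_+$ as $\mathcal{T}^*(A\oplus\tilde{S})\mathcal{T}$.  That classification is plausible (via von Neumann deficiency indices, since $A=A^*$ forces any self-adjoint extension of $A\oplus S$ to be block diagonal) but you should state that lemma explicitly; the paper's route via the adjoint avoids it.  Both arguments then reduce the theorem to the same assertion, which you write as $\mathcal{G}:=\{g\in\Hrel_-:\ g-Wg\in\D(V^{1/2})\}\subseteq\D(V^{1/2})\cap\Hrel_-$ and which in the paper is the step ``$\mathcal{T}^{-1}f_-\in\D(V^{1/2})\ \Rightarrow\ f_-\in\D(V^{1/2})$''.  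Your ``$\supseteq$'' direction and the norm-equivalence $\|V^{1/2}g\|\leq(1-C)^{-1}\|V^{1/2}(g-Wg)\|$ on $\D(V^{1/2})\cap\Hrel_-$ are correct and match the paper's Eq.~\eqref{eq_V12ABfinite}/\eqref{eq_equivalenceforms}.

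The gap is in your ``$\subseteq$'' step.  You posit a sequence $g_n\in\D(V)\cap\Hrel_-$ with $g_n\to g$ and $V^{1/2}(g_n-Wg_n)\to V^{1/2}(g-Wg)$, citing ``the Gaussian cut-off scheme.''  But the cut-off $(1-\chi_n)g$ is not in $\Hrel_-$, and $P_-(1-\chi_n)g$ is not in $\D(V)$ (the Fourier multiplier $P_-$ destroys the vanishing at the origin that $(1-\chi_n)$ created); and for \emph{any} candidate $g_n\in\D(V)\cap\Hrel_-$, establishing $V^{1/2}(g_n-Wg_n)\to V^{1/2}(g-Wg)$ amounts to controlling $V^{1/2}g_n$ through $V^{1/2}g$ --- which is precisely what is not yet available, since $g\in\mathcal{G}$ does \emph{not} a priori give $g\in\D(V^{1/2})$.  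This is exactly the ``hidden cancellation'' you flag in your last paragraph, but the construction you sketch runs into it rather than around it.  The paper's proof applies the Gaussian cut-off not to $g$ but to the \emph{full} state $f=\varphi_1+\mathcal{T}^{-1}f_-$ (equivalently to $h:=g-Wg$, the element known to be in $\D(V^{1/2})$), then isolates the contribution of $\mathcal{T}^{-1}f_-$ by the quadratic-form expansion in which the $\varphi_1$-terms vanish in the limit; the Calder\'on--Zygmund commutator bounds from the proof of Theorem~\ref{thm_coreV1/2} are then invoked for $h\in\D(V^{1/2})$, not for $g$.  To close your proof you would need to spell out this substitution --- cut off $h$, show that $P_-\chi_{m,n}h\in\D(V^{1/2})$ for each $m,n$, and establish $\|V^{1/2}P_-\chi_{m,n}h\|\to 0$ by the estimate in Eq.~\eqref{eq_commutator_estimate}, which depends only on $h\in\D(V^{1/2})$.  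As written, the proposal names the right tools but does not assemble them in a way that avoids the circularity.

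Two smaller remarks.  First, the claim that the conjugation by $\mathcal{T}$ bijects self-adjoint extensions of $H^{\mathrm{rel}}+P_+BP_+$ with those of $A\oplus S$, and that the latter are exactly $A\oplus\tilde{S}$, are true but deserve a one-line justification (deficiency spaces of $A\oplus S$ sit inside $\{0\}\oplus\Hrel_-$).  Second, for the direct-integral step in the forward direction, uniform relative boundedness of $V^{1/2}$ by $\vec{P}\cdot\mathbf{M}^++H^{\mathrm{rel}}_F$ is not quite the statement you need nor is it obviously true; it suffices, and is cleaner, to argue fibrewise as the paper does, since $V_{\mathrm{int}}$ acts trivially on $\Hcom$.
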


\begin{proof} 
As $V_{\mathrm{ext}}$ is bounded (see \eqref{eq_definitionV_ext}), it suffices to consider $V_{\mathrm{int}}$. We first prove $|\langle f, V_{\mathrm{int}} f \rangle | < \infty$ if $f \in \D (H_F)$. Since $V_{\mathrm{int}}$ acts as identity on $\Hcom$, it suffices to show $|\langle f, V_{\mathrm{int}} f \rangle | \leq \langle f, V f \rangle < \infty$ if $f \in \D (H^{\mathrm{rel}}_F)$. 

Let $f = (f_+, f_-)^{\top} \in \D (H^{\mathrm{rel}}_F)$. Then, 
$f_- \in \D(S_F) = \D (V^{1/2}) \cap \D (S^{\ast})$ 
and 
$f_+ + \overline{A^{-1} P_+ \gamma V P_-} f_- \in \D_+$ 
by the definition of $\D (H^{\mathrm{rel}}_F)$ in line~\eqref{eq_definitiondomainHrelF} and Lemma~\ref{lemma_schursym}. 
Hence, 
$\langle f_-, V f_- \rangle < \infty$ follows. 

In order to show $\langle f, V f \rangle < \infty$, it remains to show
$\langle f_+, V f_+ \rangle < \infty$.
We note that 
\begin{eqnarray}
\left\langle f_+, V f_+ \right\rangle^{1/2} 
= 
\left\| V^{1/2} f_+ \right\| 
& \leq & 
\left\| V^{1/2} (f_+ + \overline{A^{-1} P_+ \gamma V P_-} f_-) \right\| \nonumber \\
& & {} + 
\left\| V^{1/2}  \overline{A^{-1} P_+ \gamma V P_-} f_- \right\| .
\end{eqnarray}
Since
$f_+ + \overline{A^{- 1} P_+ \gamma V P_-} f_- \in \D_+ \subset \mathcal{D} (V^{1/2})$, 
we see that $\langle f_+, V f_+ \rangle < \infty$ holds if $\| V^{1 / 2} \overline{A^{- 1} P_+ \gamma V P_-} f_- \| < \infty$. In order to show that, we expoit $\langle f_-, V f_- \rangle < \infty$. 

From Theorem~\ref{thm_coreV1/2} we know that there exists a sequence 
$(f_n)_{n \in \N} \subset \D (V) \cap \Hrel_-$ 
with 
$\| f_- - f_n \| \xrightarrow{n \rightarrow \infty} 0$ 
such that
$\left\| V^{1/2} (f_m - f_n) \right\| \xrightarrow{m, n \rightarrow \infty} 0$.
With its help, we get 
\begin{align}\label{eq_V12ABfinite}
\left\| \vphantom{V^{1/2}} \right. & \left. V^{1/2} \overline{A^{-1} P_+ \gamma V P_-} (f_m - f_n) \right\| 
= 
\left\| V^{1/2} A^{- 1} P_+ \gamma V P_- (f_m - f_n) \right\| \nonumber \\
& =  
\left\| V^{1/2} P_+ \gamma A^{-1} P_+ V^{1/2} V^{1/2} (f_m - f_n) \right\| \nonumber \\
& \leq 
\left\| V^{1/2} P_+ \gamma A^{-1} P_+ V^{1/2} \right\|  \left\| V^{1/2} (f_m - f_n) \right\| \xrightarrow{m, n \rightarrow \infty} 0 
\end{align}
where we used that 
$V^{1/2} P_+ \gamma A^{-1} P_+ V^{1/2}$
is bounded by Lemma~\ref{lemma_boundedoperatorforformboundedness}. Lemma~\ref{lemma_boundedmatrixop} (Appendix~\ref{section_matrixoperators}) provides boundedness of 
$\overline{A^{-1} P_+ V P_-}$, and we thus obtain the convergence
$\| \overline{A^{-1} P_+ \gamma V P_-} (f_m - f_n) \| \xrightarrow{m, n \rightarrow \infty} 0$. 
Together with line~\eqref{eq_V12ABfinite}, this now ensures that the sequence $(\overline{A^{-1} P_+ V P_-}f_n)_{n\in\N}$ is a Cauchy sequence in the graph norm of $V^{1/2}$. As $V^{1/2}$ is closed, this implies $\overline{A^{- 1} P_+ \gamma V P_-} f_- \in \D(V^{1/2})$, i.e., $\| V^{1 / 2} \overline{A^{- 1} P_+ \gamma V P_-} f_- \| < \infty$. 

For the reverse implication, we assume 
$\langle f, V f \rangle < \infty$ for all 
$f \in \mathcal{D} (\tilde{H})$. 
Then, it suffices to show
$\D (\tilde{H}^{\mathrm{rel}}) = \D (H_F^{\mathrm{rel}})$
in order to infer 
$\tilde{H} = H_F$, i.e., 
$\PMpl + \tilde{H}^{\mathrm{rel}} = \PMpl + H_F^{\mathrm{rel}}$. 

We first show 
$\D (\tilde{H}^{\mathrm{rel}}) \subseteq \D (H^{\mathrm{rel}}_F)$, i.e., let $f \in \D (\tilde{H}^{\mathrm{rel}})$. In order to understand better what $f$ might look like, we first note that 
$\D (\tilde{H}^{\mathrm{rel}}) \subseteq \D ((H^{\mathrm{rel}} + P_+ B P_+)^{\ast})$ as $P_+ B P_+$ is bounded. We compute $\D ((H^{\mathrm{rel}} + P_+ B P_+)^{\ast})$ with help of its Frobenius-Schur factorization. Recall the operators $\mathcal{R}$ and $\mathcal{T}$ from the proof of Theorem~\ref{thm_self-adjointextensionHrel}, line~\eqref{eq_def_operators_RST}: 
\begin{align}
\mathcal{R} 
=
\left(
\begin{array}{cc}
\mathrm{id} & \mathbf{0} \\
P_- \gamma V P_+ A^{- 1} \quad & \mathrm{id}
\end{array}\right) 
\, , \quad
\mathcal{T}
=
\left(
\begin{array}{cc}
\mathrm{id} & \quad \overline{A^{-1} P_+ \gamma V P_-} \\
\mathbf{0} & \mathrm{id}
\end{array}
\right) .
\end{align}
We note that one can rewrite this as $\mathcal{T}=\mathrm{id} + \overline{A^{-1} P_+ \gamma V P_-}$. We then have $\mathcal{T}^{-1}=\mathrm{id} - \overline{A^{-1} P_+ \gamma V P_-}$. In addition, we define 
\begin{align}
\mathcal{S}
:=
\left(
\begin{array}{cc}
A & \mathbf{0} \\
\mathbf{0} & S
\end{array}
\right) 
\, , \quad 
\D (\mathcal{S}) = \D_+ \oplus \D (S) .
\end{align}
This yields $(H^{\mathrm{rel}} + P_+ B P_+)^{\ast} = (\mathcal{R} \, \mathcal{S} \, \mathcal{T})^* = \mathcal{R} \, \mathcal{S}^{\ast} \, \mathcal{T}$ 
as $\mathcal{T}=\mathcal{R}^*$ and $\mathcal{R} = \mathcal{T}^*$ are bounded and boundedly invertible by Lemma~\ref{lemma_boundedmatrixop} (Appendix~\ref{section_matrixoperators}). Theorem~\ref{thm_closability} (Appendix~\ref{section_matrixoperators}) provides the domain: 
\begin{align}\label{eq_definitiondomainHrelFadjoint}
\D ((\mathcal{R} \, \mathcal{S} \, \mathcal{T})^*)
= 
\left\{ 
\left.
\left(
\begin{array}{c}
f \\
g
\end{array}
\right) 
\in 
\Hrel_+ \oplus \Hrel_- 
\right|\  
\begin{array}{c}
f + \overline{A^{-1} P_+ \gamma V P_-} g \in \D_+, \\
g \in \D (S^*) 
\end{array}
\right\} .
\end{align}
Therefore, it makes sense to again split $\D (\tilde{H}^{\mathrm{rel}})$ by the projections $P_\pm$. We thus write $f = (f_+, f_-)^\top$. 

Now, as $\mathcal{T}$ is bounded and boundedly invertible, it is a one-to-one map between 
$\D ((H^{\mathrm{rel}} + P_+ B P_+)^{\ast})$ 
and 
$\D_+ \oplus \D (S^{\ast})$.
Therefore, there exists a unique 
$\varphi = (\varphi_1, \varphi_2)^{\top} \in \D_+ \oplus \D (S^{\ast})$ 
such that 
$f =\mathcal{T}^{- 1} \varphi$, i.e., 
\begin{align}\label{eq_Trepresentationoff}
\mathcal{T}^{- 1} \varphi 
&= 
\left(
\begin{array}{cc}
\mathrm{id} & \quad - \overline{A^{- 1} P_+ \gamma V P_-}\\
\mathbf{0} & \mathrm{id}
\end{array}
\right) 
\left(
\begin{array}{c}
\varphi_1 \\
\varphi_2
\end{array}
\right) \nonumber \\
&= 
\left(
\begin{array}{c}
\varphi_1 - \overline{A^{- 1} P_+ \gamma V P_-} \varphi_2 \\
\varphi_2
\end{array}
\right) 
= 
\left(
\begin{array}{c}
f_+ \\
f_-
\end{array}
\right) .
\end{align}
Since $\varphi_1\in\D_+$ and $\varphi_2\in\D(S^*)$, this form of $f$ shows that $f$ lies in 
\begin{align}\label{eq_definitionHrelF_2}
\D (H^{\mathrm{rel}}_F)
= 
\left\{ 
\left.
\left(
\begin{array}{c}
f \\
g
\end{array}
\right) 
\in 
\Hrel_+ \oplus \Hrel_- 
\right|\  
\begin{array}{c}
f + \overline{A^{-1} P_+ \gamma V P_-} g \in \D_+, \\
g \in \D (S_F )
\end{array}
\right\} ,
\end{align}
if $f_- = \varphi_2 \in \D (S_F)$. Using $\D (S_F) = \D (V^{1/2}) \cap \D (S^*)$ by Lemma~\ref{lemma_schursym} and $f_- \in \D(S^*)$, we conclude that it suffices to show that $f_- \in \D (V^{1/2}) \cap \Hrel_-$. 

We remark that it is not immediately clear at this point that 
$\langle f, V f \rangle < \infty$ implies 
$\langle f_-, V f_- \rangle < \infty$ as, in principle, there could occur cancellations between the parts in $\Hrel_+$ and $\Hrel_-$. This, however, is not the case. 

In the following, it will turn out to be useful to define the Gaussian $\chi_n$ for all $n\in\N$ by
$\chi_n (\vec{r}) := e^{- 2^n \pi | \vec{r} |^2}$ and with it the Gaussian cut-off $1-\chi_n$ for $f$ by $f_n:=(1-\chi_n)f$. Now, by the same reasoning as in the proof of Theorem~\ref{thm_coreV1/2}, we know that $f_n \in \mathcal{D} (V)$ for all $n \in \N$, and moreover, we obtain $\|f-f_n\|\xrightarrow{n\rightarrow\infty}0$, 
$\|f_\pm - P_\pm f_n\|\xrightarrow{n\rightarrow\infty}0$, and 
$\|V^{1/2} (f-f_n)\|\xrightarrow{n\rightarrow\infty}0$ by dominated convergence. This implies
\begin{align}
\left\langle (f_m - f_n), V (f_m - f_n) \right\rangle \xrightarrow{m,n \rightarrow \infty} 0 . 
\end{align}
We express $f$ in $f_n = (1-\chi_n)f$ with $f =\mathcal{T}^{- 1} \varphi$. In order to shorten the expressions, we introduce the abbreviations
$\chi_{m,n} \equiv \chi_n - \chi_m $ and
$f_{m,n} \equiv f_m - f_n $:  
\begin{align}
f_{m,n}
&=
\chi_{m,n} \,
\mathcal{T}^{-1} \varphi \nonumber \\
&=
\chi_{m,n}
\left(
\varphi_1 - \overline{A^{-1} P_+ \gamma V P_-} f_- + f_-
\right) \nonumber \\
&=
\chi_{m,n} 
\left(
\varphi_1 + (\mathrm{id} - \overline{A^{- 1} P_+ \gamma V P_-}) f_- 
\right) \nonumber \\
&=
\chi_{m,n} 
\left(
\varphi_1 + \mathcal{T}^{-1} f_-
\right) .
\end{align}
As we have
$f_{m,n} \in \D (V)$ 
for all $m,n \in \N$, we are allowed to expand the expression 
$\langle f_{m,n}, V f_{m,n} \rangle$
as follows: 
\begin{align}
\langle f_{m,n}, V f_{m,n} \rangle
={}&
\left\langle \chi_{m,n} \left(\varphi_1 + \mathcal{T}^{-1} f_- \right), V \chi_{m,n} \left(\varphi_1 + \mathcal{T}^{-1} f_- \right) \right\rangle \nonumber \\
={}&
\left\langle 
\chi_{m,n} \varphi_1 , V \chi_{m,n} \varphi_1
\right\rangle 
+
\left\langle 
\chi_{m,n} \varphi_1 , V \chi_{m,n} \mathcal{T}^{-1} f_-
\right\rangle \nonumber \\
&{}+
\left\langle 
\chi_{m,n} \mathcal{T}^{-1} f_- , V \chi_{m,n} \varphi_1
\right\rangle \nonumber \\
&{}+
\left\langle
\chi_{m,n} \mathcal{T}^{-1} f_- , V \chi_{m,n} \mathcal{T}^{-1} f_-
\right\rangle .
\end{align}
All summands containing a $\varphi_1 \in \D_+ \subset \D (V)$ are grouped together:  
\begin{align}
G_{m,n} (\varphi_1)
:={}&
\left\langle 
\chi_{m,n} \varphi_1 , V \chi_{m,n} \varphi_1
\right\rangle 
+
\left\langle 
\chi_{m,n} \varphi_1 , V \chi_{m,n} \mathcal{T}^{-1} f_-
\right\rangle \nonumber \\
&{}+
\left\langle 
\chi_{m,n} \mathcal{T}^{-1} f_- , V \chi_{m,n} \varphi_1
\right\rangle  .
\end{align}
We first compute 
\begin{eqnarray}
\left|
\left\langle 
\chi_{m,n} \varphi_1 , V \chi_{m,n} \mathcal{T}^{-1} f_-
\right\rangle
\right| 
& = & 
\left|
\left\langle 
V \chi_{m,n} \varphi_1 , \chi_{m,n} \mathcal{T}^{-1} f_-
\right\rangle
\right| \nonumber \\
& \leq & 
\left\| 
V \chi_{m,n} \varphi_1
\right\|
\left\| 
\chi_{m,n} \mathcal{T}^{-1} f_-
\right\| 
\xrightarrow{m,n \rightarrow \infty} 0 ,
\end{eqnarray}
where both factors tend to zero as $m,n \rightarrow\infty$ by dominated convergence. Note that here it is necessary that $\varphi_1 \in \D_+ \subset \D (V)$. In an analogous way, the convergence to zero of the remaining summands of $G_{m,n} (\varphi_1)$ is proven. Hence, we get  
$\left| G_{m,n} (\varphi_1) \right| \xrightarrow{m,n \rightarrow \infty} 0$. 
This, together with 
$\langle f_{m,n}, V f_{m,n} \rangle \xrightarrow{m,n \rightarrow \infty} 0$, 
implies
\begin{align}
\left\|
V^{1/2} \chi_{m,n} \mathcal{T}^{-1} f_-
\right\|^2
&=
\left|
\left\langle
\chi_{m,n} \mathcal{T}^{-1} f_- , V \chi_{m,n} \mathcal{T}^{-1} f_-
\right\rangle 
\right|\nonumber \\
&\leq 
\langle f_{m,n}, V f_{m,n} \rangle
+\left| G_{m,n} (\varphi_1) \right|
\xrightarrow{m,n \rightarrow \infty} 0 .
\end{align}
As we also have 
\begin{align}
\left\|
(1- \chi_{m}) \mathcal{T}^{-1} f_- -
(1- \chi_{n}) \mathcal{T}^{-1} f_- 
\right\|
=
\left\|
\chi_{m,n} \mathcal{T}^{-1} f_-
\right\|
\xrightarrow{m,n\rightarrow\infty} 0
\end{align}
by dominated convergence, we can conclude that 
$((1- \chi_{n}) \mathcal{T}^{-1} f_-)_{n\in\N}$
is a Cauchy sequence in the graph norm of $V^{1/2}$. Thus, 
$\mathcal{T}^{-1} f_- \in \D (V^{1/2})$
since $V^{1/2}$ is closed. 

Since line~\eqref{eq_V12ABfinite} implies 
$\overline{A^{-1} P_+ \gamma V P_-} \; \D (V^{1/2}) \subseteq \D (V^{1/2})$, 
we also get 
\begin{align}\label{eq_domainleftinvariant}
(\mathrm{id} \pm \overline{A^{-1} P_+ \gamma V P_-}) \; \D (V^{1/2}) \subseteq \D (V^{1/2}) , 
\end{align}
i.e., $\mathcal{T}$ as well as $\mathcal{T}^{-1}$ map $\D (V^{1/2})$ into $\D (V^{1/2})$. 

Now, we assume for the moment that $f_- \not\in \D (V^{1/2})$ and aim at a contradiction. We already showed that 
$\mathcal{T}^{-1} f_- \in \mathcal{D} (V^{1/2})$,
and from line~\eqref{eq_domainleftinvariant} we know that $\mathcal{T}$ maps $\D (V^{1/2})$ into $\D (V^{1/2})$. Thus, 
\begin{align}
\mathcal{T}^{-1} f_- \in \mathcal{D} (V^{1/2})
\; \Rightarrow \;
\mathcal{T} \mathcal{T}^{-1} f_- \in \mathcal{D} (V^{1/2})
\; \Rightarrow \;
f_- \in \D (V^{1/2}) . 
\end{align}
This is a contradiction to our assumption 
$f_- \not\in \D (V^{1/2})$
and thus, 
$f_- \in \D (V^{1/2})$. 
Therefore, $f_- \in \D (V^{1/2}) \cap \D (S^{\ast}) = \D (S_F)$ 
which implies $f \in \D (H^{\mathrm{rel}}_F)$. 

For the reverse inclusion $\D (H^{\mathrm{rel}}_F) \subseteq \D (\tilde{H}^{\mathrm{rel}})$, 
it suffices to note that 
\begin{align}
\D (H^{\mathrm{rel}}_F) = \D ((H^{\mathrm{rel}}_F)^{\ast})
\subseteq \D ((\tilde{H}^{\mathrm{rel}})^{\ast}) = \D (\tilde{H}^{\mathrm{rel}}) \subseteq
\D (H^{\mathrm{rel}}_F) .
\end{align} 
Therefore, $\D (\tilde{H}^{\mathrm{rel}}) = \D (H^{\mathrm{rel}}_F)$, which concludes the proof. 
\end{proof} 

\appendix 
\section{Comment on the existing proof by Okaji et al.}\label{section_exproof} \label{section_appendix_okaji}
We briefly comment on the article {\cite{okaji_dere}} by Okaji et al. which was written in response to {\cite{derezinski_openproblems}}. As already mentioned in the introduction, the proof of their claim, i.e., essential self-adjointness of $H_{\mathrm{DC}}$ (even including external Coulomb potentials) as operator in the underlying Hilbert space 
\begin{align}
\mathcal{H}_a = \left( L^2 (\R^3) \otimes \C^4 \right) \wedge \left( L^2 (\R^3) \otimes \C^4 \right)
\end{align}
comprises a gap. In the following, we want to lay out the missing step. 

The authors employ an auxiliary operator which is denoted by $H^+$ and given by\footnote{We give $H^+$ in the same $\C^{16}$-basis as $H_{\mathrm{DC}}$ in Eq.~\eqref{eq_HDCdef}. In {\cite{okaji_dere}}, a different basis is used. } 
\begin{align}\label{eq_definitionHplus}
H^+ 
=
\mathbf{1}_4 \otimes \vec{\alpha} \cdot \left( -i \nabla_{\vec{x}} - i \nabla_{\vec{y}} \right) 
+ 
m (\beta \otimes \mathbf{1}_4 + \mathbf{1}_4 \otimes \beta) 
+ 
V 
\end{align}
where external potentials as well as the interaction potential are combined in
$V$. Its domain is taken to be 
\begin{align}
\D_a
=
\left( C_c^{\infty} (\R^3 ; \C^4) \otimes C_c^{\infty} (\R^3 ; \C^4) \right) \cap \mathcal{H}_a .
\end{align}

Now, $H_{\mathrm{DC}}$ and $H^+$ coincide as quadratic form on $\mathcal{H}_a$ but not as operator, i.e., for all 
$\varphi, \psi \in \D_a$
one has 
\begin{align}\label{eq_quadraticforms}
\left\langle \varphi, H_{\mathrm{DC}} \psi \right\rangle 
= 
\left\langle \varphi, H^+ \psi \right\rangle 
\end{align}
but
\begin{align}\label{eq_notequalasop}
H_{\mathrm{DC}} \psi \neq H^+ \psi . 
\end{align}
The former is proven in \cite{okaji_dere} in Theorem~5.4, the latter is seen as follows. Define $\psi$ by 
\begin{align}\label{eq_defpsinotindomainHplus}
\psi (\vec{x}, \vec{y})
=
\left(
\begin{array}{c}
0 \\
f (\vec{x}) \\
0 \\
f (\vec{x})
\end{array}
\right)
\otimes 
\left(
\begin{array}{c}
f (\vec{y}) \\
0 \\
f (\vec{y}) \\
0
\end{array}
\right)
-
\left(
\begin{array}{c}
f (\vec{x}) \\
0 \\
f (\vec{x}) \\
0
\end{array}
\right)
\otimes 
\left(
\begin{array}{c}
0 \\
f (\vec{y}) \\
0 \\
f (\vec{y})
\end{array}
\right)
\end{align}
where the function $f \colon \R^3 \rightarrow \C$ is chosen to be smooth and compactly supported. We see that $\psi$ is antisymmetric, and therefore, $\psi \in \D_a$. That line~\eqref{eq_notequalasop} holds, is now a straightforward calculation. 

Hence, from the essential self-adjointness of $H^+$ on $\D_a$ that was proven in
\cite{okaji_dere}, it does not follow that $\HDC$ is
essentially self-adjoint on $\D_a$. On the contrary, as the article at hand
shows, $H_0$ exhibits a non-trivial nullspace structure in the relative
coordinate, i.e., the coordinate of the interaction, whereas one always has $\Ker
(H^+) = \{ 0 \}$. 

We want to use this opportunity to add another remark on Eq.~\eqref{eq_quadraticforms} and draw attention to an interesting implication. We denote by 
\begin{align}
P_a \colon \H_2 \rightarrow \H_a
\end{align}
the orthogonal projection on $\H_a$. An interesting fact is that $P_a$ can also
have a regularizing effect which we will outline briefly in the following
without making the argument rigorous. 

For the sake of our argument, it suffices to consider the free and massless case, i.e., all potentials and masses are set to zero. We will nevertheless keep the notation $\HDC$ and $H^+$ in order to maintain the distinction between the auxiliary operator $H^+$ and the actual operator $\HDC$. Then, $\HDC$ is the operator $T$ from Section~\ref{section_relevantdefs}
\begin{align}
\HDC = T = \PMpl \otimes \mathrm{id} + \mathrm{id} \otimes \Mminp 
\end{align} 
and $H^+$ is in center-of-mass and relative coordinates of the form 
\begin{align}
H^+ =
\mathbf{1}_4 \otimes \vec{\alpha} \cdot \hat{\vec{P}} .
\end{align}

We see now that in order to have $\| H^+ f\| < \infty$ for some $f$, this $f$ must have at least $H^1 (\R^3, \dd R)$-regularity. For $\HDC$ however, this is not the case because of the nullspace structure of $\PMpl$. E.g., using the $\psi$ from line~\eqref{eq_defpsinotindomainHplus}, it is possible to construct a less regular $f$ with $\| \HDC f \| < \infty$ since $\psi \in \Ker (\PMpl)$. 

What the form equality Eq.~\eqref{eq_quadraticforms} 
$\left\langle \varphi, H_{\mathrm{DC}} \psi \right\rangle 
= 
\left\langle \varphi, H^+ \psi \right\rangle $
actually implies, is the operator equality $P_a H^+ P_a = \HDC$. However, $\HDC$ allows for less regular functions than $H^+$. Thus, we can conclude that the projection $P_a$ has a regularizing effect on $H^+ P_a f$. 

\section{Matrix operators with unbounded entries}
\label{section_matrixoperators}
In the following, we glimpse at the theory of matrix operators with unbounded entries. The goal of this appendix is to provide various relations between the unbounded entries and give the Frobenius-Schur factorization in Theorem~\ref{thm_closability}. A general reference for matrix operators is \cite{tretter}. We consider the matrix operator 
\begin{align}\label{eq_generalmatrixoperator}
\mathcal{A} = 
\left(
\begin{array}{cc}
A & B \\
C & D
\end{array}
\right) 
\end{align}
that acts in the Hilbert space 
$\mathcal{H}=\mathcal{H}_1 \oplus \mathcal{H}_2$, 
$\H_1$ and $\H_2$ being closed subspaces of $\H$, 
and where 
\begin{align}
A \colon \D(A) \rightarrow \H_1 , \quad 
B \colon \D(B) \rightarrow \H_1 , \quad
C \colon \D(C) \rightarrow \H_2 , \quad 
D \colon \D(D) \rightarrow \H_2 .
\end{align}
Throughout this appendix, we work with the following assumptions: 
\begin{enumerate}[({A}1)]
\item \label{assumptions_matrixop_1}
$A,B,C,$ and $D$ are closable, possibly unbounded operators with dense domains 
\begin{align}
\D(A), \D (C) \subset \H_1 \, , \quad 
\D(B), \D (D) \subset \H_2 . 
\end{align}

\item \label{assumptions_matrixop_2}
$\D (B) =\D (D)$. 

\item \label{assumptions_matrixop_3}
The resolvent set of $A$ is not empty, i.e. $\rho (A) \neq \emptyset$.

\item \label{assumptions_matrixop_4}
$\D (A^{\ast}) \subset \D (B^{\ast})$.

\item \label{assumptions_matrixop_5}
$\D (A) \subset \D (C)$. 
\item \label{assumptions_matrixop_6}
$\D (\mathcal{A}) =\D (A) \oplus \D (D)$ which is dense in $\H$. 
\end{enumerate}

\begin{lemma}\label{lemma_symmetryblockmatrix}
The matrix operator $\mathcal{A}$ is symmetric if and only if 
\begin{align}
A \subseteq A^{\ast} , \quad 
D \subseteq D^{\ast} , \quad 
C \upharpoonright \D (A) \subseteq B^{\ast} , \quad 
B \subseteq C^{\ast} .
\end{align}
\end{lemma}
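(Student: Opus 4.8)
The plan is to evaluate the defect $\langle \mathcal{A} u, v\rangle - \langle u, \mathcal{A} v\rangle$ directly from the block structure and then switch the individual components on and off. I would write $u = (u_1,u_2)^{\top}$ and $v = (v_1,v_2)^{\top}$ with $u_1,v_1\in\D(A)$ and $u_2,v_2\in\D(D)$ --- which by (A2) equals $\D(B)$, and for which $Cu_1,Cv_1$ make sense thanks to $\D(A)\subseteq\D(C)$ from (A5). Expanding $\mathcal{A}(u_1,u_2)^{\top} = (Au_1+Bu_2,\,Cu_1+Du_2)^{\top}$ then gives
\begin{align}
\langle \mathcal{A} u, v\rangle - \langle u, \mathcal{A} v\rangle
= {}& \bigl(\langle Au_1, v_1\rangle - \langle u_1, Av_1\rangle\bigr) + \bigl(\langle Du_2, v_2\rangle - \langle u_2, Dv_2\rangle\bigr) \nonumber \\
& {}+ \bigl(\langle Cu_1, v_2\rangle - \langle u_1, Bv_2\rangle\bigr) + \bigl(\langle Bu_2, v_1\rangle - \langle u_2, Cv_1\rangle\bigr) .
\end{align}

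For the ``if'' direction I would assume the four conditions and check that each of the four brackets vanishes identically on $\D(\mathcal{A})\times\D(\mathcal{A})$: the first two because $A\subseteq A^{*}$ and $D\subseteq D^{*}$; the third because $C\upharpoonright\D(A)\subseteq B^{*}$ means $u_1\in\D(B^{*})$ with $B^{*}u_1 = Cu_1$, so $\langle Cu_1,v_2\rangle = \langle B^{*}u_1,v_2\rangle = \langle u_1,Bv_2\rangle$; and the fourth because $B\subseteq C^{*}$ means $u_2\in\D(C^{*})$ with $C^{*}u_2 = Bu_2$, whence $\langle Bu_2,v_1\rangle = \langle C^{*}u_2,v_1\rangle = \langle u_2,Cv_1\rangle$, using once more that $v_1\in\D(A)\subseteq\D(C)$. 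Hence $\langle\mathcal{A}u,v\rangle = \langle u,\mathcal{A}v\rangle$, so $\mathcal{A}$ is symmetric.

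For the ``only if'' direction I would assume the defect vanishes for all $u,v\in\D(\mathcal{A})$ and specialise: putting $u_2=v_2=0$ isolates the first bracket and yields $\langle Au_1,v_1\rangle = \langle u_1,Av_1\rangle$ for all $u_1,v_1\in\D(A)$, i.e.\ $A\subseteq A^{*}$; putting $u_1=v_1=0$ gives $D\subseteq D^{*}$; putting $u_2=0$ and $v_1=0$ leaves only the third bracket, so $\langle Cu_1,v_2\rangle = \langle u_1,Bv_2\rangle$ for all $u_1\in\D(A)$, $v_2\in\D(B)$, and since $\D(B)$ is dense (part of (A1)) this is exactly $C\upharpoonright\D(A)\subseteq B^{*}$; finally $u_1=0$ and $v_2=0$ leaves only the fourth bracket, $\langle Bu_2,v_1\rangle = \langle u_2,Cv_1\rangle$ for all $u_2\in\D(B)$, $v_1\in\D(A)$, and reading this against the definition of the adjoint (again invoking the density in (A1)) gives $B\subseteq C^{*}$, i.e.\ $u_2\in\D(C^{*})$ with $C^{*}u_2=Bu_2$.

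The computation itself is routine; the one place to stay careful is the adjoint-and-domain bookkeeping. Because $\D(\mathcal{A}) = \D(A)\oplus\D(D)$, the operator $B$ enters $\mathcal{A}$ on its full domain $\D(B)=\D(D)$, whereas $C$ enters only through the (generally proper) subspace $\D(A)\subseteq\D(C)$; this is exactly why the third condition reads $C\upharpoonright\D(A)\subseteq B^{*}$ and not $C\subseteq B^{*}$. One must also keep in mind that the pairing identities produced by switching components off hold a priori only on $\D(A)\times\D(B)$ (respectively $\D(B)\times\D(A)$), so that placing a vector into $\D(B^{*})$ or $\D(C^{*})$ genuinely rests on the density assertions in (A1). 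I expect only (A1), (A2), (A5), (A6) to be needed here; (A3) and (A4) play no role in this particular lemma.
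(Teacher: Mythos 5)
The paper's ``proof'' of this lemma is a one-line citation to Tretter's Proposition~2.6.1, so your direct verification via the block expansion of the sesquilinear form is a genuinely different route; it has the virtue of making the appendix self-contained, and the expansion into the four brackets and the on/off component argument are exactly right in spirit.

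There is, however, one real gap in the ``only if'' direction, in the step that is supposed to produce $B \subseteq C^{*}$. Setting $u_1 = 0$, $v_2 = 0$ gives the pairing identity $\langle B u_2, v_1\rangle = \langle u_2, C v_1\rangle$ only for $v_1 \in \D(A)$, which by (A\ref{assumptions_matrixop_5}) is contained in, but possibly strictly smaller than, $\D(C)$. The statement $u_2 \in \D(C^{*})$ with $C^{*}u_2 = B u_2$ requires this identity for \emph{all} $v_1 \in \D(C)$, and density of $\D(A)$ (or of $\D(C)$) in $\H_1$ does not bridge that gap: testing on a proper subspace of $\D(C)$ only yields $B \subseteq (C\upharpoonright\D(A))^{*}$, and since restricting an operator enlarges its adjoint we have $(C\upharpoonright\D(A))^{*} \supseteq C^{*}$, so this is the \emph{weaker} statement. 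To recover $B \subseteq C^{*}$ you would need, in addition, that $\D(A)$ is a core for $\overline{C}$, or simply $\D(C) = \D(A)$; neither is among (A\ref{assumptions_matrixop_1})--(A\ref{assumptions_matrixop_6}). Notice that you already handled the analogous subtlety correctly in the third condition by writing the restriction $C\upharpoonright\D(A)$ rather than $C$; the fourth condition calls for the same care, and as written your argument proves $B \subseteq (C\upharpoonright\D(A))^{*}$, not $B \subseteq C^{*}$. The ``if'' direction is unaffected, since there the stronger hypothesis $B \subseteq C^{*}$ is simply applied at a point $v_1 \in \D(A) \subseteq \D(C)$.
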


\begin{proof}
See {\cite[Proposition 2.6.1]{tretter}}. 
\end{proof}

\begin{lemma}\label{lemma_boundedmatrixop}
The following statements hold for all $\mu \in \rho (A)$: 
\begin{enumerate}[a)]
	\item The operator $(A - \mu)^{- 1} B$ is bounded on $\D (B)$. 
	
	\item The operator $C (A - \mu)^{- 1}$ is bounded on all of $\H_1$. 
	
	\item The matrix operators 
	$\mathcal{R}(\mu) \colon \H \rightarrow \H$ and 
	$\mathcal{T}(\mu) \colon \H \rightarrow \H$, 
	given by 
	\begin{align} 
	\mathcal{R} (\mu) := 
	\left(
	\begin{array}{cc}
	\mathrm{id} & \mathbf{0} \\
	C (A - \mu)^{- 1} \quad & \mathrm{id}
	\end{array}
	\right) 
	, \quad 
	\mathcal{T} (\mu) := 
	\left(
	\begin{array}{cc}
	\mathrm{id} & \quad \overline{(A - \mu)^{- 1} B} \\
	\mathbf{0} & \mathrm{id}
	\end{array}
	\right) 
	\end{align}
	are bounded and boundedly invertible. 
	
	\item If $\mathcal{A}$ is symmetric with $A = A^*$, then $\mathcal{R}(\mu)^* = \mathcal{T}(\overline{\mu})$ and $\mathcal{T}(\mu)^* = \mathcal{R}(\overline{\mu})$ hold. 
\end{enumerate}
\end{lemma}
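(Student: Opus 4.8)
The plan is to treat the four parts in order, with a) and b) carrying the substance and c), d) being essentially bookkeeping. For a) and b) the idea is the same: each of the operators $(A-\mu)^{-1}B$ and $C(A-\mu)^{-1}$ becomes, after closure (resp.\ after passing to $\overline C$), an \emph{everywhere-defined closed} operator, so the closed graph theorem forces boundedness. The role of assumptions (A\ref{assumptions_matrixop_4}) and (A\ref{assumptions_matrixop_5}) is precisely to guarantee the ``everywhere-defined'' part, while (A\ref{assumptions_matrixop_3}) ensures $\overline\mu\in\rho(A^*)$, so that $(A^*-\overline\mu)^{-1}$ is a genuine bounded operator with range $\D(A^*)$.

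For part a) I would argue as follows. Since $(A-\mu)^{-1}$ is bounded and everywhere defined and $B$ is densely defined by (A\ref{assumptions_matrixop_1}), the product rule for adjoints gives $\big((A-\mu)^{-1}B\big)^* = B^*\big((A-\mu)^{-1}\big)^* = B^*(A^*-\overline\mu)^{-1}$. Now $B^*$ is closed, $(A^*-\overline\mu)^{-1}$ is bounded with range $\D(A^*)\subseteq\D(B^*)$ by (A\ref{assumptions_matrixop_4}), so $B^*(A^*-\overline\mu)^{-1}$ is a closed operator defined on all of $\H_1$, hence bounded; then $\overline{(A-\mu)^{-1}B} = \big(B^*(A^*-\overline\mu)^{-1}\big)^*$ is bounded, and therefore so is $(A-\mu)^{-1}B$ on $\D(B)$. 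For part b) I would observe that $(A-\mu)^{-1}$ maps $\H_1$ onto $\D(A)\subseteq\D(C)\subseteq\D(\overline C)$ by (A\ref{assumptions_matrixop_5}), so $C(A-\mu)^{-1}$ agrees with $\overline C(A-\mu)^{-1}$ and is everywhere defined on $\H_1$; the latter is a bounded operator followed by the closed operator $\overline C$, hence closed, hence bounded by the closed graph theorem.

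Part c) is then immediate: by a) and b) the off-diagonal entries of $\mathcal{R}(\mu)$ and $\mathcal{T}(\mu)$ are bounded and everywhere defined, so both matrix operators are bounded on $\H=\H_1\oplus\H_2$, and one verifies directly that flipping the sign of the off-diagonal entry produces a two-sided bounded inverse. For part d) the plan is to compute the relevant adjoint explicitly using the symmetry characterisation of Lemma~\ref{lemma_symmetryblockmatrix}, namely $C\!\upharpoonright\!\D(A)\subseteq B^*$ together with $A=A^*$ (whence $\overline\mu\in\rho(A)$): for $g\in\D(B)$ and $h\in\H_1$, $\langle g, C(A-\mu)^{-1}h\rangle = \langle g, B^*(A-\mu)^{-1}h\rangle = \langle Bg,(A-\mu)^{-1}h\rangle = \langle (A-\overline\mu)^{-1}Bg, h\rangle$, so $(C(A-\mu)^{-1})^*$ agrees with $(A-\overline\mu)^{-1}B$ on the dense set $\D(B)$ and, being bounded, equals $\overline{(A-\overline\mu)^{-1}B}$; this yields $\mathcal{R}(\mu)^*=\mathcal{T}(\overline\mu)$, and the identity $\mathcal{T}(\mu)^*=\mathcal{R}(\overline\mu)$ follows symmetrically, using that $B^*(A^*-\overline\mu)^{-1}$ restricted to the range of $(A-\overline\mu)^{-1}$ equals $C(A-\overline\mu)^{-1}$ by $C\!\upharpoonright\!\D(A)\subseteq B^*$ once more. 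The only point requiring care—and what I regard as the real content—is the bookkeeping of which operators are already everywhere defined versus which become so only after closure, so that the identities $(ST)^*=T^*S^*$ (valid for $S$ bounded and everywhere defined) and ``closed plus everywhere defined implies bounded'' are applied to legitimate objects; there is no genuine analytic obstacle beyond this.
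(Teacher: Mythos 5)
Your proof is correct and takes essentially the same approach as the paper: parts a)–c) rest on the closed graph theorem applied to everywhere-defined closed compositions (the paper delegates a) and c) to Tretter's book but the underlying mechanism is the same), and part d) is precisely the adjoint calculation combining $A=A^*$ with the symmetry relation $C\!\upharpoonright\!\D(A)\subseteq B^*$ from Lemma~\ref{lemma_symmetryblockmatrix} that the paper records in Eqs.~\eqref{eq_1.3}–\eqref{eq_1.4}.
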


\begin{proof}
\begin{enumerate}[a)]
	\item See \cite[Remark~2.2.15]{tretter}. 
	
	\item This follows from the closed graph theorem. 
	
	\item See \cite[Theorem~2.2.18]{tretter}. 
	
	\item With Lemma \ref{lemma_symmetryblockmatrix} and $A = A^*$, we obtain
	\begin{align}\label{eq_1.3}
	( (A-\mu)^{-1} B )^* = B^* (A^*-\overline{\mu})^{-1} = B^* (A-\overline{\mu})^{-1} = C (A-\overline{\mu})^{-1} 
	\end{align}
	and
	\begin{align}\label{eq_1.4}
	( C (A-\mu)^{-1} )^* \supseteq (A^*-\overline{\mu})^{-1} C^* = (A-\overline{\mu})^{-1} C^* \supseteq (A-\overline{\mu})^{-1} B .
	\end{align}
	Since the bounded linear transformation theorem gives a unique closed extension of $(A-\mu)^{- 1} B$, and $(C (A-\mu)^{- 1})^{\ast}$ is closed, it follows that 
	$(C (A-\mu)^{- 1})^{\ast} = \overline{(A-\overline{\mu})^{- 1} B}$. 
	Hence, equations \eqref{eq_1.3} and \eqref{eq_1.4} imply 
	$\mathcal{R}(\mu)^* =\mathcal{T}(\overline{\mu})$ and
	$\mathcal{T}(\mu)^* =\mathcal{R}(\overline{\mu})$. 
	\qedhere
\end{enumerate}
\end{proof}

The Schur complement of $A$ is defined by
\begin{align}\label{eq_def_schurcomplement}
S (\mu) := D - \mu - C (A - \mu)^{- 1} B 
\end{align}
with domain $\D (S (\mu)) = \D (D)$ for all $\mu \in \rho (A)$. 

\begin{theorem}\label{thm_closability}
$\mathcal{A}$ is closable if and only if, for all $\mu \in \rho (A)$, $S (\mu)$ is closable in $\mathcal{H}_2$. The closure $\overline{\mathcal{A}}$ is given by the Frobenius-Schur factorization
\begin{align}\label{frob_schur_fact}
\overline{\mathcal{A}} 
= 
\mu + 
\left(
\begin{array}{cc}
\mathrm{id} & \mathbf{0} \\
C (A - \mu)^{- 1} \quad & \mathrm{id}
\end{array}
\right) 
\left(
\begin{array}{cc}
A - \mu & \mathbf{0} \\
\mathbf{0} & \overline{S (\mu)}
\end{array}
\right) 
\left(
\begin{array}{cc}
\mathrm{id} & \quad \overline{(A - \mu)^{- 1} B} \\
\mathbf{0} & \mathrm{id}
\end{array}
\right) ,
\end{align}
independently of $\mu \in \rho (A)$, that is,
\begin{align} 
\D (\overline{\mathcal{A}}) 
= 
\left\{ 
\left.
\left(
\begin{array}{c}
f\\
g
\end{array}
\right) 
\in \mathcal{H}_1 \oplus \mathcal{H}_2 
\right|\
\begin{array}{c}
f + \overline{(A - \mu)^{- 1} B} g \in \D (A), \\
g \in \D (\overline{S (\mu)}) 
\end{array}
\right\} 
\end{align}
\begin{align} \overline{\mathcal{A}}  \left(\begin{array}{c}
f\\
g
\end{array}\right) = \left(\begin{array}{c}
(A - \mu) (f + \overline{(A - \mu)^{- 1} B} g) + \mu f\\
C (f + \overline{(A - \mu)^{- 1} B} g) + (\overline{S (\mu)} + \mu) g
\end{array}\right) . 
\end{align}
\end{theorem}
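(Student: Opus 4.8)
The plan is to exploit the bounded, boundedly invertible triangular factors $\mathcal{R}(\mu)$ and $\mathcal{T}(\mu)$ of Lemma~\ref{lemma_boundedmatrixop} in order to reduce every assertion about $\mathcal{A}$ to the corresponding assertion about the block–diagonal operator $\mathcal{D}(\mu):=\mathrm{diag}(A-\mu,\,S(\mu))$, for which closability is visibly equivalent to closability of $S(\mu)$ since the first block is closed (note that $\rho(A)\neq\emptyset$ by (A\ref{assumptions_matrixop_3}) forces $A$ to be closed). Fix $\mu\in\rho(A)$. First I would verify the purely algebraic identity
$$\mathcal{A}-\mu=\mathcal{R}(\mu)\,\mathcal{D}(\mu)\,\mathcal{T}(\mu)\qquad\text{on }\D(\mathcal{A})=\D(A)\oplus\D(D),$$
by applying the three factors successively to a vector $(f,g)^{\top}\in\D(A)\oplus\D(D)$ and using $\D(B)=\D(D)$ (so that $\overline{(A-\mu)^{-1}B}\,g=(A-\mu)^{-1}Bg$, which lies in $\D(A)$), together with the telescoping cancellation of the two $C(A-\mu)^{-1}Bg$ contributions. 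The same computation shows that $\mathcal{T}(\mu)$ restricts to a bijection from $\D(\mathcal{A})$ onto $\D(\mathcal{D}(\mu))=\D(A)\oplus\D(D)$, with inverse the restriction of $\mathcal{T}(\mu)^{-1}$.

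Next I would record the elementary fact that if $G$ is bounded on $\H$ with bounded inverse and $M$ is any densely defined operator, then $GM$ (on $\D(M)$) is closable iff $M$ is, with $\overline{GM}=G\,\overline{M}$, and $MG$ (on $G^{-1}\D(M)$) is closable iff $M$ is, with $\overline{MG}=\overline{M}\,G$; both follow directly from the definition of the operator closure and the continuity of $G$ and $G^{-1}$. Applying this twice to the factorization, and using that $\mathcal{D}(\mu)$ is closable precisely when $S(\mu)$ is (since the block $A-\mu$ is closed, with $\overline{\mathcal{D}(\mu)}=\mathrm{diag}(A-\mu,\overline{S(\mu)})$), I obtain that $\mathcal{A}-\mu$, hence $\mathcal{A}$, is closable iff $S(\mu)$ is closable; since the argument runs for each $\mu\in\rho(A)$, this is the stated equivalence. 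Taking closures then gives
$$\overline{\mathcal{A}}-\mu=\mathcal{R}(\mu)\,\overline{\mathcal{D}(\mu)}\,\mathcal{T}(\mu)=\mathcal{R}(\mu)\,\mathrm{diag}\!\left(A-\mu,\overline{S(\mu)}\right)\mathcal{T}(\mu),$$
which is \eqref{frob_schur_fact}; independence of $\mu$ is automatic because the left-hand side does not involve $\mu$.

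Finally, to read off $\D(\overline{\mathcal{A}})$ and the action I would unfold the right-hand side of \eqref{frob_schur_fact}. Since $\mathcal{R}(\mu)$ is bounded on all of $\H$, it imposes no domain restriction, so $(f,g)^{\top}\in\D(\overline{\mathcal{A}})$ iff $\mathcal{T}(\mu)(f,g)^{\top}=\bigl(f+\overline{(A-\mu)^{-1}B}\,g,\ g\bigr)^{\top}$ lies in $\D(\mathrm{diag}(A-\mu,\overline{S(\mu)}))=\D(A)\oplus\D(\overline{S(\mu)})$, i.e.\ iff $f+\overline{(A-\mu)^{-1}B}\,g\in\D(A)$ and $g\in\D(\overline{S(\mu)})$, exactly the displayed domain; composing the three factors and adding $\mu(f,g)^{\top}$ then yields the stated formula for $\overline{\mathcal{A}}(f,g)^{\top}$, with the cancellation of the $C(A-\mu)^{-1}B$ terms being the same one used in the algebraic identity. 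I expect the only genuinely delicate point to be the bookkeeping around the closure bars: keeping straight when $\overline{(A-\mu)^{-1}B}$ may be replaced by $(A-\mu)^{-1}B$ (namely on $\D(B)=\D(D)$, which is where it matters for the identity on $\D(\mathcal{A})$) versus where the genuine closure is needed (in describing $\D(\overline{\mathcal{A}})$), and making the ``pull a bounded invertible factor through a closure'' lemma precise; everything else is routine matrix manipulation licensed by Lemma~\ref{lemma_boundedmatrixop}.
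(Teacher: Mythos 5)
Your proof is correct, and it supplies precisely the argument the paper omits: the paper's proof of Theorem~\ref{thm_closability} is only the citation ``See \cite[Theorem~1]{shkalikov95}.'' Your self-contained reconstruction---verifying the algebraic factorization on $\D(\mathcal{A})=\D(A)\oplus\D(D)$ (using $\D(B)=\D(D)$ so that $\overline{(A-\mu)^{-1}B}$ can be replaced by $(A-\mu)^{-1}B$ there), pulling the bounded, boundedly invertible triangular factors $\mathcal{R}(\mu)$ and $\mathcal{T}(\mu)$ through the operator closure via the two standard closability lemmas, noting $A$ is closed because $\rho(A)\neq\emptyset$, and reading off domain and action from the closed factorization---is exactly the standard route to the cited result.
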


\begin{proof}
See {\cite[Theorem 1]{shkalikov95}}. 
\end{proof}

\section{Auxiliary lemma} \label{section_auxlemmas}

\begin{lemma} \label{lemma_modtriangle}
Let $0 < \kappa \leq 1$. Then, for any $\vec{a}, \vec{b} \in \R^3$ it holds that 
\begin{align}
\left| \left| \vec{a}-\vec{b} \right|^{\kappa / 2} - \left| \vec{a} \right|^{\kappa / 2} \right| 
\leq 
\left| \vec{b} \right|^{\kappa / 2} .
\end{align}
\end{lemma}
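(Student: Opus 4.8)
The plan is to reduce everything to the elementary fact that $t\mapsto t^{s}$ is nondecreasing and subadditive on $[0,\infty)$ whenever $0<s\le 1$, together with the triangle inequality for $|\cdot|$ on $\R^3$. Writing $s:=\kappa/2$, the hypothesis $0<\kappa\le 1$ gives $0<s\le\tfrac12\le 1$, so it suffices to prove that for all $\vec{a},\vec{b}\in\R^3$,
\begin{align}
\left| |\vec{a}-\vec{b}|^{s}-|\vec{a}|^{s}\right|\le|\vec{b}|^{s}.
\end{align}

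First I would establish subadditivity: for real numbers $u,v\ge 0$ one has $(u+v)^{s}\le u^{s}+v^{s}$. If $u+v=0$ this is trivial; otherwise divide by $(u+v)^{s}$ and use that $0\le\tfrac{u}{u+v}\le 1$ and $0\le\tfrac{v}{u+v}\le 1$, so that raising to the power $s\le 1$ only increases these numbers, i.e.\ $\bigl(\tfrac{u}{u+v}\bigr)^{s}\ge\tfrac{u}{u+v}$ and likewise for $v$; adding the two gives $\bigl(\tfrac{u}{u+v}\bigr)^{s}+\bigl(\tfrac{v}{u+v}\bigr)^{s}\ge 1$, which is the claim after multiplying back by $(u+v)^{s}$. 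I would also record the (obvious) monotonicity of $t\mapsto t^{s}$ on $[0,\infty)$.

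Next I would combine this with the two triangle inequalities $|\vec{a}-\vec{b}|\le|\vec{a}|+|\vec{b}|$ and $|\vec{a}|\le|\vec{a}-\vec{b}|+|\vec{b}|$. Applying monotonicity and then subadditivity to the first inequality yields $|\vec{a}-\vec{b}|^{s}\le\bigl(|\vec{a}|+|\vec{b}|\bigr)^{s}\le|\vec{a}|^{s}+|\vec{b}|^{s}$, hence $|\vec{a}-\vec{b}|^{s}-|\vec{a}|^{s}\le|\vec{b}|^{s}$; the very same steps applied to the second inequality give $|\vec{a}|^{s}-|\vec{a}-\vec{b}|^{s}\le|\vec{b}|^{s}$. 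Together these two one-sided bounds are precisely $\bigl||\vec{a}-\vec{b}|^{s}-|\vec{a}|^{s}\bigr|\le|\vec{b}|^{s}$, which finishes the proof. There is no real obstacle here: the only point that requires any argument at all is the subadditivity of $t\mapsto t^{s}$, which is the standard concavity fact recalled above.
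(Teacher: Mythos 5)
Your proof is correct and follows essentially the same route as the paper: both reduce the claim to the triangle inequality on $\R^3$ plus subadditivity of $t\mapsto t^{\kappa/2}$ on $[0,\infty)$, and then obtain the two one-sided bounds. The only (cosmetic) difference is in how subadditivity is verified: you normalize by $(u+v)^s$ and use $t^s\ge t$ on $[0,1]$, while the paper deduces it from the comparison $\|\cdot\|_{2/\kappa}\le\|\cdot\|_1$ in $\R^2$; both are standard one-line arguments for the same elementary fact.
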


\begin{proof}
First, we assume $| \vec{a}-\vec{b} |^{\kappa / 2} \geq | \vec{a} |^{\kappa / 2}$. Then, with $| \vec{a}-\vec{b} | \leq | \vec{a} | + | \vec{b} |$ and monotonicity of exponentiation, 
we obtain
\begin{align}\label{eq_modtriangle_1}
| \vec{a}-\vec{b} |^{\kappa / 2} 
\leq 
\left(
| \vec{a} | + | \vec{b} |
\right)^{\kappa / 2}
\leq 
| \vec{a} |^{\kappa/2} + | \vec{b} |^{\kappa/ 2} ,
\end{align}
where we used the equivalence of the $2/\kappa$-norm with the $1$-norm in $\R^2$: 
\begin{align}\label{eq_equivalencewithonenorm}
\left(
| \vec{a} | + | \vec{b} |
\right)^{\frac{1}{2/\kappa}}
& = 
\left(
| \vec{a} |^{\frac{2/\kappa}{2/\kappa}} + | \vec{b} |^{\frac{2/\kappa}{2/\kappa}}
\right)^{\frac{1}{2/\kappa}} \nonumber \\
& =
\left\|
\left(
\begin{array}{c}
| \vec{a} |^{\frac{1}{2/\kappa}} \\
| \vec{b} |^{\frac{1}{2/\kappa}}
\end{array}
\right)
\right\|_{2/\kappa}  
\leq 
\left\|
\left(
\begin{array}{c}
| \vec{a} |^{\frac{1}{2/\kappa}} \\
| \vec{b} |^{\frac{1}{2/\kappa}}
\end{array}
\right)
\right\|_{1} 
=
| \vec{a} |^{\frac{1}{2/\kappa}} + | \vec{b} |^{\frac{1}{2/\kappa}} .
\end{align}
Inequality~\eqref{eq_modtriangle_1} then implies 
\begin{align}
| \vec{a}-\vec{b} |^{\kappa / 2} - | \vec{a} |^{\kappa / 2} \leq
| \vec{b} |^{\kappa / 2} . 
\end{align}

Next, we assume $| \vec{a} |^{\kappa / 2} \geq | \vec{a}-\vec{b}|^{\kappa / 2}$. Then, with $| \vec{a} | \leq | \vec{a}-\vec{b}    | + | \vec{b} |$ and monotonicity of exponentiation, we get
\begin{align}
| \vec{a} |^{\kappa / 2}
\leq 
\left(
| \vec{a} - \vec{b} | + | \vec{b} |
\right)^{\kappa / 2} 
\leq 
| \vec{a}-\vec{b} |^{\kappa / 2} + | \vec{b} |^{\kappa / 2}
\end{align}
which implies
\begin{align}
| \vec{a} |^{\kappa / 2} - | \vec{a}-\vec{b} |^{\kappa / 2} \leq
| \vec{b} |^{\kappa / 2} 
\end{align}
where we made again use of the inequality from line~\eqref{eq_equivalencewithonenorm}. This concludes the proof. 
\end{proof}

\subsection*{Acknowledgment}
We gratefully acknowledge fruitful discussions with Detlef Dürr and Martin Kolb. 

\bibliographystyle{plain}
\bibliography{references}

\end{document}